\def\doi{8 (1:22) 2012}
\DeclareFontFamily{OT1}{pzc}{}
\DeclareFontShape{OT1}{pzc}{m}{it}%
             {<-> s * [1.130] pzcmi7t}{}
\DeclareMathAlphabet{\mathpzc}{OT1}{pzc}%
                                 {m}{it}
\newif\ifdraft\draftfalse
\newif\ifwords\wordstrue
\newif\ifappendix\appendixtrue
\theoremstyle{definition}
\newtheorem{definition}[thm]{Definition}
\newtheorem{example}[thm]{Example}
\theoremstyle{remark}
\newtheorem{remark}[thm]{Remark}
\theoremstyle{plain}
\newtheorem{theorem}[thm]{Theorem}
\newtheorem{lemma}[thm]{Lemma}
\newtheorem{claim}{Claim}
\newtheorem{corollary}[thm]{Corollary}
\newtheorem{proposition}[thm]{Proposition}
\newtheorem{open}[thm]{Open question}
\theoremstyle{plain}
\newtheorem*{dicksonsl}{Dickson's Lemma}
\newtheorem*{higmansl}{Higman's Lemma}
\newcommand{\msf}[1]{\ensuremath{\mathsf{#1}}\xspace}
\newcommand{\ara}{\msf{ARA}}
\newcommand{\opguess}{\msf{guess}}
\newcommand{\opspread}{\msf{spread}}
\newcommand{\atraconf}{\+T_{\textsc{atra}}}
\newcommand{\atranconf}{\+N_{\textsc{atra}}}
\newcommand{\araconf}{\+C_{\textsc{ara}}}
\newcommand{\araT}{\rightarrowtail}
\newcommand{\lqw}{\precsim}
\newcommand{\lqwS}{\preceq}
\newcommand{\gqwS}{\succeq}
\newcommand{\Threads}{\Delta}
\newcommand{\set}[1]{\{#1\}}
\newcommand{\atra}{\ensuremath{\mathsf{ATRA}}\xspace}
 \newcommand{\ttypea}{\tridown}
 \newcommand{\ttypena}{\bar{\tridown}}
 \newcommand{\ttypeb}{\triright}
 \newcommand{\ttypenb}{\bar{\triright}}
\newcommand{\tridown}{\triangledown}
\newcommand{\triright}{\rhd}
\newcommand{\rightarrowtridown}{\rightarrow_{\tridown}}
\newcommand{\rightarrowtriright}{\rightarrow_{\scriptscriptstyle\triright}}
\newcommand{\rdc}{\msf{rdc}} 
\newcommand{\wqo}{\msf{wqo}} 
\newcommand{\subsets}{\wp}
\newcommand{\subsetsf}{\subsets_{<\infty}}
\newcommand\+[1]{\mathcal{#1}}
\newcommand{\todo}[1]{}
\newcommand{\xpatheps}{\xpath^{\varepsilon}}
\newcommand{\rxpatheps}{\rxpath^{\varepsilon}}
\newcommand{\ie}{\textit{i.e.}}
\newcommand{\cf}{\textit{cf.}}
\newcommand{\drafttext}{DRAFT - ver.~21/6/2010}
\newcommand{\THETITLE}{Alternating register automata~ 
\\ on finite data words and trees}
\newcommand{\THETITLERUNNING}{Alternating register automata on finite data words and trees}
\begin{document}

\newcommand{\aA}{\mathbf{a}} 
\newcommand{\bB}{\mathbf{b}} 
\newcommand{\eE}{\mathbf{e}} 
\newcommand{\dD}{\mathbf{d}} 
\newcommand{\tT}{\mathbf{t}} 
\newcommand{\wW}{\mathbf{w}} 
\newcommand{\mM}{\mathbf{m}} 

\newcommand{\cl}[1]{\ensuremath{\mathpzc{#1}}\xspace}

\newcommand{\anAut}{\cl{A}}
\newcommand{\midd}{\mathrel{\hspace{2.5px}\mid\hspace{2.5px}}}
\newcommand{\conc}{{\cdot}} 
\newcommand{\pos}{\mathsf{POS}}
\newcommand{\tpos}{\mathsf{pos}}
\newcommand{\talph}{\mathsf{alph}}
\newcommand{\prk}{\cl{p}}
\newcommand{\REG}{\textit{REG}}
\newcommand{\Trees}{\textit{Trees}}
\newcommand{\fudgeceb}{\mathrel{\phantom{\coloneqq}}\mathopen{\phantom{\{}}}
\newcommand{\fudgece}{\mathrel{\phantom{\coloneqq}}}
\newcommand{\fudgeeb}{\mathrel{\phantom{=}}\mathopen{\phantom{\{}}}
\newcommand{\xml}{\textsc{xml}\xspace}
\newcommand{\tup}[1]{\langle #1 \rangle}
\newcommand{\dbracket}[1]{[\![ #1 ]\!]}
\newcommand{\abracket}[1]{[\![ #1 ]\!]}
\newcommand{\N}{\mathds{N}} 
\newcommand{\Np}{\N}
\newcommand{\Nz}{\N_0}
\newcommand{\A}{\mathbb{A}}
\newcommand{\B}{\mathbb{B}}
\newcommand{\E}{\mathbb{E}}
\newcommand{\Ealt}{\mathbb{F}}
\newcommand{\Z}{\mathbb{Z}}
\newcommand{\NP}{\textsc{NP}\xspace}
\newcommand{\ptime}{\textsc{PTime}\xspace}
\newcommand{\exptime}{\textsc{ExpTime}\xspace}
\newcommand{\nexptime}{\textsc{NExpTime}\xspace}
\newcommand{\logspace}{\textsc{LogSpace}\xspace}
\newcommand{\pspace}{\textsc{PSpace}\xspace}
\newcommand{\npspace}{\textsc{NPSpace}\xspace}
\newcommand{\apspace}{\textsc{APSpace}\xspace}
\newcommand{\D}{\ensuremath{\mathbb{D}}\xspace}
\newcommand{\eps}{\varepsilon}
\newcommand{\coloneqq}{\mathrel{\mathop:}=}
\newcommand{\Coloneqq}{\mathrel{{\mathop:}{\mathop:}}=}

\newcommand{\type}{\msf{type}}

\newcommand{\ltl}[1]{\ensuremath{{\sf LTL}^\downarrow(#1)}\xspace}
\newcommand{\ltlplain}{\ensuremath{{\sf LTL}^\downarrow}\xspace}
\newcommand{\F}{\ensuremath{{\sf F}}}
\newcommand{\Fs}{\ensuremath{{\sf F_{\!s}}}}
\newcommand{\G}{\ensuremath{{\sf G}}}
\newcommand{\X}{\ensuremath{{\sf X}}}
\newcommand{\Xd}{\ensuremath{{\sf \bar{X}}}}
\newcommand{\U}{\ensuremath{{\sf U}}}
\newcommand{\R}{\Ud}
\newcommand{\Ud}{\ensuremath{{\sf \bar{U}}}}

\newcommand{\subf}{\msf{sub}}
\newcommand{\nnf}{\msf{nnf}}
\newcommand{\nsubf}{\msf{nsub}} 
\newcommand{\psubf}{\msf{psub}} 

\newcommand{\down}{\downarrow}
\newcommand{\dow}{\downarrow}
\newcommand{\rig}{\rightarrow}
\newcommand{\tu}{\uparrow^{\!*}\!}
\newcommand{\tus}{\uparrow^{\!+}\!}
\newcommand{\td}{\downarrow^{\!*}\!}
\newcommand{\tdz}{\downarrow^{\!\gg}\!}
\newcommand{\tds}{\downarrow^{\!+}\!}
\newcommand{\tl}{{}^{*}\!\!\leftarrow}
\newcommand{\tls}{\!\ ^{+}\!\!\!\leftarrow}
\newcommand{\tr}{\rightarrow^{\!*}\!}
\newcommand{\trs}{\rightarrow^{\!+}\!}
\newcommand{\trz}{\twoheadrightarrow^{\!*}\!}
\newcommand{\xpath}{\ensuremath{\mathsf{XPath}}\xspace}
\newcommand{\rxpath}{\ensuremath{\mathsf{regXPath}}\xspace}
\newcommand{\xp}[1]{{\sf XPath(#1,=)\xspace}}
\newcommand{\xpd}{\xp{\td}}
\newcommand{\xpds}{\xp{\tds}}
\newcommand{\xprl}{\xp{\tl,\tr}}
\newcommand{\xprls}{\xp{\tls,\trs}}
\newcommand{\xpr}{\xp{\tr}}
\newcommand{\xprs}{\ensuremath{\xp{\trs}}\xspace}
\newcommand{\rxpdx}{\msf{regXPath}(\downarrow,=)}
\newcommand{\xpdx}{\xp{\td,\downarrow}}


\title[\THETITLERUNNING]{\THETITLE
\ifdraft\\(\drafttext)\fi\rsuper*}

\author[D.~Figueira]{Diego Figueira}
\address{INRIA \& ENS Cachan, LSV, France and
University of Edinburgh, UK}
\email{dfigueira@gmail.com}
\thanks{Work supported by the Future and Emerging Technologies (FET) programme within the Seventh Framework Programme for Research of the European Commission, under the FET-Open grant agreement FOX, number FP7-ICT-233599.}


\keywords{alternating tree register automata, XML, forward XPath, unranked ordered tree, data-tree, infinite alphabet}
\subjclass{I.7.2, H.2.3, H.2.3}
\titlecomment{{\lsuper*}This article extends~\cite{Fig10}, and the results contained here also appeared in~\cite[Chapters~3 and 6]{FigPhD}.}

\begin{abstract}
We study alternating register automata on data words and data trees in relation to  logics. A data word (resp.\ data tree) is a word (resp.\ tree) whose every position carries a label from a finite alphabet and a data value from an infinite domain. We investigate  one-way automata  with alternating control over data words or trees, with one register for storing data and comparing them for equality. This is a continuation of the study started by Demri,  Lazi{\'c} and Jurdzi{\'n}ski.

From the standpoint of register automata models, this work aims at two objectives: (1) simplifying the existent decidability proofs for the emptiness problem for alternating register automata; and (2) exhibiting decidable extensions for these models. 

From the logical perspective, we show that (a) in the case of data words, satisfiability of LTL with one register and quantification over data values is decidable; and (b) the satisfiability problem for the so-called \emph{forward} fragment of \xpath on \xml documents is decidable, even in the presence of DTDs and even of key constraints. The decidability is obtained through a reduction to the automata model introduced. This fragment contains the child, descendant, next-sibling and following-sibling axes, as well as data equality and inequality tests.
\end{abstract}

\maketitle


\section{Introduction}
\label{sec:introduction}

In static analysis of databases as in software verification, we frequently find the need of reasoning with \emph{infinite alphabets}. In program verification one may need to decide statically whether a program satisfies some given specification; and the necessity of dealing with infinite alphabets can arise from different angles.
For example, in the presence of concurrency, suppose that an unbounded number of processes run, each one with its process identification, and we must deal with properties expressing the interplay between these processes. Further, procedures may have parameters and data from some infinite domain could be exchanged as arguments. 
On the other hand, in the databases context, static analysis on \xml and its query languages recurrently needs to take into account not only the labels of the nodes, but also the actual data contained in the attributes. It is hence important to study formalisms to reason with words or trees that can carry elements from some infinite domain.

This work is about decidable alternating register automata models on data words and data trees in relation to logics manipulating data. This is a continuation of the investigation carried out by Demri,  Lazi{\'c} and Jurdzi{\'n}ski \cite{DL-tocl08,JL08}. A non trivial consequence of our contribution on alternating automata is that the satisfiability problem for the forward fragment of \xpath on \xml documents is decidable.

We consider two kinds of data models: data words and data trees. A data word (data tree) is a finite word (unranked finite tree) whose every position carries a pair of elements: a symbol from a finite alphabet and and an element (a \emph{datum}) from an infinite set (the \emph{data domain}). We work on finite structures, and all the results we present are relative to finite words and trees.

Over these two models we consider two formalisms: alternating register automata on the one hand, and logics on the other. Each automata model is related to a logic, in the sense that the satisfiability of the logic can be reduced to the emptiness of the automata model. Both automata models we present (one for data words, the other for data trees) have in essence the same behavior. Let us give a more detailed description of these formalisms.

\subsection*{Automata}
The automata model we define is based on the $\ara$ model (for Alternating Register Automata) of \cite{DL-tocl08} in the case of data words, or the \atra model (for Alternating Tree Register Automata) of \cite{JL08} in the case of data trees. 
\ara are one-way automata with alternating control and one register to store  data values for later comparison. \atra correspond to a natural extension of \ara over data trees. The \atra model can move in two directions: to the leftmost child, and/or to the next sibling to the right. Both models were shown to have a decidable emptiness problem. The proofs of decidability are based on non trivial reductions to a class of decidable counter automata with faulty increments. 

In the present work, decidability of these models is  shown by interpreting the semantics of the automaton in the theory of well-quasi-orderings in terms of a well-structured transition system (see~\cite{FS01}). The object of this alternative proof is twofold. On the one hand, we propose a direct, unified and self-contained proof of the main decidability results of~\cite{DL-tocl08,JL08}. 
Whereas in \cite{DL-tocl08,JL08} decidability results are shown by reduction to a class of faulty counter automata, here we avoid such translation, and show decidability directly interpreting the configurations of the automata in the theory of well-structured transition systems. We stress, however, that the underlying techniques used here are similar to those of \cite{DL-tocl08,JL08}.
 On the other hand, we further generalize these results. 
Our proof can be easily extended to show the decidability of the nonemptiness problem for two powerful extensions. These extensions consist in the following abilities: (a) the automaton can nondeterministically \emph{guess} any data value of the domain and store it in the register; and  (b) it can make a certain kind of universal quantification over the data values seen along the run of the automaton, in particular over the  data values seen so far. We name these extensions \opguess and \opspread respectively. These extensions can be both added to the \ara model or to the \atra model, since our proofs for \ara and \atra emptiness problems share the same core. We call the class of alternating register automata with these extensions as $\ara(\opguess,\opspread)$ in the case of data words, or $\atra(\opguess,\opspread)$ in the case of data trees. We demonstrate that these extensions are also decidable if the data domain is equipped with a  linear order and the automata model is extended accordingly. Further, these models are powerful enough to decide a large fragment of \xpath, and of a temporal logic with registers.

\subsection*{{XP}ath}
This work is originally motivated by the increasing importance of reasoning tasks about \xml documents. An \xml document can be seen as an unranked ordered tree where each node carries a \emph{label} from a finite alphabet and a set of \emph{attributes}, each with an associated datum from some infinite domain. A data tree is a simplification of an \xml document that happens to be equivalent for the problems treated here.


\xpath is arguably the most widely used \xml node selecting language, part of XQuery and XSLT; it is an open standard and a W3C Recommendation \cite{xpath:w3c}. Static analysis on \xml languages is crucial for query optimization tasks, consistency checking of \xml specifications, type checking transformations, or many applications on security. Among the most important problems are those of query equivalence and query containment. 
By answering these questions we can decide at compile time whether the query contains a contradiction, and thus whether the computation of the query on the document can be avoided, or if one query can be safely replaced by a  simpler one. For logics closed under boolean combination, these problems reduce to satisfiability checking, and hence we focus on this problem.
Unfortunately, the satisfiability problem for \xpath with \emph{data tests} is undecidable \cite{GF05} even when the data domain has no structure  (\ie, where the only data relation available is the test for equality or inequality). It is then natural to identify and study decidable expressive fragments. 

In this work we prove that the \emph{forward} fragment of \xpath has a decidable satisfiability problem by a reduction to the nonemptiness problem of $\atra(\opguess,\opspread)$. Let us describe this logic.
Core-XPath \cite{GKP05} is the fragment of \xpath that captures all the navigational behavior of \xpath. It has been well studied and its satisfiability problem is known to be decidable in \exptime in the presence of DTDs~\cite{M04}. We consider an extension of this language with the possibility to make \emph{equality} and \emph{inequality} tests between attributes of \xml elements. This logic is named  Core-Data-XPath in \cite{BMSS09:xml:jacm}, and its satisfiability problem is undecidable \cite{GF05}. Here we address a large fragment of Core-Data-XPath named `forward \xpath', that contains the `child', `descendant', `self-or-descendant', `next-sibling', `following-sibling', and `self-or-following-sibling' axes. For economy of space we refer to these axes as $\down$, $\tds$, $\td$, $\rightarrow$, $\trs$, $\tr$ respectively. Note that $\trs$ and $\tr$ are interdefinable in the presence of $\rightarrow$, and similarly with $\tds$ and $\td$. We then refer to this fragment as $\xpath(\down,\td,\rightarrow,\tr,=)$, where `$=$' is to indicate that the logic can express equality or inequality tests between data values.

Although our automata model does not capture forward-\xpath in terms of expressiveness, we show that there is a  reduction to the nonemptiness problem of $\atra(\opguess, \opspread)$. These automata can recognize any regular language, in particular a DTD, a Relax NG document type, or the core of XML Schema (stripped of functional dependencies). Since we show that $\xpath(\down,\td,\rightarrow,\tr,=)$ can express unary key constraints, it then follows that satisfiability of forward-\xpath in the presence of DTDs and key constraints is decidable. This settles a natural question left open in \cite{JL08}, where decidability for a restriction of forward-\xpath was shown. The fragment treated in the cited work is restricted to data tests of the form $\tup{\varepsilon = \alpha}$ (or $\tup{\varepsilon \neq \alpha}$), that is, formul{\ae} that test whether there exists an element accessible via the $\alpha$-relation with the same (resp.\ different) data value as the \emph{current} node of evaluation. However, the forward fragment allows unrestricted tests $\tup{\alpha=\beta}$ and makes the coding into a register automata model highly non trivial.
 As a consequence, we also answer positively the open question raised in~\cite{BFG08} on whether the downward fragment of \xpath in the presence of DTDs is decidable.\footnote{The satisfiability problem on downward \xpath but in the \emph{absence} of DTDs is shown to be \exptime-complete in~\cite{F09}.}

\subsection*{Temporal logics}
$\ara(\opguess,\opspread)$ over data words also yield new decidability results on the satisfiability for some extensions of the temporal logic with one register denoted by $\ltl{\U,\X}$ in \cite{DL-tocl08}. This logic contains a `freeze' operator to store the current datum and a `test' operator to test the current datum against the stored one. Our automata model captures an extension of this logic with quantification over data values, where we can express ``\textsl{for all data values in the past, $\varphi$ holds}'', or ``\textsl{there exists a data value in the future where $\varphi$ holds}''. Indeed, none of these two types of properties can be expressed in the previous formalisms of \cite{DL-tocl08} and \cite{JL08}. These quantifiers may be added to $\ltl{\U,\X}$ over data words without losing decidability. What is more, decidability is preserved if the data domain is equipped with a  linear order that is accessible by the logic. Also, by a translation into $\atra(\opguess,\opspread)$, these operators can be added to a \msf{CTL} version of this logic over data trees, or to the $\mu$-calculus treated in \cite{JL07}\footnote{This is the conference version of \cite{JL08}.}.
However, adding the \emph{dual} of either of these operators results in an undecidable logic.

\subsection*{Contribution}
From the standpoint of register automata models, our contributions can be summarized as follows.
\begin{iteMize}{$\bullet$}
\item We exhibit a unified framework to show decidability for the emptiness problem for alternating register automata. This proof has the advantage of working both for data words and data trees practically unchanged. It is also a simplification of the existing decidability proofs.
\item We exhibit decidable extensions for these models of automata. These extensions work for automata either running over words or trees.  For each of these models there are consequences on the satisfiability of some expressive logics.
\end{iteMize}
From the standpoint of logics, we show the following results.
\begin{iteMize}{$\bullet$}
\item In the case of data trees, we show that the satisfiability problem for the `forward' fragment of \xpath with data test equalities and inequalities is decidable, even in the presence of DTDs (or any regular language) and unary key constraints.    Decidability is shown through a reduction to the emptiness problem for $\atra(\opguess,\opspread)$ on data trees.
\item We show that the temporal logic $\ltl{\U,\X}$  for data words  extended with some quantification over data values is decidable. This result is established thanks to a translation from formul{\ae} to alternating register automata of $\ara(\opguess,\opspread)$.
\end{iteMize}
\newpage

\subsection*{Related work}
The main results presented here first appeared in the conference paper \cite{Fig10}.
Here we include the full proofs and the analysis for alternating register automata on \emph{data words} (something that was out of the scope of \cite{Fig10}) as well as its relation to temporal logics. Also, we show how to proceed in the presence of a  linear order, maintaining decidability. The results contained here also appear in the thesis~\cite[Ch.~3 and 6]{FigPhD}.


By the lower bounds given in \cite{DL-tocl08}, the complexity of the problems we treat in the work is very high: non-primitive-recursive. This lower bound is also known to hold for the logics we treat here $\ltl{\U,\X}$ and forward-$\xpath$. In fact, even very simple fragments of ${\sf LTL}^\downarrow$ and $\xpath$ are known to have non-primitive-recursive lower bounds, including the fragment of \cite{JL08} or even much simpler ones without the one-step `$\rightarrow$' axis,  as shown in \cite{FS09}. This is the reason why in this work we limit ourselves to  decidability / undecidability results.

The work in \cite{BFG08} investigates the satisfiability problem for many \xpath logics, mostly fragments without negation or without data equality tests in the absence of sibling axes. Also, in~\cite{F09} there is a  study of the satisfiability problem for \emph{downward} \xpath fragments  with and without data equality tests. All these are sub-fragments of forward-\xpath, and notably, none of these works considers \emph{horizontal} axes to navigate between siblings. Hence, by exploiting the bisimulation invariance property enjoyed by these logics, the complexity of the satisfiability problem is kept relatively low (at most \exptime) in the presence of data values. However, as already mentioned, when horizontal axes are present most of the problems have a non-primitive-recursive complexity.
In~\cite{GF05}, several  fragments with horizontal axes are treated. The only fragment with data tests and negation studied there is incomparable with the forward fragment, and it is shown to be undecidable. In \cite{FS10} it is shown that the \emph{vertical} fragment of \xpath is decidable in non-primitive-recursive time. This is the fragment with both \emph{downward} and \emph{upward} axes, but notably no horizontal axes (in fact, adding horizontal axes to the vertical fragment results in undecidability). 

The satisfiability of first-order logic with two variables and data equality tests is explored in \cite{BMSS09:xml:jacm}. It is shown that ${\sf FO}^2$ with local one-step relations to move around the data tree and a data equality test relation is decidable. \cite{BMSS09:xml:jacm} also shows the decidability of a fragment of $\xpath(\uparrow,\downarrow,\leftarrow,\rightarrow,=)$ with sibling and upward axes. However, the logic is restricted to one-step axes and to data formul{\ae} of the kind $\tup{\varepsilon = \alpha}$ (or $\not=$), while the fragment we treat here cannot move upwards but has transitive axes and unrestricted data tests.

\section{Preliminaries}

\subsection*{Notation}
We first fix some basic notation. Let $\subsets(C)$ denote the set of subsets of $C$, and $\subsetsf(C)$ be the set of \emph{finite} subsets of $C$. Let $\Np=\set{1,2, \dotsc}$ be the set of positive integers, and let $[n] \coloneqq \set{i \mid 1 \leq i \leq  n}$ for any $n \in \Np$. We call $\Nz = \Np \cup \set 0$.
We fix once and for all $\D$ to be any infinite domain of data values; for simplicity in our examples we will consider $\D = \Nz$. In general we use letters $\A$, $\B$ for finite alphabets, the letter $\D$ for an infinite alphabet and the letters $\E$ and $\Ealt$ for any kind of alphabet. By $\E^*$ we denote the set of finite sequences over $\E$ and by $\E^\omega$ the set of infinite sequences over $\E$. We use `$\conc$' as the concatenation operator between sequences. We write $| S |$ to denote the length of $S$ (if $S$ is a sequence), or the cardinality of $S$ (if $S$ is a set).

\subsection{Finite words}

\newcommand{\Words}{\mathit{Words}}
We consider a finite word over  $\E$ as a function $\wW : [n] \to \E$ for some $n \in \Np$, and we define the set of words as $\Words(\E) \coloneqq \set{[n] \to \E \mid n \in \Np}$. 
We write $\tpos(\wW) = \set{1, \dotsc, n}$ to denote the set of \textbf{positions} (that is, the domain of $\wW$), and we define the size of $\wW$ as $|\wW| = n$. Given $\wW \in \Words(\E)$ and $\wW' \in \Words(\Ealt)$ with $\tpos(\wW) = \tpos(\wW')=P$, we write $\wW \otimes \wW' \in \Words(\E \times \Ealt)$ for the word such that $\tpos(\wW \otimes \wW') = P$ and $(\wW \otimes \wW') (x) = (\wW(x),\wW'(x))$. A \textbf{data word} is a word of $\Words(\A \times \D)$, where $\A$ is a finite alphabet of letters and $\D$ is an infinite domain. Note that we define data words as having at least one position. This is done for simplicity of the definition of the formalisms we work with, and the results contained here also extend to possibly-empty data words and trees.
 We define the \textbf{word type} as a function $\type_{\wW} : \tpos(\wW) \to \set{ \ttypeb, \ttypenb}$ that
specifies whether a position has a next element or not. That is, $\type_{\wW}(i) = \ttypeb$ if{f} $(i+1) \in \tpos(\wW)$. 

\subsection{Unranked ordered finite trees}
\newcommand{\data}{\mathit{data}}
\newcommand{\ancestor}{\preceq}
\newcommand{\ancestorstr}{\prec}
\newcommand{\descendant}{\succeq}
\newcommand{\descendantstr}{\succ}
We define $\Trees(\E)$, the set of finite, ordered and unranked trees over an alphabet $\E$.
 A \textbf{position} in the context of a tree is an element of $\Np^*$. The root's position is the empty string and we note it `$\epsilon$'. The position of any other node in the tree is the concatenation of the position of its parent and the node's index in the ordered list of siblings. Along this work we use $x, y, z, w, v$ as variables for positions, while  $i,j,k,l,m,n$ as variables for numbers. Thus, for example $x \conc i$ is a position which is not the root, and that has $x$ as parent position, and there are $i-1$ siblings to the left of $x \conc i$.

\newcommand{\pto}{\rightharpoonup}
Formally, we define $\pos \subseteq \subsetsf(\Np^*)$ the set of sets of finite tree positions, such that:
$X \in \pos$ if{f} (a) $X \subseteq \Np^*, |X| <
\infty$; (b) it is prefix-closed; and (c) if $n\conc (i+1) \in X$ for $i\in\Np$, then $n \conc i \in X$. 
A tree is a mapping from a set of positions to letters of the alphabet 
\[\Trees(\E) \coloneqq \set{ \tT : P \to \E \mid P \in \pos}\ .\]
Given a tree $\tT \in \Trees(\E)$, $\tpos(\tT)$ denotes the domain of $\tT$, which consists of the set of positions of the tree, and $\talph(\tT)=\E$ denotes the alphabet of the tree. From now on, we informally write `node' to denote a position $x$ together with the value $\tT(x)$.
We define the \emph{ancestor} partial order $\ancestor$ as the prefix relation $x \ancestor x \conc y$ for every $x\conc y$, and the strict version $\ancestorstr$ as the strict prefix relation $x \ancestorstr x \conc y$ for $|y|>0$.
Given a tree $\tT$ and $x\in \tpos(\tT)$, `$\tT|_x$' denotes the subtree of $\tT$ at position $x$. That is, $\tT|_x : \set{ y \mid x \conc y \in \tpos(\tT) } \to \talph(\tT)$ where $\tT|_x(y) = \tT(x \conc y)$. In the context of a tree $\tT$, a \textbf{siblinghood} is a maximal sequence of siblings. That is, a sequence of positions $x \conc 1 , \dotsc, x \conc l \in \tpos(\tT)$ such that $x \conc (l+1) \not\in\tpos(\tT)$.

Given two trees $\tT_1 \in \Trees(\E)$, $\tT_2 \in \Trees(\Ealt)$ such that $\tpos(\tT_1)=\tpos(\tT_2)=P$, we define $\tT_1 \otimes \tT_2 : P \to (\E {\times} \Ealt)$ as $(\tT_1 \otimes \tT_2) (x) = (\tT_1(x),\tT_2(x))$.

The set of \textbf{data trees} over a finite alphabet $\A$ and an infinite domain $\D$  is defined as $\Trees(\A {\times} \D)$. Note that every tree $\tT \in \Trees(\A {\times} \D)$ can be decomposed into two trees $\aA \in \Trees(\A)$ and $\dD \in \Trees(\D)$ such that $\tT = \aA \otimes \dD$. Figure~\ref{fig:data-tree} shows an example of a data tree.
We define the \textbf{tree type} as a function $\type_{\tT} : \tpos(\tT) \to \set{\ttypea, \ttypena} \times \set{ \ttypeb, \ttypenb}$ that
specifies whether a node has children and/or siblings to the right. That is, $\type_{\tT}(x) \coloneqq (a,b)$ where  $a = \tridown$ if{f} $x\conc 1 \in \tpos(\tT)$, and where $b = \triright$ if{f} $x = x'\conc i$ and $x' \conc (i+1) \in \tpos(\tT)$. 
\begin{figure}
  \centering
    \includegraphics[scale=0.6]{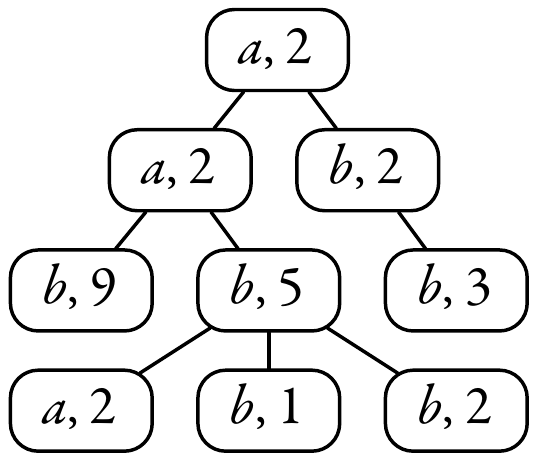}
  \caption{A data tree.}
\label{fig:data-tree}
\end{figure}
The notation for the set of data values used in a data tree is
\[\data(\aA \otimes\dD) \coloneqq \set{\dD(x) \mid x \in \tpos(\dD)} \ .\]
We abuse notation and write $\data(X)$ to refer to all the elements of $\D$ contained in $X$, for whatever object $X$ may be.

\subsection*{First-child and next-sibling coding}
We will make use of the first-child and next-sibling underlying binary tree of an unranked tree $\tT$. This is the binary tree whose nodes are the same as in $\tT$, a node $x$ is the left child of $y$ if $x$ is the leftmost child of $y$ in $\tT$, and a node $x$ is the right child of $y$ if $x$ is the next sibling to the right of $y$ in $\tT$. We will sometimes refer to this tree by ``the fcns coding of $\tT$''. Let us define the relation between positions $\ancestor_{fcns}$ such that $x \ancestor_{fcns} y$ if $x$ is the ancestor of $y$  in the first-child and next-sibling coding of the tree, that is, if $y$ is reachable from $x$ by traversing the tree through the operations `go to the leftmost child' and `go to the next sibling to the right'.

\subsection{Properties of languages}
\newcommand{\Lang}{\+{L}}
We use the standard definition of \textbf{language}. Given an automaton $\anAut$ over data words (resp.\ over data trees),  let $\Lang(\anAut)$ denote the set of data words (resp.\ data trees) that have an accepting run on $\anAut$. We say that $\Lang(\anAut)$ is the language of words (resp.\ trees) recognized by $\anAut$. We extend this definition to a class of automata $\mathscr{A}$: $\Lang(\mathscr{A}) = \set{\Lang(\anAut) \mid \anAut \in \mathscr{A}}$, obtaining a class of languages.

Equivalently, given a formula $\varphi$ of a logic $\mathscr{L}$ over data words (resp.\ data trees) we denote by $\Lang(\varphi)$ the set of words (resp.\ trees) verified by $\varphi$. This is also extended to a logic $\Lang(\mathscr{L}) = \set{ \Lang(\varphi) \mid \varphi \in \mathscr{L} }$.

We say that a class of automata (resp.\ a logic) $\mathscr{A}$  is \textbf{at least as expressive as}  another class (resp.\ logic) $\mathscr{B}$ if{f} $\Lang(\mathscr{B}) \subseteq \Lang(\mathscr{A})$. If additionally $\Lang(\mathscr{B}) \neq \Lang(\mathscr{A})$ we say that $\mathscr{A}$ is \textbf{more expressive} than $\mathscr{B}$.

\smallskip

We say that a class of automata  $\mathscr{A}$ \textbf{captures} a logic $\mathscr{L}$ if{f} there exists a translation $t : \mathscr{L} \to \mathscr{A}$ such that for every $\varphi \in \mathscr{L}$ and model (\ie, a data tree or a data word) $\mM$, we have that $\mM \models \varphi$ if and only if $\mM \in \Lang(t(\varphi))$.

\subsection{Well-structured transition systems}
\newcommand{\aSet}{S}
\newcommand{\aSetk}{K}
\newcommand{\aSetb}{T}
\newcommand{\aSetc}{U}
\newcommand{\aSetd}{V}
\newcommand{\lqdom}{\leq_{\subsets}}
\newcommand{\lqemb}{\sqsubseteq}

The main argument for the decidability of the emptiness of the automata models studied here is based on the theory of well-quasi-orderings. We interpret the automaton's run as a transition system with some good properties, and this allows us to obtain an effective procedure for the emptiness problem. This is known in the literature as a well-structured transition system (WSTS)~(see~\cite{FS01,ACJT-lics96}). Next, we reproduce some standard definitions and known results that we will make use of.

\begin{definition}
For a set $\aSet$, we define $(\aSet, \leq)$  to be a \textbf{well-quasi-order} (\wqo) if{f} `$\leq$' $\subseteq \aSet \times \aSet$ is a relation that is reflexive, transitive and for every infinite sequence $w_1,w_2,\dotsc \in \aSet^\omega$ there are two indices $i<j$ such that $w_i \leq w_j$.
\end{definition}

\index{Dickson's Lemma}
\newcommand{\dicksonslem}{\hyperlink{dicksonsl}{Dickson's Lemma}\xspace}
\begin{dicksonsl}[\hypertarget{dicksonsl}{}\textup{\cite{dicksonslem}}]
Let ${\leq_k} \subseteq \Nz^k \times \Nz^k$ such that $(x_1, \dotsc, x_k) \leq_k (y_1, \dotsc, y_k)$ if{f} $x_i \leq y_i$ for all $i\in[k]$. For all $k\in\Nz$, $(\Nz^k,\leq_k)$ is a well-quasi-order.
\end{dicksonsl}

\begin{definition}
Given a quasi-order $(\aSet,\leq)$ we define the \textbf{embedding order} as the relation ${\lqemb} \subseteq \aSet^* \times \aSet^*$ such that $x_1 \dotsb x_n \lqemb y_1 \dotsb y_m$ if{f} there exist $1 \leq i_1 < \dotsb < i_n \leq m$ with $x_j \leq y_{i_j}$ for all $j\in[n]$.
\end{definition}

\index{Higman's Lemma}
\newcommand{\higmanslem}{\hyperlink{higmansl}{Higman's Lemma}\xspace}
\begin{higmansl}[\hypertarget{higmansl}{}\textup{\cite{higmanslem}}]
Let $(\aSet,\leq)$ be a well-quasi-order. Let ${\lqemb} \subseteq \aSet^* \times \aSet^*$ be the embedding order over $(\aSet,\leq)$. Then, $(\aSet^*,\lqemb)$ is a well-quasi-order.
\end{higmansl}
\begin{corollary}[of \higmanslem]\label{cor:higman-lemma}
  Let $\aSet$ be a \emph{finite} alphabet. Let ${\lqemb}$ be the subword relation over $\aSet^* \times \aSet^*$ (\ie, $x \lqemb y$ if $x$ is the result of removing some (possibly none) positions from $y$). Then, $(\aSet^*,\lqemb)$ is a well-quasi-order.
\end{corollary}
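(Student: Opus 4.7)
The corollary is essentially a direct instance of \higmanslem, so the plan is simply to equip the finite alphabet $\aSet$ with the right quasi-order and verify the hypotheses.

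First I would take the quasi-order on $\aSet$ to be equality, that is, $a \leq b$ if{f} $a = b$. Reflexivity and transitivity are immediate. To see that $(\aSet, =)$ is a \wqo, I would invoke the pigeonhole principle: since $\aSet$ is finite, any infinite sequence $a_1, a_2, \dotsc \in \aSet^\omega$ must repeat some letter, so there are indices $i < j$ with $a_i = a_j$, hence $a_i \leq a_j$.

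Next I would observe that under this choice of $\leq$, the embedding order from the preceding definition specialises exactly to the subword relation: $x_1 \dotsb x_n \lqemb y_1 \dotsb y_m$ if{f} there exist $1 \leq i_1 < \dotsb < i_n \leq m$ with $x_j = y_{i_j}$, which is precisely the statement that $x$ is obtained from $y$ by deleting some (possibly zero) positions. Finally, applying \higmanslem to the \wqo $(\aSet, =)$ yields that $(\aSet^*, \lqemb)$ is a \wqo, which is the desired conclusion. There is no real obstacle here; the only point that requires a line of justification is the use of finiteness of $\aSet$ to conclude that equality is a \wqo.
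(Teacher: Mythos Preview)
Your proposal is correct and follows exactly the same approach as the paper: take equality as the quasi-order on the finite alphabet $\aSet$, note that $(\aSet,=)$ is trivially a \wqo by finiteness, observe that the embedding order then coincides with the subword relation, and apply \higmanslem. The paper's proof is just a one-line version of what you wrote.
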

\begin{proof}
It suffices to realize that $\lqemb$ is indeed the embedding order over $(\aSet,=)$, which is trivially a \wqo since $\aSet$ is finite.
\end{proof}

\newcommand{\sDom}{\cl{A}}
\newcommand{\sDomb}{\cl{B}}
\newcommand{\sDomc}{\cl{C}}
\newcommand{\sDomd}{\cl{D}}
\newcommand{\upclose}{\uparrow\!}
\newcommand{\downclose}{\downarrow\!}
\begin{definition}
Given a transition system $(\aSet,\rightarrow)$, and $\aSetb \subseteq \aSet$ we define $Succ(\aSetb) \coloneqq \set{a'\in\aSet \mid \exists\, a \in \aSetb  \text{ with } a \rightarrow a'}$, and $Succ^*$ to its reflexive-transitive closure. Given a \wqo $(\aSet,\leq)$ and $\aSetb \subseteq \aSet$, we define the \emph{upward closure} of $\aSetb$ as $\upclose \aSetb \coloneqq \set{ a \mid \exists\, a' \in \aSetb, a' \leq a}$, and the \emph{downward closure} as $\downclose \aSetb \coloneqq \set{ a \mid \exists\, a' \in \aSetb, a \leq a'}$. We say that $\aSetb$ is \textbf{downward-closed} (resp.\ \textbf{upward-closed})  if{f} $\downclose \aSetb = \aSetb$ (resp.\ $\upclose \aSetb = \aSetb$).
\end{definition}
\begin{definition}
We say that a transition system $(\aSet,\rightarrow)$ is \textbf{finitely branching} if{f} $Succ(\set a )$ is finite for all $a \in \aSet$. If $Succ(\set a)$ is also computable for all $a$, we say that $(\aSet,\rightarrow)$ is \textbf{effective}.
\end{definition}
\newcommand{\anEquivRel}{\equiv}
\begin{definition}[\rdc]
A transition system  $(\aSet,\rightarrow)$  is \textbf{reflexive downward compatible} (or  \rdc for short) with respect to a \wqo  $(\aSet,\leq)$ if{f} for every $a_1,a_2,a'_1 \in \aSet$ such that $a'_1 \leq a_1$ and $a_1 \rightarrow a_2$, there exists $a'_2 \in \aSet$ such that $a'_2 \leq a_2$ and either $a'_1 \rightarrow a'_2$ or $a'_1 = a'_2$.
\end{definition}

Our forthcoming decidability results on alternating register automata of Sections~\ref{sec:emptiness-problem-ara-guess-spread} and \ref{sec:atrags-emptiness} will be shown as a consequence of the following propositions.
\begin{proposition}[\textup{\cite[Proposition~5.4]{FS01}}]
\label{prop:rdc-computable}\,
  If $(\aSet,\leq)$ is a \emph{\wqo} and $(\aSet,\rightarrow)$ a transition system such that (1) it is \emph{\rdc}, (2) it is effective, and (3) $\leq$ is decidable; then for any finite $\aSetb \subseteq \aSet$ it is possible to compute a finite set $\aSetc \subseteq \aSet$ such that ${\upclose \aSetc} = {\uparrow\! Succ^*(\aSetb)}$.
\end{proposition}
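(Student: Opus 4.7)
The plan is to build, level by level, a finite basis of $\upclose Succ^*(\aSetb)$ via a saturating forward search in which we only keep the $\leq$-minimal representatives of what has been reached so far. Set $\aSetc_0$ to be the (finite) set of $\leq$-minimal elements of $\aSetb$, and inductively let $\aSetc_{i+1}$ be the $\leq$-minimal elements of $\aSetc_i \cup Succ(\aSetc_i)$. Each $\aSetc_i$ is finite (it is an antichain in a \wqo), computable (we use effectiveness to enumerate $Succ(\aSetc_i)$ and decidability of $\leq$ to prune away non-minimal elements), and satisfies $\upclose \aSetc_{i+1} = \upclose(\aSetc_i \cup Succ(\aSetc_i)) \supseteq \upclose \aSetc_i$. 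The algorithm stops at the first index $i$ such that every element of $\aSetc_{i+1}$ lies in $\upclose \aSetc_i$ (a condition decidable from decidability of $\leq$), and returns $\aSetc \coloneqq \aSetc_i$.

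The termination argument would go through the standard ACC on upward-closed subsets of a \wqo. If the chain $\upclose \aSetc_0 \subseteq \upclose \aSetc_1 \subseteq \dotsb$ were strictly increasing forever, then for each $i$ one could pick $a_i \in \upclose \aSetc_{i+1} \setminus \upclose \aSetc_i$; by the \wqo property there would be $i<j$ with $a_i \leq a_j$, but then $a_i \in \upclose \aSetc_{i+1} \subseteq \upclose \aSetc_j$ and upward-closedness of $\upclose \aSetc_j$ would force $a_j \in \upclose \aSetc_j$, contradicting the choice of $a_j$. Hence the chain stabilizes and the termination test triggers in finitely many steps.

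For correctness, the easy inclusion $\upclose \aSetc \subseteq \upclose Succ^*(\aSetb)$ follows by a straightforward induction showing $\aSetc_i \subseteq Succ^*(\aSetb)$ for every $i$: $\aSetc_0 \subseteq \aSetb$, and if $\aSetc_i \subseteq Succ^*(\aSetb)$ then $\aSetc_{i+1} \subseteq \aSetc_i \cup Succ(\aSetc_i) \subseteq Succ^*(\aSetb)$ using reflexive-transitivity of $Succ^*$. The harder inclusion $\upclose Succ^*(\aSetb) \subseteq \upclose \aSetc$ is where \rdc is essential: given any path $c_0 \rightarrow c_1 \rightarrow \dotsb \rightarrow c_n$ with $c_0 \in \aSetb$, I would build inductively a companion sequence $c_0', c_1', \dotsc, c_n'$ with $c_i' \leq c_i$ and $c_i' \in \upclose \aSetc_i$. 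The base case uses that $\aSetc_0$ is a basis of $\upclose \aSetb$; the step uses \rdc applied to $c_i' \leq c_i$ and $c_i \rightarrow c_{i+1}$ to yield some $c_{i+1}' \leq c_{i+1}$ with $c_i' \rightarrow c_{i+1}'$ or $c_i' = c_{i+1}'$, so that $c_{i+1}' \in \aSetc_i \cup Succ(\aSetc_i) \subseteq \upclose \aSetc_{i+1}$. Since the chain eventually stabilizes at $\aSetc$, we conclude $c_n' \in \upclose \aSetc$ and hence $c_n \in \upclose \aSetc$.

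The main subtlety I anticipate is the interplay between the pruning step (replacing $\aSetc_i \cup Succ(\aSetc_i)$ by its minimal basis) and the \rdc-based simulation argument: after pruning, the witness $c_{i+1}'$ produced by \rdc may no longer belong to $\aSetc_{i+1}$ itself but only to its upward closure, so one must carry the simulation in terms of $\upclose \aSetc_i$ rather than $\aSetc_i$, and argue by transitivity of $\leq$ at each step. Once this bookkeeping is set up cleanly, the argument reduces to invoking the \wqo property once for termination and the \rdc property once per transition for coverability.
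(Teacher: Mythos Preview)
The paper does not give its own proof of this proposition; it simply cites it as \cite[Proposition~5.4]{FS01}. Your argument is the standard forward-saturation proof from the WSTS literature and is correct in substance.

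One remark on the point you yourself flag as subtle: in the inductive step you write ``$c_{i+1}' \in \aSetc_i \cup Succ(\aSetc_i)$'', but this does not follow from applying \rdc to $c_i' \leq c_i$ when $c_i'$ is only known to lie in $\upclose \aSetc_i$ rather than in $\aSetc_i$ itself. The clean way to carry out the simulation is to drop the companion sequence $(c_i')$ altogether and maintain directly the invariant $c_i \in \upclose \aSetc_i$: given $u_i \in \aSetc_i$ with $u_i \leq c_i$ and $c_i \rightarrow c_{i+1}$, apply \rdc to $u_i \leq c_i$ to obtain $v \leq c_{i+1}$ with $u_i \rightarrow v$ or $u_i = v$; then $v \in \aSetc_i \cup Succ(\aSetc_i) \subseteq \upclose \aSetc_{i+1}$, and by transitivity $c_{i+1} \in \upclose \aSetc_{i+1}$. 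This is exactly the ``transitivity at each step'' fix you anticipated, just with the bookkeeping made explicit.
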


\begin{lemma}\label{lem:downward-closed-decidable}
  Given $(\aSet,\leq,\rightarrow)$ as in Proposition~\ref{prop:rdc-computable}, a recursive downward-closed set $\aSetd \subseteq \aSet$, and a finite set $\aSetb \subseteq \aSet$. The problem of whether there exists $a \in \aSetb$ and $b \in \aSetd$ such that $a \rightarrow^* b$ is decidable.
\end{lemma}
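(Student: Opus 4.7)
The plan is to reduce the problem to a finite intersection check using the machinery of Proposition~\ref{prop:rdc-computable}. Since the hypotheses on $(\aSet,\leq,\rightarrow)$ carry over to the setting of that proposition, we can effectively compute a finite set $\aSetc \subseteq \aSet$ with $\upclose \aSetc = \upclose Succ^*(\aSetb)$. My claim will then be that the answer to the decision problem is positive if and only if $\aSetc \cap \aSetd \neq \emptyset$; the latter is decidable because $\aSetc$ is finite and membership in $\aSetd$ is recursive.

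The key step is the equivalence $Succ^*(\aSetb) \cap \aSetd \neq \emptyset \iff \aSetc \cap \aSetd \neq \emptyset$, which uses crucially that $\aSetd$ is downward-closed. For the forward direction, suppose $b \in Succ^*(\aSetb) \cap \aSetd$. Then $b \in \upclose Succ^*(\aSetb) = \upclose \aSetc$, so there exists $c \in \aSetc$ with $c \leq b$; since $b \in \aSetd$ and $\aSetd$ is downward-closed, $c \in \aSetd$, hence $c \in \aSetc \cap \aSetd$. For the reverse direction, suppose $c \in \aSetc \cap \aSetd$. Then $c \in \aSetc \subseteq \upclose \aSetc = \upclose Succ^*(\aSetb)$, so there exists $a \in Succ^*(\aSetb)$ with $a \leq c$. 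Again by downward-closure of $\aSetd$, $a \in \aSetd$, giving an element of $Succ^*(\aSetb) \cap \aSetd$. By definition of $Succ^*$, this means there is some $a' \in \aSetb$ with $a' \rightarrow^* a$, which is exactly the condition we wanted to decide.

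To complete the algorithm: invoke Proposition~\ref{prop:rdc-computable} on the finite set $\aSetb$ to obtain $\aSetc$, then enumerate the finitely many $c \in \aSetc$ and test whether $c \in \aSetd$ using the assumed recursive procedure; answer ``yes'' iff some test succeeds.

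I do not anticipate any serious obstacle: the whole argument is a direct application of Proposition~\ref{prop:rdc-computable} together with the downward-closure of $\aSetd$. The only point worth being careful about is the interaction between $\upclose$ and $\downclose$ sets, namely the elementary fact that a downward-closed set meets a set $X$ if and only if it meets $\upclose X$, which is what drives both directions of the equivalence above.
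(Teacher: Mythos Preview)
Your proposal is correct and follows essentially the same approach as the paper: compute the finite set $\aSetc$ from Proposition~\ref{prop:rdc-computable}, test each element for membership in $\aSetd$, and use downward-closure of $\aSetd$ in both directions to argue that $\aSetc \cap \aSetd \neq \emptyset$ iff $Succ^*(\aSetb) \cap \aSetd \neq \emptyset$. Your presentation is in fact slightly cleaner than the paper's, but the argument is the same.
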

\begin{proof}
Applying Proposition~\ref{prop:rdc-computable}, let $\aSetc \subseteq \aSet$ finite, with ${\upclose \aSetc} = {\upclose Succ^*(\aSetb)}$. Since $\aSetb$ is finite and $\aSetd$ is recursive, we can test for every element of $b \in \aSetc$ if $b \in \aSetd$. On the one hand, if there is one such $b$, then by definition $b \in {\uparrow\! Succ^*(\aSetb)}$, or in other words $a \rightarrow^* a' \leq b$ for some $a \in \aSetb$, $a' \in Succ^*(a)$. But since $\aSetd$ is downward-closed, $a' \in \aSetd$ and hence the property is true. On the other hand, if there is no such $b$ in $\aSetc$, it means that there is no such $b$ in $\upclose\aSetc$ either, as $\aSetd$ is downward-closed. This means that there is no such $b$ in $Succ^*(\aSetb)$ and hence that the property is false.
\end{proof}


\begin{definition}[$\lqdom$]
  Given an ordering $(\aSet,\leq)$, we define the \textbf{majoring ordering} over $\leq$ as $(\subsetsf(\aSet),\lqdom)$, where $\+S \lqdom \+S'$ if{f} for every $a\in\+S$ there is $b \in \+S'$ such that $a \leq b$.
\end{definition}

\begin{proposition}\label{prop:lqdom-wqo}
  If $(\aSet,\leq)$ is a \wqo, then the majoring order over $(\aSet,\leq)$ is a \wqo.
\end{proposition}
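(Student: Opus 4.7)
The plan is to reduce this to \higmanslem via a natural encoding of finite subsets as finite words, avoiding a direct combinatorial argument on sets.

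First I would quickly verify that $\lqdom$ is a quasi-order, i.e., reflexive and transitive: reflexivity is immediate from reflexivity of $\leq$, and transitivity follows by chaining witnesses using transitivity of $\leq$. The non-trivial part is the infinite-sequence condition.

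For the main argument, suppose we are given an infinite sequence $\+S_1, \+S_2, \dotsc \in \subsetsf(\aSet)^\omega$. For each $i$, fix an arbitrary enumeration of the elements of $\+S_i$, yielding a finite word $w_i \in \aSet^*$ (so that the underlying set of letters of $w_i$ is exactly $\+S_i$). By \higmanslem applied to the \wqo $(\aSet,\leq)$, the embedding order $(\aSet^*,\lqemb)$ is a \wqo, so there must exist indices $i<j$ with $w_i \lqemb w_j$. I would then observe that the embedding relation immediately entails the majoring relation on the underlying sets: if $w_i = s_1 \dotsb s_n$ and $w_j = t_1 \dotsb t_m$ and $s_k \leq t_{i_k}$ for some $1 \leq i_1 < \dotsb < i_n \leq m$, then for each $a = s_k \in \+S_i$ the element $b = t_{i_k} \in \+S_j$ witnesses $a \leq b$. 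Hence $\+S_i \lqdom \+S_j$, completing the proof.

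The whole argument is short; the only subtle point is recognizing that $\lqemb$ is strictly stronger than $\lqdom$ once one passes from sequences back to sets (duplicates and ordering information in $w_i$ are harmless because $\lqdom$ only asks for existence of majorants), so no reverse implication is needed. I do not anticipate any serious obstacle.
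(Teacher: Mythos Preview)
Your proposal is correct and follows essentially the same approach as the paper: encode each finite set as a word (in any order), apply \higmanslem to the embedding order on $\aSet^*$, and observe that $\lqemb$ is stricter than $\lqdom$ so the witnessing pair $w_i \lqemb w_j$ yields $\+S_i \lqdom \+S_j$.
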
\newpage

\begin{proof}
The fact that this order is reflexive and transitive is immediate from the fact that $\leq$ is a quasi-order. The fact of being a \emph{well}-quasi-order is a simple consequence of \higmanslem. Each finite set $\set{a_1, \dotsc, a_n}$ can be seen as a sequence of elements $a_1, \dotsc, a_n$, in any order. In this context, the embedding order is stricter than the majoring order. In other words, if $a_1, \dotsc, a_n \lqemb a'_1, \dotsc, a'_m$, then $\set{a_1, \dotsc, a_n} \lqdom \set{a'_1, \dotsc, a'_m}$. By \higmanslem the embedding order over $(\aSet,\leq)$ is a \wqo, implying that the majoring order is as well.
\end{proof}

The following technical Proposition will become useful in Section~\ref{sec:atrags-emptiness} for transferring the decidability results obtained for class of automata $\ara(\opguess,\opspread)$ on data words to the class $\atra(\opguess,\opspread)$ on data trees.
  \begin{proposition}\label{prop:wsts-higman}
    Let $\leq, \rightarrow_1\; \subseteq \aSet \times \aSet$, \,
    $\lqdom,\rightarrow_2\; \subseteq \subsetsf(\aSet) \times \subsetsf(\aSet)$ where $\lqdom$ is the \emph{majoring order} over $(\aSet,\leq)$ and $\rightarrow_2$ is
 such that if    $\+S \rightarrow_2 \+S'$ then:  $\+S= \set{a} \cup \hat{\+S}$, $\+S' =  \set{b_1, \dotsc, b_m}\cup\hat{\+S}$  with $a \rightarrow_1 b_i$ for every $i \in [m]$. Suppose that $(\aSet,\leq)$ is a \wqo which is \rdc with respect to      $\rightarrow_1$. Then, $(\subsetsf(\aSet),\lqdom)$ is a \wqo which is \rdc with respect to $\rightarrow_2$.
  \end{proposition}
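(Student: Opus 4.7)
The first assertion — that $(\subsetsf(\aSet), \lqdom)$ is a \wqo\ — is immediate from Proposition~\ref{prop:lqdom-wqo}, so the substance lies in establishing \rdc\ of $\rightarrow_2$. I would fix $\+S'_1 \lqdom \+S_1$ and $\+S_1 \rightarrow_2 \+S_2$, decomposed as $\+S_1 = \{a\} \cup \hat{\+S}$ and $\+S_2 = \{b_1, \dotsc, b_m\} \cup \hat{\+S}$ with $a \rightarrow_1 b_i$ for every $i \in [m]$, and then classify each element of $\+S'_1$ according to which element of $\+S_1$ majorizes it. Elements majorized by some $c \in \hat{\+S}$ are automatically majorized in $\+S_2 \supseteq \hat{\+S}$, and elements $a' \leq a$ that additionally satisfy $a' \leq b_j$ for some $j$ are likewise already majorized in $\+S_2$. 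Call $a' \in \+S'_1$ \emph{bad} if $a' \leq a$ yet $a' \not\leq x$ for every $x \in \+S_2$, and write $B$ for the set of bad elements.

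If $B = \emptyset$, then $\+S'_1 \lqdom \+S_2$ and the reflexive choice $\+S'_2 \coloneqq \+S'_1$ witnesses \rdc. Otherwise, I would pick any $a' \in B$ and apply the \rdc\ of $\rightarrow_1$ to $a' \leq a$ and $a \rightarrow_1 b_i$ for each $i$, obtaining $b'_i \leq b_i$ with either $a' \rightarrow_1 b'_i$ or $a' = b'_i$. The identity alternative yields $a' \leq b_i$, contradicting $a' \in B$, so $a' \rightarrow_1 b'_i$ for every $i \in [m]$. This supplies a transition
\[ \+S'_1 \rightarrow_2 \+S''_1 \coloneqq \{b'_1, \dotsc, b'_m\} \cup (\+S'_1 \setminus \{a'\}). \]

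The $b'_i$'s are majorized by $b_i \in \+S_2$, and every good element of $\+S'_1$ remains majorized by some element of $\+S_2$. The step I anticipate as the main obstacle is the case $|B| \geq 2$: the remaining bad elements of $B \setminus \{a'\}$ still sit in $\+S''_1$, still majorized only by $a \notin \+S_2$. My plan for resolving this is to iterate the above construction — each round strictly decreases $|B|$, since the freshly introduced $b'_i$'s are $\leq b_i \in \+S_2$ and hence immediately good — so after at most $|B|$ successive applications of $\rightarrow_2$ we reach a set $\+S'_2 \lqdom \+S_2$. The concatenation of these transitions is then the witness required by the \rdc\ property, read in the multi-step sense compatible with its use downstream in Proposition~\ref{prop:rdc-computable}.
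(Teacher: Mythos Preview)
Your argument is correct and essentially mirrors the paper's: both peel off, one at a time, the elements of $\+S'_1$ majorized only by $a$, invoking the \rdc\ of $\rightarrow_1$ to generate a chain of $\rightarrow_2$-steps terminating in a set $\lqdom \+S_2$. Your refinement of working only with ``bad'' elements (those not already majorized in $\+S_2$) is a mild simplification --- it lets you rule out the reflexive alternative of \rdc\ upfront --- whereas the paper inducts on all $a'_i \leq a$ and handles that sub-case explicitly; both proofs end up with a $\rightarrow_2^*$ witness rather than a single step, exactly as you observe.
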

  \begin{proof}
    The fact that $(\subsetsf(\aSet),\lqdom)$ is a \wqo is given by Proposition~\ref{prop:lqdom-wqo}. The \rdc property with respect to $\rightarrow_2$ is simple. 
Let
\[\set{a'_1, \dotsc, a'_\ell} \cup \+S' \quad \lqdom \quad  \set{a} \cup \+S \quad \rightarrow_2 \quad \set{b_1, \dotsc, b_m} \cup \+S  \]
with $\ell \geq 0$ and $\+S' \lqdom \+S$, $a'_i \leq a$ for all $i \in [\ell]$, and $a \rightarrow_1 b_i$ for all $i \in [m]$. We show by induction on $\ell$ that there exists $\+S''$ with $\set{a'_1, \dotsc, a'_\ell} \cup \+S' \rightarrow_2^* \+S''$ and $S'' \lqdom \set{b_1, \dotsc, b_m} \cup \+S$.
If $\ell =0$ and $a$ has no pre-image, then $\+S' \lqdom \+S$, and the relation is reflexive compatible since this means that $\+S' \lqdom \set{b_1, \dotsc, b_m} \cup \+S$.

Suppose now that $\ell >0$. Note that $a'_\ell \leq a$ and $\leq$ is \rdc with $\rightarrow_1$.  One possibility is that for each $a \rightarrow_1 b_i$ there is some $b'_i$ such that $a'_\ell \rightarrow_1 b'_i$ and $b'_i \leq b_i$. In this case we obtain 
\begin{align*}
  \set{a'_1, \dotsc, a'_\ell} \cup \+S' &\;\;\rightarrow_2\;\;  \set{a'_1, \dotsc, a'_{\ell-1}}  \;\cup\; \set{b'_1, \dotsc, b'_m} \cup \+S' 
\end{align*}
where
\begin{align*}
\set{a'_1, \dotsc, a'_{\ell-1}}  \;\cup\; (\set{b'_1, \dotsc, b'_m} \cup \+S')
  &\;\;\lqdom\;\;
   \set{a} \;\cup\; (\set{b_1, \dotsc, b_m} \cup  \+S)
\;\;\rightarrow_2\;\; \set{b_1, \dotsc, b_m} \cup \+S
\end{align*}
and $\set{b'_1, \dotsc, b'_m} \cup \+S' \lqdom \set{b_1, \dotsc, b_m} \cup \+S$. We can then apply the inductive hypothesis on $\ell-1$ and obtain $\+S''$ such that $\set{a'_1, \dotsc, a'_{\ell-1}}  \cup \set{b'_1, \dotsc, b'_m} \cup \+S' \rightarrow_2^* \+S''$ and $\+S'' \lqdom \set{b_1, \dotsc, b_m} \cup \+S$. Hence, $\set{a'_1, \dotsc, a'_\ell} \cup \+S' \rightarrow_2^* \+S''$, obtaining the downward compatibility.

The only case left to analyze is when, for some $a'_\ell \leq a \rightarrow_1 b_i$ the compatibility is reflexive, that is, $a'_\ell \leq b_i$. In this case we take a reflexive compatibility as well, since $\set{a'_\ell}\cup\+S' \lqdom \set{b_1, \dotsc, b_m} \cup \+S$. We then apply the inductive hypothesis on $\set{a'_1, \dotsc, a'_{\ell-1}} \;\cup\; \set{a'_\ell}\cup\+S'$ in the same way as before.
\end{proof}

\part{Data words}\label{part:words}

In this first part, we start our study on data words. In Section~\ref{sec:ara-model} we introduce our automata model $\ara(\opguess,\opspread)$, and we show that the emptiness problem for these automata is decidable. This extends the results of \cite{DL-tocl08}. To prove decidability, we adopt a different approach than the one of~\cite{DL-tocl08} that enables us to show the decidability of some extensions, and to simplify the decidability proofs of Part~\ref{part:trees}, which can be seen as a corollary of this proof. In Section~\ref{sec:ltl-with-registers} we introduce a temporal logic with registers and quantification over data values. This logic is shown to have a decidable satisfiability problem by a reduction to the emptiness problem of $\ara(\opguess,\opspread)$ automata.

\medskip

\section{{ARA} model}
\label{sec:ara-model}
\newcommand{\atraT}{\mathrel{\twoheadrightarrow}}
\newcommand{\atraTn}{\rightarrow}
\newcommand{\dd}{{:}}
\newcommand{\opset}{\msf{store}}
\newcommand{\opeq}{\msf{eq}}
\newcommand{\opneq}{\ensuremath{\overline{\msf{eq}}}}

An \textbf{Alternating Register Automaton} (\ara) consists in a one-way automaton on data words with alternation and \emph{one} register to store and test data. In~\cite{DL-tocl08} it
was shown that the emptiness problem is decidable and non-primitive-recursive. Here, we
consider an extension of \ara with two operators: \opspread and \opguess.
 We call this model $\ara(\opspread,\opguess)$.

\begin{definition}\label{def:ara-model}
  An alternating register automaton of $\ara(\opspread,\opguess)$
  is a tuple $\anAut=\tup{\A,Q,q_I,\delta}$ such that
  \begin{iteMize}{$\bullet$}
  \item $\A$ is a finite alphabet;
  \item $Q$ is a finite set of states;
  \item $q_I \in Q$ is the initial state; and
  \item $\delta : Q \to \Phi$ is the transition function,
    where $\Phi$ is defined by the grammar
\begin{multline*}
    a \midd \bar{a} \midd \odot ? \midd \opset(q) \midd \opeq \midd \opneq \midd q
    \land q' \midd q \lor q' \midd \triright q \midd
    \opguess(q) \midd \opspread(q,q')
\end{multline*}
    where $a\in \A, q,q' \in Q,$ $ \odot \in \set{\ttypeb,\ttypenb}$.
  \end{iteMize}
\end{definition}
\noindent This formalism without the \opguess and \opspread transitions is equivalent to the automata model of~\cite{DL-tocl08} on finite data words, where $\triright$ is to move to the next position to the right on the data word, $\opset(q)$ stores the current datum in the register and \opeq{} (resp.\ \opneq) tests that the current node's value is (resp.\ is not) equal to the stored. We call a state $q\in Q$ \textbf{moving} if $\delta(q) = \triright q'$ for some $q' \in Q$. 

As this automaton is one-way, we define its semantics as a set of `threads' for each node that progress synchronously. That is, all threads at a node move one step forward simultaneously and then perform some non-moving transitions independently. This is done for the sake of simplicity of the formalism, which simplifies both the presentation and the decidability proof.

Next we define a \emph{configuration} and then we give a notion of a \emph{run} over a data word $\wW$.  A \textbf{configuration} is a tuple
$\tup{i,\alpha,\gamma,\Threads}$ that describes the partial state of the execution at position $i$. The number $i \in \tpos(\wW)$ is the \emph{position} in the data word $\wW$, $\gamma = \wW(i) \in \A \times \D$ is the current position's letter and datum, and $\alpha = \type_{\wW}(i)$ is the \emph{word type} of the position $i$. Finally, $\Threads \in \subsetsf (Q \times \D)$ is a finite set of active \textbf{threads}, each thread $(q,d)$ consisting in a state $q$ and the value $d$ stored in the register. We will always note the set of threads of a configuration with the symbol $\Threads$, and we write $\Threads(d) = \set{q \in Q \mid (q,d) \in \Threads}$ for $d\in\D$, $\Threads(q) = \set{d \in \D \mid (q,d) \in \Threads}$ for $q \in Q$.  By $\araconf$ we denote the set of all configurations. 
Given a set of threads $\Threads$ we write 
$\data(\Threads) \coloneqq \set{d\mid(q,d) \in \Threads}$, and $\data(\tup{i,\alpha,(a,d),\Threads})\coloneqq\set{d}\cup\data(\Threads)$. We say that a configuration is \textbf{moving} if for every $(q,d) \in \Threads$, $q$ is moving.

\label{moving-nonmoving-rels}
To define a run we first introduce three transition relations over \emph{node configurations}: the \emph{non-moving} relation $\rightarrow_\eps$ and the \emph{moving} relation $\rightarrowtriright$. We start with $\rightarrow_\eps$. If the transition corresponding to a thread is a $\opset(q)$, the automaton \emph{sets} the register with current data value and continues the execution of the thread with state $q$; if it is $\opeq$, the thread \emph{accepts} (and in this case disappears from the configuration) if the current datum is \emph{equal} to that of the register, otherwise the computation for that thread cannot continue. The reader can check that the rest of the cases defined in Figure~\ref{fig:transition-ara-epsilon} follow the intuition of an alternating automaton.
\begin{figure}
  \centering
  \begin{align}
    \rho &\rightarrow_\eps \tup{i,\alpha,(a,d),\set{(q_j,d')}
      \cup \Threads}
    && \text{ if } \delta(q) = q_1 \lor q_2, j \in
    \set{1,2}\label{def:righteps:1}
    \\
    \rho & \rightarrow_\eps
    \tup{i,\alpha,(a,d),\set{(q_1,d'),(q_2,d')} \cup \Threads}
    && \text{ if } \delta(q) = q_1 \land q_2
    \\
    \rho &\rightarrow_\eps \tup{i,\alpha,(a,d),\set{(q',d)}
      \cup \Threads}
    && \text{ if } \delta(q) = \opset(q')\label{def:righteps:6}
    \\
    \rho &\rightarrow_\eps \tup{i,\alpha,(a,d),\Threads}
    && \text{ if } \delta(q) = \opeq \text{ and } d =
    d' \label{def:righteps:3}
    \\
    \rho &\rightarrow_\eps \tup{i,\alpha,(a,d),\Threads}
    && \text{ if } \delta(q) = \opneq \text{ and } d\not=d'
    \\
    \rho &\rightarrow_\eps \tup{i,\alpha,(a,d),\Threads}
    && \text{ if } \delta(q) =\beta ? \text{ and } \beta\in\alpha
    \\
    \rho &\rightarrow_\eps \tup{i,\alpha,(a,d),\Threads}
    && \text{ if } \delta(q) = b \text{ and } b=a
    \\
    \rho &\rightarrow_\eps \tup{i,\alpha,(a,d),\Threads}
    && \text{ if } \delta(q) = \bar{b} \text{ and } b\not= a
  \end{align} 
  \caption{Definition of the transition relation ${\rightarrow_\eps} \subseteq \araconf \times \araconf$, given a configuration $\rho = \tup{i,\alpha,(a,d),\set{(q,d')} \cup \Threads}$.}
  \label{fig:transition-ara-epsilon}
\end{figure}

The cases that follow correspond to our extensions to the model of~\cite{DL-tocl08}. The \opguess instruction extends the model with the ability of storing \emph{any} datum from the domain $\D$. Whenever $\delta(q)=\opguess(q')$ is executed, a data value
(nondeterministically  chosen) is saved in the register.
\begin{align}
    \rho &\rightarrow_\eps \tup{i,\alpha,(a,d),\set{(q',e)}
      \cup \Threads} \quad 
  \text{ if } \delta(q) = \opguess(q'), e \in \D \label{def:righteps:guess}
\end{align}
Note that the $\opset$ instruction may be simulated with the $\opguess$,  $\opeq$ and $\land$ instructions, while $\opguess$ cannot be expressed by the \ara model.

The `\opspread' instruction is an unconventional operator
in the sense that it depends on the data of \emph{all} 
threads in the current configuration with a certain state. Whenever
$\delta(q)=\opspread(q_2,q_1)$ is executed, a new thread with state $q_1$
and datum $d$ is created for each thread $\tup{q_2,d}$ present in the
configuration. With this operator we can code a universal quantification over all the data values that appeared so far (\ie, that appeared in smaller positions). We demand that this transition may only be applied if all other possible $\rightarrow_\eps$ kind of transitions were already executed. In other words, only $\opspread$ transitions or \emph{moving} transitions are present in the configuration.
\begin{gather}
  \rho
        \rightarrow_\eps \tup{i,\alpha,(a,d),\set{\tup{q_1,d}
          \mid {\tup{q_2,d} \in \Threads}} \cup \Threads} \label{def:righteps:spread}
\end{gather}
if{f} $\delta(q) = \opspread(q_2,q_1)$ and for all $\tup{\tilde{q},\tilde{d}} \in \Threads$ either $\delta(\tilde q)$ is a \opspread or a moving instruction.
We also use a weaker one-argument version of $\opspread$. 
\begin{gather}
  \rho
        \rightarrow_\eps \tup{i,\alpha,(a,d),\set{\tup{q_1,d}
          \mid { \exists q_2 . \tup{q_2,d} \in \Threads}} \cup \Threads} \label{def:righteps:spread-simple}
\end{gather}
if{f} $\delta(q) = \opspread(q_1)$ and for all $\tup{\tilde{q},\tilde{d}} \in \Threads$ either $\delta(\tilde q)$ is a \opspread or a moving instruction. Notice that the one-argument version of $\opspread$ can be simulated with the two-argument version, and hence we do not include it in the definition of the automaton.
Also, note that we enforce the $\opspread$ operation  to be executed once all other non-moving transitions have been applied, in order to take into account all the data values that may have been introduced in these transitions, as a result of $\opguess$ operations. This behavior simplifies the reduction from the satisfiability of forward-$\xpath$ we will show later. This reduction will  only need to use the weak one-argument version of $\opspread$.

  The $\rightarrowtriright$ transition advances all threads of the node simultaneously, and is defined, for any type $\alpha' \in \set{\triright,\bar{\triright}}$ and symbol and with data value $\gamma' \in \A \times \D$,
  \begin{gather} 
    \tup{i,\ttypeb,\gamma,\Threads} \rightarrowtriright
    \tup{i+1,\alpha',\gamma',\Threads_\triright}\label{eq:def:ara:ns}
  \end{gather}
  if{f} 
  \begin{enumerate}[(i)]
\item \label{ara:moving:i} for all $(q,d) \in \Threads$, $\delta(q)$ is moving; and
\item \label{ara:moving:ii} $\Threads_\triright = \set{\tup{q',d}
    \mid {(q,d)\in \Threads}, \delta(q)=\triright\,q'}$.
  \end{enumerate}

Finally, we define the transition between configurations as ${\araT} \coloneqq {\rightarrowtriright} \cup {\rightarrow_\eps}$.

A \emph{run}  over a data word $\wW = \aA \otimes \dD$ is a nonempty  sequence  $\cl{C}_1 \araT \dotsb \araT \cl{C}_n$  with $\cl{C}_1= \tup{1,\alpha_0,\gamma_0,\Threads_0}$ and $\Threads_0 = \set{\tup{q_I,\dD(1)}}$ (\ie, the thread consisting in the initial state with the first datum), such that for every $j \in [n]$ with $\cl{C}_j = \tup{i,\alpha,\gamma,\Threads} $: (1) $i \in \tpos(\wW)$; (2) $\gamma=\wW(i)$; and (3) $\alpha = \type_{\wW}(i)$.
  We say that the run is \emph{accepting} if{f} $\cl{C}_n = \tup{i,\alpha,\gamma,\emptyset}$ contains an empty set of threads. If for an automaton $\anAut$ we have that $\Lang(\anAut) \neq \emptyset$ we say that $\anAut$ is \emph{nonempty}.

\subsection{Properties}
\label{sec:ara-properties}


We show the following two statements
\begin{iteMize}{$\bullet$}
  \item $\Lang(\ara(\opguess,\opspread))$ is not closed under complementation,
  \item the $\ara(\opguess,\opspread)$ class is  more expressive  than $\ara$.
\end{iteMize}
In fact in the proof below we show the first one, that implies the second one, given the fact that the $\ara$ model is closed under complementation.

\begin{proposition}[Expressive power]\label{prop:guess-spread-more-expressive} \ 
  \begin{enumerate}[\em(a)]
  \item the $\ara(\opguess)$ class is  more expressive than $\ara$;
  \item the $\ara(\opspread)$ class is  more expressive  than $\ara$.
  \end{enumerate}
\end{proposition}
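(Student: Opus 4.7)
The plan is to reduce each part of the proposition to the failure of closure under complementation for the extended class, exploiting the fact that $\ara$ itself is closed under complement. As a preliminary step I would establish (or recall) that $\ara$ is closed under complementation: given $\anAut$, dualize $\delta$ pointwise by swapping $\land\leftrightarrow\lor$, $\opeq\leftrightarrow\opneq$, $a\leftrightarrow\bar a$, and the type tests $\triright?\leftrightarrow\bar\triright?$, while leaving $\opset$ and $\triright q$ intact; then flip accepting/rejecting. On finite data words the run of an \ara is a finite alternating tree and the standard duality argument yields $\Lang(\anAut^c)=\overline{\Lang(\anAut)}$. The two inclusions $\ara \subseteq \ara(\opguess)$ and $\ara\subseteq\ara(\opspread)$ are immediate from the definitions.

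For part (a), I would exhibit a language $L_g$ recognized by some automaton in $\ara(\opguess)$ whose complement is not recognized by any $\ara$. The essential feature of $\opguess$ is the ability to load the register with a datum that need not occur in the input and then test all positions against it. A natural candidate is of the form ``there exists $d\in\D$ such that every $a$-position carries a datum different from $d$ while some further global constraint involving $d$ holds'', built so that any $\ara$ run would have to commit to a value it has actually read. Recognizing $L_g$ in $\ara(\opguess)$ is a direct construction: one thread fires $\opguess(q)$ at the first position and then universally verifies the required global pattern. Once $\overline{L_g}\notin\Lang(\ara)$ is in hand, the proposition follows by contradiction: if $\Lang(\ara(\opguess))=\Lang(\ara)$, then $L_g\in\Lang(\ara)$, and by the closure established above $\overline{L_g}\in\Lang(\ara)\subseteq\Lang(\ara(\opguess))$, contradicting the choice of $L_g$.

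For part (b) the plan is parallel with a witness tailored to $\opspread$. The operator $\opspread$ provides universal quantification over the data values held by the currently active threads, i.e.\ a universal quantifier ranging over values seen so far, rather than over positions. I would take a language $L_s$ encoding such a quantifier essentially, for instance ``for every data value $d$ that appears among the data of the positions traversed up to the final position, property $\varphi_d$ holds at the final position'', and build an $\ara(\opspread)$ automaton that saves the datum at every position into a designated state $q_2$, then fires $\opspread(q_2,q_1)$ at the last position to verify $\varphi_d$ once per distinct value. The same closure-under-complement argument then gives $\ara\subsetneq\ara(\opspread)$.

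The main obstacle is the lower bound step in both (a) and (b), namely showing $\overline{L_g}\notin\Lang(\ara)$ and $\overline{L_s}\notin\Lang(\ara)$. This requires an inexpressibility argument for alternating one-register automata, and the natural tool is a pumping/indistinguishability argument exploiting two facts about $\ara$: (i) the single register can only store values read from the input, and (ii) on a pair of input words agreeing on letter pattern but differing by a permutation of data values outside a bounded window, the set of reachable configurations collapses to the same abstract state. Designing the witnesses so that the complement forces a genuine dependence on either a non-occurring datum (for $\opguess$) or on a universal quantification over all data values seen (for $\opspread$) is the crux; everything else reduces to the closure/contradiction template described above.
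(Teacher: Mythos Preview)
Your overall framework---closure of $\ara$ under complement, plus a witness language in the extended class whose complement is not $\ara$-recognizable---is exactly the paper's template. The difference lies entirely in how the inexpressibility step is discharged. You propose a direct pumping/indistinguishability argument for alternating one-register automata, which you correctly flag as the main obstacle and leave unexecuted. The paper avoids this obstacle entirely: it chooses witnesses whose \emph{complements} are precisely the kind of property needed to code the missing condition~(iii) in a Minsky-machine reduction (``for every decrement there is an earlier increment with the same datum''). Since conditions~(i) and~(ii) are already known to be $\ara$-expressible, if the complement were $\ara$-expressible one could reduce Minsky-machine halting to $\ara$ emptiness, contradicting decidability. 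Concretely, for (a) the witness is ``there is an $(a,d)$-position with no earlier $(b,d)$-position'', and for (b) it is ``there is a $b$-position whose datum differs from every earlier $a$-datum''; the paper gives explicit transitions for the $\ara(\opspread)$ case.

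This undecidability-based argument is a substantial shortcut: it replaces a delicate combinatorial lower bound for alternating register automata with a one-line appeal to the halting problem, at the cost of forward-referencing the decidability theorem for the extended class (Theorem~\ref{thm:arags-decidable}). Your pumping route is in principle viable and would yield a self-contained proof independent of decidability, but carrying it out for alternating automata is genuinely nontrivial and you have not given the argument. Note also that your intuition for $\opguess$---loading a datum \emph{not occurring} in the input---is not what the paper exploits; its witness guesses a datum that \emph{will} occur later, the point being only that the guess precedes the read.
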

\newcommand{\inc}{\msf{inc}}
\newcommand{\dec}{\msf{dec}}
\newcommand{\iz}{\msf{ifzero}}
\newcommand{\minski}{(i)}
\newcommand{\minskii}{(ii)}
\newcommand{\minskiii}{(iii)}
\begin{proof}
Let $\wW = \aA \otimes \dD$ be a data word.
To prove (a), consider the property ``\textsl{There exists a datum $d$ and a position $i$ with $\dD(i)=d$, $\aA(i)=a$, and there is no position $j \leq i$ with $\dD(j)=d$, $\aA(j)=b$.}''. This property can be easily expressed by $\ara(\opguess)$. It suffices to guess the data value $d$ and checks that we can reach an element $(a,d)$, and that for every previous element $(b,d')$ we have that $d\not=d'$. We argue that this property cannot be expressed by the $\ara$ model. Suppose ad absurdum that it is expressible. This means that its negation would also be expressible by $\ara$ (since they are closed under complementation). The negation of this property states 
\begin{align}
  \begin{split}
    &\text{\textbf{``}\textsl{For every data value $d$, if there is an element $(a,d)$}}\\[-.3em]
     &\text{\textsl{    in the word, then there is a previous element $(b,d)$.}\textbf{''}}
  \end{split}
\tag{P1} \label{eq:property-ara-guess-undec}
\end{align}
With this kind of property one can code an accepting run of a Minsky machine, whose emptiness problem is undecidable. This would prove that $\ara(\opguess)$ have an undecidable emptiness problem, which is in contradiction with the decidability proof that we will give in Section~\ref{sec:emptiness-problem-ara-guess-spread}. Let us see how the reduction works.

The emptiness problem for Minsky machine is known to be undecidable even with an alphabet consisting of one symbol, so we disregard the letters read by the automaton in the following description. Consider then a 2-counter alphabet-blind Minsky machine  whose instructions are of the form $(q,\ell,q')$ with $\ell \in \set{\inc,\dec,\iz}\times \set{1,2}$ being the operation over the counters, and $q,q'$  states from the automaton's set of states $Q$.
A run on this automaton is a sequence of applications of transition rules, for example

\medskip
\begin{tabular}{c@{\,}c@{\,}c@{\,}c@{\,}c@{\,}c}
  $(q_1,\inc_1,q_2)$&$(q_2,\inc_2,q_3)$& $(q_3,\inc_1,q_2)$&$(q_2,\dec_1,q_1)$& $(q_1,\dec_1,q_2)$&$(q_2,\iz_1,q_3)$
\end{tabular}
\medskip

\noindent
This run has an associated data word over the alphabet 
\[Q\times\set{\inc,\dec,\iz}\times \set{1,2}\times Q,\] 
where the corresponding data value of each instruction is used to match  increments  with decrements, for example,

\medskip
\begin{tabular}{c@{\,}c@{\,}c@{\,}c@{\,}c@{\,}c}
  $(q_1,\inc_1,q_2)$&$(q_2,\inc_2,q_3)$& $(q_3,\inc_1,q_2)$&$(q_2,\dec_1,q_1)$& $(q_1,\dec_1,q_2)$&$(q_2,\iz_1,q_3)$.\\
$1$ & $2$ & $3$ & $2$ & $1$ & $4$
\end{tabular}
\medskip

Using the \ara model we can make sure that 
(i)
all increments have different data values and all decrements have different data values; and 
(ii) 
for every $(~,\inc_i,~)$ element with data value $d$ that occurs to the left of a $(~,\iz_i,~)$, there must be a  $(~,\dec_i,~)$ element with data value $d$ that occurs in between. \cite{DL-tocl08} shows how to express these properties using \ara. However, properties {\minski} and {\minskii} are not enough to make sure that every prefix of the run ending in a $\iz_i$ instruction must have as many increments as decrements of counter $i$. Indeed, there could be more decrements than increments --- but not the opposite, thanks to \minskii.

The missing condition to verify that the run is correct is:
(iii) 
for every decrement there exists a  previous increment with the same data value. In fact, we can see that property~\eqref{eq:property-ara-guess-undec} can express condition \minskiii: we only need to change $a$ by a decrement transition in the coding, and $b$ by an increment transition of the same counter. But then, assuming that property~\eqref{eq:property-ara-guess-undec} can be expressed by $\ara(\opguess)$, the emptiness problem for $\ara(\opguess)$ is undecidable. 
This is absurd, as the emptiness problem is decidable, as we will show later on in Theorem~\ref{thm:arags-decidable}. 

\medskip

Using a similar reasoning as before, we show (b):  the $\ara(\opspread)$ class is  more expressive than $\ara$. Consider the property: ``\textsl{There exists a position $i$ labeled $b$ such that $\dD(i)\not=\dD(j)$ for all $j<i$ with $\aA(j)=a$.}'' as depicted in Figure~\ref{fig:prop-spread}. Let us see how this can be coded into $\ara(\opspread)$. Assuming $q_0$ is the initial state, the transitions should reflect that every datum with label $a$ seen along the run is  saved with a state $q_a$, and that this state is in charge of propagating this datum. Then, we guess a position labeled with $b$ and check that all these stored values under $q_a$ are different from the current one. For succinctness we write the transitions as positive boolean combinations of the basic operations.
  \begin{align*}
    \begin{split}
      \delta(q_0) &= (b \land \opspread(q_a,q_1)) \; \lor \;
      \big( ( \bar{a} \lor \opset(q_a) ) \land 
      \triright q_0 \big) ,
    \end{split}
\\ 
    \delta(q_1) &= \opneq, \qquad
    \delta(q_a) =  (\bar{\triright}? \lor \triright q_a) .
  \end{align*}
\begin{figure}
  \centering
    \includegraphics[scale=.7]{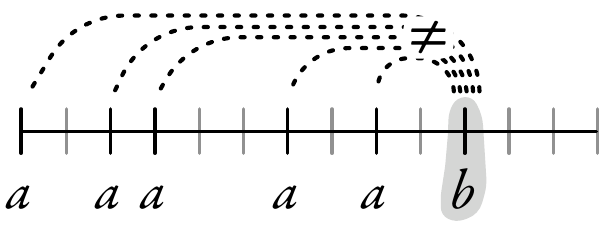}
  \vspace{-1mm}  
  \caption{A property not expressible in \ara.}
  \label{fig:prop-spread}
  \vspace{-3mm}  
\end{figure}
This property cannot be expressed by the \ara model. Were it expressible, then its negation 
\begin{gather}
  \begin{split}
    &\text{\textbf{``}\textsl{for every element $b$ there exists a previous one }}\\[-.3em]
    &\text{\textsl{ labeled $a$ with the same data value}\textbf{''}}  \end{split}\tag{P2}\label{eq:property-ara-spreadundec}
\end{gather}
would also be. Just as before we can use  property \eqref{eq:property-ara-spreadundec} to express condition \minskiii, and force that for every decrement in a coding of a Minsky machine there exists a corresponding previous increment. This leads to a contradiction by proving that the emptiness problem for $\ara(\opspread)$ is undecidable.
\end{proof}

\begin{corollary}\label{cor:ara-not-closed-compement}
  $\ara(\opguess)$, $\ara(\opspread)$ and $\ara(\opguess,\opspread)$ are not closed under complementation.
\end{corollary}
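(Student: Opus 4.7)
The proof simply repackages the Minsky-machine reduction that already appears inside the proof of Proposition~\ref{prop:guess-spread-more-expressive}. The plan is to show for each of the three classes that closure under complementation would make the corresponding emptiness problem undecidable, contradicting Theorem~\ref{thm:arags-decidable} (which gives decidability of the most permissive class $\ara(\opguess,\opspread)$, and hence of its subclasses).

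For $\ara(\opguess)$, I would recall from the proof of Proposition~\ref{prop:guess-spread-more-expressive}(a) the property
\[
\varphi\;\coloneqq\;\text{``}\exists\, i,d.\ \dD(i)=d,\ \aA(i)=a,\ \text{and}\ \forall j\le i.\ \lnot\bigl(\dD(j)=d\wedge\aA(j)=b\bigr)\text{''}
\]
which was explicitly constructed in $\ara(\opguess)$ (guess $d$, look for an $a$-position carrying $d$, check every previous $b$-position carries a different datum). Its negation is exactly property~\eqref{eq:property-ara-guess-undec}. If $\ara(\opguess)$ were closed under complementation, $\eqref{eq:property-ara-guess-undec}$ would be expressible in $\ara(\opguess)$, and combining it with the two \ara-expressible Minsky conditions \minski and \minskii (as done in that proof) would allow us to reduce the halting problem for 2-counter Minsky machines to the emptiness of $\ara(\opguess)$. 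This contradicts Theorem~\ref{thm:arags-decidable}.

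For $\ara(\opspread)$, the same strategy works verbatim using the property from Proposition~\ref{prop:guess-spread-more-expressive}(b): ``there exists a $b$-position whose datum differs from every previous $a$-position'', whose automaton is exhibited explicitly in the proof, and whose negation~\eqref{eq:property-ara-spreadundec} supplies Minsky condition~\minskiii. Closure under complementation would thus yield a $\ara(\opspread)$ automaton for \eqref{eq:property-ara-spreadundec}, and again undecidability of emptiness — contradicting the decidability of $\ara(\opguess,\opspread)$.

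For $\ara(\opguess,\opspread)$ the conclusion is immediate from either of the two witnesses above, since both $\varphi$ and the $\opspread$-witness are already in $\Lang(\ara(\opguess,\opspread))$ and the same Minsky reduction applies. No new construction is required; the entire proof is a one-line invocation of these earlier arguments together with Theorem~\ref{thm:arags-decidable}. The only thing to be mildly careful about is to justify that for each of the three classes the witness property does belong to that specific class (not just to the combined one), which is immediate from the concrete automata already written out in the proof of Proposition~\ref{prop:guess-spread-more-expressive}. There is no real obstacle: the heavy lifting — namely that conditions~\minski--\minskii\ are in plain \ara \cite{DL-tocl08} and that \minskiii\ is the missing piece encoded precisely by \eqref{eq:property-ara-guess-undec} or \eqref{eq:property-ara-spreadundec} — has already been done.
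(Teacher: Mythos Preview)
Your proposal is correct and follows exactly the same approach as the paper, which simply observes that closure under complementation would make properties~\eqref{eq:property-ara-guess-undec} and~\eqref{eq:property-ara-spreadundec} expressible and hence (via the Minsky reduction already carried out in Proposition~\ref{prop:guess-spread-more-expressive}) yield undecidability, contradicting Theorem~\ref{thm:arags-decidable}. You have merely spelled out the argument in more detail than the paper's one-sentence proof.
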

\begin{proof}
  If they were closed under complementation, then we could express some of the properties described in the proof of Proposition~\ref{prop:guess-spread-more-expressive}, resulting in an undecidable model, which is in contradiction with Theorem~\ref{thm:arags-decidable}.
\end{proof}

We then have the following properties of the automata model.
\begin{proposition}[Boolean operations]\label{prop:ara-closed}
 The class $\Lang(\ara(\opspread,\opguess))$ has the following properties:
  \begin{enumerate}[\em(i)]
\item \label{prop:ara-closed:i}  it is closed under union, 
\item \label{prop:ara-closed:ii} it is closed under
    intersection,
\item \label{prop:ara-closed:iii} it is not closed under complementation.
  \end{enumerate}
\end{proposition}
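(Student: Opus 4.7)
The plan is as follows. Item~(\ref{prop:ara-closed:iii}) is exactly Corollary~\ref{cor:ara-not-closed-compement}, which has just been established, so nothing remains to prove there.

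For (\ref{prop:ara-closed:i}) and (\ref{prop:ara-closed:ii}), I would use the standard disjoint‐union construction for alternating automata. Given two automata $\anAut_i = \tup{\A, Q_i, q_I^i, \delta_i}$ for $i=1,2$, I first rename states so that $Q_1 \cap Q_2 = \emptyset$, then introduce a fresh state $q_I$ and define
\[\anAut = \tup{\A,\ Q_1 \cup Q_2 \cup \set{q_I},\ q_I,\ \delta},\]
where $\delta$ extends $\delta_1 \cup \delta_2$ by $\delta(q_I) = q_I^1 \lor q_I^2$ in the union case, and $\delta(q_I) = q_I^1 \land q_I^2$ in the intersection case. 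The first $\rightarrow_\eps$ step fired on the initial thread $(q_I,\dD(1))$ then produces either one of $\set{(q_I^1,\dD(1))}$, $\set{(q_I^2,\dD(1))}$ (disjunction) or the joint set $\set{(q_I^1,\dD(1)),(q_I^2,\dD(1))}$ (conjunction), matching the initial configuration(s) of $\anAut_1$ and/or $\anAut_2$. From then on, the runs of the two sub-automata would be simulated side by side on the same input data word.

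The key observation that makes this clean is that, since $Q_1$ and $Q_2$ are disjoint, each thread belongs unambiguously to one sub-automaton, and the non-moving transitions dictated by $\delta_i$ only rewrite threads whose state lies in $Q_i$. The moving transition $\rightarrowtriright$ advances all threads together, which matches the intended behaviour: both sub-automata are one-way on the same word and move in lockstep.

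The subtlety I expect to be the main obstacle is the interaction of $\opspread$, whose semantics depends on the \emph{entire} set of active threads. A transition $\opspread(q_2,q_1)$ declared in $\delta_i$, however, only creates children of existing threads carrying $q_2 \in Q_i$, so the threads of the other sub-automaton are neither consulted nor modified. The only cross-talk is the side condition that requires all remaining threads to be either $\opspread$ or moving; this synchronisation is always satisfiable by first exhausting the non-$\opspread$, non-moving $\rightarrow_\eps$ steps of \emph{each} sub-run and only then triggering the $\opspread$ and moving steps, in exactly the same way as within a single accepting run of a component. A routine induction on run length then gives $\Lang(\anAut) = \Lang(\anAut_1)\cup\Lang(\anAut_2)$ and $\Lang(\anAut)=\Lang(\anAut_1)\cap\Lang(\anAut_2)$ in the respective cases, yielding (\ref{prop:ara-closed:i}) and (\ref{prop:ara-closed:ii}).
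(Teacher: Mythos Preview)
Your proposal is correct and follows essentially the same approach as the paper, which gives only a one-line sketch: the paper's sole observation is precisely your ``key observation'' that the first argument of $\opspread(q_2,q_1)$ pins the transition to one sub-automaton once the state sets are disjoint. You go further than the paper in explicitly handling the side condition that all other threads be $\opspread$ or moving before a $\opspread$ fires, which is a genuine (if minor) point the paper elides.
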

\begin{proof}[Proof sketch]
Items \eqref{prop:ara-closed:i} and \eqref{prop:ara-closed:ii} are straightforward if we notice that the first argument of \opspread ensures that this transition is always relative to the states of one of the automata being under intersection or union. Item \eqref{prop:ara-closed:iii} follows from Corollary~\ref{cor:ara-not-closed-compement}.
\end{proof}

\subsection{Emptiness problem}
\label{sec:emptiness-problem-ara-guess-spread}

This section is dedicated to show the following theorem.

\begin{theorem} \label{thm:arags-decidable}
The  emptiness problem for $\ara(\opguess,\opspread)$ is decidable.
\end{theorem}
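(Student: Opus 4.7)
The plan is to recast the automaton's semantics as a well-structured transition system and invoke Lemma~\ref{lem:downward-closed-decidable}. Since concrete configurations carry data values from the infinite domain $\D$, I would first pass to a symbolic representation that forgets data identities while preserving the equality pattern that the transition rules actually test. A symbolic configuration is a tuple $\sigma=(\alpha,a,S_0,\+M)$ where $\alpha\in\set{\triright,\bar\triright}$ and $a\in\A$ describe the current position, $S_0\subseteq Q$ collects the states of the threads whose register equals the current datum, and $\+M\in\subsetsf(\subsetsf(Q)\setminus\set\emptyset)$ is the multiset obtained by grouping the remaining threads by register-equality and recording the state-set of each group. Because the concrete semantics of Figure~\ref{fig:transition-ara-epsilon} and of the moving rule is equivariant under data renamings, every concrete accepting run projects to a symbolic accepting run and any symbolic run can be realised concretely by choosing suitable data values on the fly. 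The induced symbolic transition system is finitely branching: for a moving or $\opguess$ step one need only branch on the new letter $a'\in\A$, the new type $\alpha'$, and on whether the new (or guessed) datum merges with $S_0$, merges with one of the finitely many classes in $\+M$, or is fresh.

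I would equip symbolic configurations with the quasi-ordering $\sigma\leq\sigma'$ iff $\alpha=\alpha'$, $a=a'$, $S_0\subseteq S_0'$ and $\+M\lqdom\+M'$, where $\lqdom$ is the majoring order induced by $\subseteq$ on the finite set $\subsetsf(Q)$. Since $\subsetsf(Q)$ is finite, $(\subsetsf(Q),\subseteq)$ is trivially a \wqo, and Proposition~\ref{prop:lqdom-wqo} lifts this to a \wqo on finite multisets; combining with equality on the finite component $(\alpha,a,S_0)$ yields a \wqo on symbolic configurations. The ordering is decidable, the successor relation is computable, and the target set of accepting configurations, namely those with $S_0=\emptyset$ and $\+M=\emptyset$, is clearly recursive and downward-closed.

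The substantive step is to verify that the symbolic transition relation is \rdc\ with respect to $\leq$. Fix $\sigma_1'\leq\sigma_1$ and $\sigma_1\rightarrow\sigma_2$, and reason by cases on the rule used. For a non-moving single-thread rule firing on a state $q$ present in $\sigma_1$, either $q$ also survives in $\sigma_1'$, so the same rule fires and $\leq$ is preserved componentwise, or it does not, in which case $\sigma_1'$ idles and $\sigma_1'\leq\sigma_2$ follows because $\sigma_2$ differs from $\sigma_1$ only by additions and removals compatible with $\subseteq$ and $\lqdom$. For $\opguess$ the smaller configuration makes the same equivalence-class choice. The delicate case is $\opspread(q_2,q_1)$: firing in $\sigma_1$ adds $q_1$ to every class containing~$q_2$; firing in $\sigma_1'$ affects only the $q_2$-containing classes of $\sigma_1'$, which the majoring witness between $\+M$ and $\+M'$ (and the inclusion $S_0\subseteq S_0'$) sends into $q_2$-containing classes on the larger side, so that the augmented configurations again satisfy $\leq$; the precondition that all remaining threads are in moving or $\opspread$ states transfers to $\sigma_1'$ because $\leq$ only shrinks the collection of states present. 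For the moving step both sides apply the same state map $q\mapsto q'$ determined by $\delta$ and the same nondeterministic choice of new letter, type, and datum-class, so $\leq$ is preserved.

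Having established that the symbolic system is effective and \rdc\ with respect to a \wqo, and that the target set is recursive and downward-closed, Lemma~\ref{lem:downward-closed-decidable} applied to the finite set $\set{(\alpha,a,\set{q_I},\emptyset)\mid \alpha\in\set{\triright,\bar\triright},\; a\in\A}$ of initial symbolic configurations decides reachability of acceptance, and hence emptiness of $\anAut$. The main obstacle will be the bookkeeping for $\opspread$ in the \rdc\ proof, where one must argue that the majoring witness between $\+M$ and $\+M'$ remains a witness after $q_1$ has been added on both sides; care is needed because a class on the larger side already containing $q_1$ may absorb the addition invisibly, while a class on the smaller side containing $q_2$ must always have a matching $q_2$-containing counterpart on the larger side under the chosen witness.
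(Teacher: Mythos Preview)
Your overall strategy---abstracting away data identities, equipping the resulting symbolic configurations with a \wqo, verifying \rdc, and invoking Lemma~\ref{lem:downward-closed-decidable}---is exactly the paper's approach. The paper works with concrete configurations modulo the data-renaming equivalence~$\eqconf$, which is isomorphic to your symbolic representation; its order $\lqw$ is thread-set inclusion $\Threads\subseteq\Threads'$ modulo~$\eqconf$.

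However, your choice of ordering is wrong, and this breaks the \rdc verification. You take $\+M \lqdom \+M'$ to be the \emph{majoring} order induced by $\subseteq$, i.e.\ every class in $\+M$ has \emph{some} superset in $\+M'$, with no injectivity requirement. Under this order, a configuration with \emph{more} equivalence classes can sit below one with fewer, and \rdc fails. Concretely: take $\sigma_1$ with $S_0=\emptyset$ and $\+M=\{\{q,r\}\}$, and $\sigma_1'$ with $S_0=\emptyset$ and $\+M'=\{\{q\},\{q,r\}\}$; then $\sigma_1'\leq\sigma_1$ under your order since both $\{q\}$ and $\{q,r\}$ are contained in $\{q,r\}$. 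Now let $\delta(q)=\opneq$ and fire it in $\sigma_1$, yielding $\sigma_2$ with $\+M=\{\{r\}\}$. Whichever class of $\sigma_1'$ you fire on (or if you idle), the result still contains a class with $q$ in it, which has no superset in $\{\{r\}\}$. So no $\sigma_2'\leq\sigma_2$ exists. The paper's order avoids this: $\Threads\subseteq\Threads'$ modulo a bijective renaming translates, in your symbolic language, to the existence of an \emph{injection} from the classes of $\+M$ to those of $\+M'$ respecting $\subseteq$---the multiset embedding, not the majoring order. In the example above $\sigma_1'$ has two classes and $\sigma_1$ has one, so $\sigma_1'\not\lqw\sigma_1$ and the bad case never arises. (Relatedly, you declare $\+M\in\subsetsf(\subsetsf(Q)\setminus\{\emptyset\})$ but describe it as a multiset; for the correspondence with concrete runs you do need multisets, since two distinct data values can carry the same state-set.)

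Once you replace $\lqdom$ by the multiset embedding, the \wqo property still holds (by \higmanslem, or more simply by the paper's counting argument via \dicksonslem on $(\Nz)^{\subsets(Q)}$), and your case analysis for \rdc goes through---indeed it becomes the paper's Lemma~\ref{lem:nodeconf-rdc}.
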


  As already mentioned, decidability for $\ara$ was proved  in~\cite{DL-tocl08}. Here we propose an alternative approach that simplifies the proof of decidability of the two extensions \opspread and \opguess.

  The proof goes as follows. We will define a \wqo $\lqw$ over $\araconf$ and show that $(\araconf, \lqw)$ is \emph{\rdc} with respect to $\araT$ (Lemma~\ref{lem:nodeconf-rdc}). Note that strictly speaking $\araT$ is an infinite-branching transition system as  $\rightarrowtriright$ may take \emph{any} value from  the infinite set $\D$, and $\rightarrow_\eps$ can also guess any  value. However, it can trivially be restricted to an effective \emph{finitely} branching one. Then, by Lemma~\ref{lem:downward-closed-decidable}, $(\araconf,\araT)$ has a computable upward-closed reachability set, and this implies that the emptiness problem of $\ara(\opguess,\opspread)$ is decidable.

\newcommand{\eqconf}{\sim}
Since our model of automata only cares about equality or inequality of data values, it is convenient to work modulo renaming of data values.
  \begin{definition}[$\eqconf$]
    We say that two configurations $\rho, \rho'\in\araconf$ are \textbf{equivalent} (notation $\rho \eqconf \rho'$) if there is a bijection $f : \data(\rho) \to \data(\rho')$ such that $f(\rho)=\rho'$, where $f(\rho)$ stands for the replacement of every data value $d$ by $f(d)$ in $\rho$.
  \end{definition}

  \begin{definition}[$\lqwS$]
    We first define the relation $(\araconf,\lqwS)$ such that
    \begin{gather*}
      \tup{i,\alpha,\gamma,\Threads} \;\lqwS\;
      \tup{i',\alpha',\gamma',\Threads'}\label{eq:1}
    \end{gather*}
 if{f} $\alpha = \alpha'$, $\gamma = \gamma'$, and $\Threads \subseteq \Threads'$. 
  \end{definition}
Notice that by the definition above we are `abstracting away' the information concerning the position $i$. We finally define $\lqw$ to be $\lqwS$ modulo $\eqconf$.
  \begin{definition}[$\lqw$]
We define $\rho \lqw \rho'$ if{f} there is $\rho'' \eqconf \rho'$ with $\rho \lqwS \rho'$.
  \end{definition}

The following lemma follows from the definitions.
  \begin{lemma} \label{lem:nodeconf-wqo}
    $(\araconf,\lqw)$ is a well-quasi-order.
  \end{lemma}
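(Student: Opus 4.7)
The plan is to verify reflexivity, transitivity, and the \wqo condition on infinite sequences, reducing the last step to \higmanslem via a careful encoding of configurations modulo $\eqconf$.

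Reflexivity and transitivity of $\lqwS$ are immediate from $=$ on the finite components and $\subseteq$ on threads. For $\lqw$, reflexivity uses the identity renaming; for transitivity, given $\rho \lqwS \tilde\rho' \eqconf \rho'$ and $\rho' \lqwS \tilde\rho'' \eqconf \rho''$, I would extend the data-value bijection $f$ witnessing $\tilde\rho' \eqconf \rho'$ to a bijection on all of $\D$ and apply it to $\rho' \lqwS \tilde\rho''$, yielding $\tilde\rho' \lqwS f(\tilde\rho'')$; chaining with $\rho \lqwS \tilde\rho'$ gives $\rho \lqwS f(\tilde\rho'')$ with $f(\tilde\rho'') \eqconf \rho''$, hence $\rho \lqw \rho''$.

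For the \wqo condition, I would represent each $\eqconf$-equivalence class by a tuple $(\alpha, a, S_*, M)$, where $\alpha \in \set{\triright,\bar\triright}$, $a \in \A$, $S_* = \Threads(d) \subseteq Q$ with $d$ the current datum, and $M$ is the finite multiset $\set{\Threads(e) \mid e \in \data(\Threads) \setminus \set d}$ whose entries lie in the finite alphabet $\subsets(Q) \setminus \set \emptyset$. Unfolding the definitions of $\eqconf$ and of $\lqwS$, the constraint $\gamma = \gamma'$ forces the renaming to send the current datum of one configuration to that of the other; consequently $\rho \lqw \rho'$ is equivalent to $\alpha = \alpha'$, $a = a'$, $S_* \subseteq S'_*$, and the existence of an injection $g : M \to M'$ with $m \subseteq g(m)$ for every $m \in M$.

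Given an infinite sequence $\rho_1, \rho_2, \dotsc$, pigeonhole on the finite set $\set{\triright,\bar\triright} \times \A \times \subsets(Q)$ extracts an infinite subsequence on which $(\alpha_n, a_n, S_{*,n})$ is constant. Since $(\subsets(Q) \setminus \set\emptyset, \subseteq)$ is a finite poset and hence a \wqo, \higmanslem implies that the corresponding sequence-embedding order is a \wqo. Linearizing each $M_n$ in any fixed canonical order and applying \higmanslem produces indices $i < j$ such that the linearization of $M_i$ embeds into that of $M_j$, which translates directly into the required injection $g$ and yields $\rho_i \lqw \rho_j$. The main obstacle is the encoding step: one must verify that the bijection component fixing the current datum genuinely decouples $S_*$ from $M$, so that the two can be handled independently by pigeonhole and by \higmanslem.
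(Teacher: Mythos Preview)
Your proposal is correct. The paper's proof differs only in the combinatorial lemma invoked at the final step: after the same pigeonhole extraction fixing $(\alpha,a,S_*)$, the paper encodes each $\Threads$ by a count vector $g_\Threads \in \Nz^{\subsets(Q)}$ with $g_\Threads(\+S)=|\{e : \Threads(e)=\+S\}|$ and applies \dicksonslem to obtain $i<j$ with $g_{\Threads_i}(\+S) \le g_{\Threads_j}(\+S)$ for every $\+S$; an injection on each profile class (sending $d_i$ to $d_j$ within the class $\+C_0$) then assembles into the required data-value injection. Your route via \higmanslem on linearized multisets over $(\subsets(Q),\subseteq)$ is equally valid and is essentially the mechanism the paper uses elsewhere for the majoring order (Proposition~\ref{prop:lqdom-wqo}) and, in spirit, for the ordered-data variant (Lemma~\ref{lem:ordered-wqo}). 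The Dickson argument is marginally more direct here and even yields a profile-\emph{preserving} injection ($\Threads_i(e)=\Threads_j(f(e))$ rather than merely $\subseteq$), while your Higman approach generalizes more smoothly when the base set of profiles carries a nontrivial order.
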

 \begin{proof}
    The fact that $\lqw$ is a \emph{quasi-order} (\ie,
    reflexive and transitive) is immediate from its definition. To
    show that it is a \emph{well}-quasi-order, suppose we have an
    infinite sequence of configurations $\rho_1\rho_2\rho_3\dotsb$.
    It is easy to see that it contains an infinite subsequence
    $\tau_1\tau_2\tau_3 \dotsb$ such that all its elements are of the
    form $\tup{i,\alpha_0,(a_0,d),\Threads}$ with
    \begin{iteMize}{$\bullet$}
    \item $\alpha_0$ and $a_0$ fixed, and
    \item $\Threads(d)=\+{C}_0$ fixed,
    \end{iteMize}
This is because we can  see each of these elements as a finite \emph{coloring}, and apply the pigeonhole principle on the infinite set $\set{\rho_i}_i$.

\newcommand{\size}[1]{|#1|}
    Consider then the function $g_\Threads:\subsets(Q)\to\Nz$, such that $g_\Threads(\+{S})=|\set{ d \mid \+{S}=\Threads(d)}|$ (we can think of $g_\Threads$ as a tuple of $(\Nz)^{\subsets(Q)}$). Assume the relation $\lqw^\dag$ defined as $\Threads \lqw^\dag \Threads'$ if{f} $g_\Threads(\+{S})\leq g_{\Threads'}(\+{S})$ for all    $\+{S}$. By Dickson's Lemma $\lqw^\dag$ is a \wqo, and then there are two $\tau_i=\tup{i',\alpha_0,(a_0,d_i),\Threads_i}$,
    $\tau_j=\tup{j',\alpha_0,(a_0,d_j),\Threads_j}$, $i < j$ such that $\Threads_i \lqw^\dag \Threads_j$. For each $\+{S}\subseteq Q$, there exists an
    injective mapping $f_{\+{S}} : \set{d \mid
    \Threads_i(d)=\+{S}} \to \set{d \mid \Threads_j(d)=\+{S}}$ such that $f_{\+S}(d_i)=d_j$,
    as the latter set is bigger than the former by $\lqw^\dag$. We define the injection $f: \data(\tau_i) \to \data (\tau_j)$ as the (disjoint) union of all $f_{\+{S}}$'s. The union is disjoint since for every data value $d$ and set of threads $\Threads$, there is a unique set $\+S$ such that $d \in g_{\Threads}(\+S)$. We then have that $\tau_i \eqconf f(\tau_i) \lqwS \tau_j$. Hence, $\tau_i \lqw \tau_j$.
  \end{proof}

    The core of this proof is centered in the following lemma.
    \begin{lemma}\label{lem:nodeconf-rdc}
$(\araconf,\araT)$       is \rdc with respect to $(\araconf,\lqw)$.
    \end{lemma}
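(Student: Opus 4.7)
The plan is to establish the slightly stronger statement that $(\araconf, \araT)$ is \rdc with respect to $\lqwS$ and then lift the result to $\lqw$: given $\rho'_1 \lqw \rho_1$ there exists $\rho''_1 \eqconf \rho_1$ with $\rho'_1 \lqwS \rho''_1$, and since $\araT$ is invariant under bijective renaming of data values (every choice of datum in $\rightarrowtriright$ or in $\opguess$ can be renamed), the step $\rho_1 \araT \rho_2$ transfers to some $\rho''_1 \araT \rho''_2$ with $\rho''_2 \eqconf \rho_2$. Applying the $\lqwS$-variant to this witness gives some $\rho'_2$ with $\rho'_2 \lqwS \rho''_2$, whence $\rho'_2 \lqw \rho_2$.

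So fix $\rho'_1 = \tup{i', \alpha, \gamma, \Threads'}$ with $\Threads' \subseteq \Threads$, $\rho_1 = \tup{i, \alpha, \gamma, \Threads}$, and $\rho_1 \araT \rho_2$. The key structural observation, to be checked rule-by-rule against Figure~\ref{fig:transition-ara-epsilon} together with the \opguess and \opspread clauses, is that every non-moving transition triggered by a thread $(q,d') \in \Threads$ produces a new thread set $\Threads_2$ with $\Threads_2 \supseteq \Threads \setminus \set{(q,d')}$: the triggering thread is the only one that may disappear, the rest are preserved, and some new threads are added.

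Given this observation, the case analysis on the transition is straightforward. If $\rho_1 \rightarrow_\eps \rho_2$ is triggered by $(q,d') \in \Threads$ and $(q,d') \notin \Threads'$, I take the reflexive witness $\rho'_2 = \rho'_1$; then $\Threads' \subseteq \Threads \setminus \set{(q,d')} \subseteq \Threads_2$ yields $\rho'_1 \lqwS \rho_2$. If on the other hand $(q,d') \in \Threads'$, I let $\rho'_1$ perform the very same transition with the same internal choices (same disjunct for $q_1 \lor q_2$, same guessed datum $e$ for $\opguess$, etc.); the produced thread set $\Threads'_2$ is then obtained from $\Threads'$ by the same recipe $\Threads_2$ is obtained from $\Threads$, and $\Threads'_2 \subseteq \Threads_2$ follows from $\Threads' \subseteq \Threads$. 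The moving case $\rho_1 \rightarrowtriright \rho_2$ is analogous: every thread of $\Threads$ is moving, hence so is every thread of $\Threads'$; using the freedom to pick any $\alpha'$ and $\gamma'$ in $\rightarrowtriright$, I let $\rho'_1$ advance with the \emph{same} $\alpha'$ and $\gamma'$, and the resulting set $\Threads'_\triright = \set{(q',d) \mid (q,d)\in \Threads',\, \delta(q)=\triright q'}$ is a subset of $\Threads_\triright$.

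The sub-case I expect to require the most care is \opspread, since both its applicability condition and its output depend on the whole configuration. For applicability in $\rho'_1$, every thread in $\Threads' \setminus \set{(q,d')}$ is moving or spread because the same holds for the superset $\Threads \setminus \set{(q,d')}$. For the output, the added threads are $\set{(q_1,d) \mid (q_2,d) \in \Threads' \setminus \set{(q,d')}}$, which by monotonicity is contained in $\set{(q_1,d) \mid (q_2,d) \in \Threads \setminus \set{(q,d')}}$, so the $\lqwS$-comparison is preserved. Beyond this systematic bookkeeping, no deeper idea is needed; the argument rests entirely on the monotone dependence of $\rightarrow_\eps$ and $\rightarrowtriright$ on the current thread set.
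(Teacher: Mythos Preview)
Your proof is correct and follows essentially the same approach as the paper: a rule-by-rule case analysis showing that if the triggering thread is absent from the smaller configuration one takes the reflexive witness, and otherwise one replays the same transition with the same nondeterministic choices. Your explicit reduction from $\lqw$ to $\lqwS$ via renaming-invariance of $\araT$ is exactly what the paper compresses into ``we can further assume that $\rho' \lqwS \rho$ without any loss of generality,'' and your uniform structural observation $\Threads_2 \supseteq \Threads \setminus \{(q,d')\}$ is a clean abstraction of what the paper verifies case by case.
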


    \begin{proof}
      We shall show that for all $\rho, \tau, \rho' \in \araconf$ such that $\rho \araT \tau$ and $\rho' \lqw \rho$, there is $\tau'$ such that $\tau' \lqw \tau$ and either $\rho' \araT \tau'$ or $\tau' = \rho'$. Since by definition of $\lqw$ we work modulo $\eqconf$, we can further assume that $\rho' \lqwS \rho$ without any loss of generality.
The proof is a simple      case analysis of the definitions for $\araT$. All      cases are treated alike, here we present the most      representative. Suppose first that $\rho \rightarrow_\eps \tau$, then one of the definition conditions      of $\rightarrow_\eps$ must apply. 

If Eq.~\eqref{def:righteps:3} of the definition of $\rightarrow_\eps$ (Fig.~\ref{fig:transition-ara-epsilon}) applies, let
      \[\rho = \tup{i,\alpha,(a,d),\set{(q,d)} \cup \Threads}
      \,\rightarrow_\eps\, \tau=\tup{i,\alpha,(a,d),\Threads}\]
      with $\delta(q) = \opeq$. Let $\rho'=\tup{i',\alpha,(a,d),\Threads'} \lqwS \rho$. If $(q,d)\in \Threads'$, we can then apply the same $\rightarrow_\eps$-transition obtaining $\rho \gqwS \rho' \rightarrow_\eps \tau' \lqwS \tau$. If there is no such $(q,d)$, we can safely take $\rho'=\tau'$ and check that $\tau' \lqwS \tau$.


      If Eq.~\eqref{def:righteps:6} applies, let
      \begin{gather*}
        \rho=\tup{i,\alpha,(a,d),\set{(q,d')} \cup \Threads}
        \,\rightarrow_\eps\, \tau=\tup{i,\alpha,(a,d),\set{(q',d)}
          \cup \Threads}
      \end{gather*}
      with $\rho \rightarrow_\eps \tau$ and $\delta(q) =\opset(q')$. Again let $\rho' \lqwS \rho$ containing $(q,d')\in \Threads'$. In this case we can apply the same      $\rightarrow_\eps$-transition arriving to $\tau'$ where $\tau' \lqwS \tau$. Otherwise, if $(q,d') \not\in \Threads'$, we take $\rho'=\tau'$ and then $\tau' \lqwS \tau$.

      If a \opguess is performed (Eq.~\eqref{def:righteps:guess}), let
      \begin{gather*}
        \rho = \tup{i,\alpha,(a,d),\set{(q,d')} \cup \Threads} \,
        \rightarrow_\eps  \tau = \tup{i,\alpha,(a,d),\set{(q',e)}
          \cup \Threads}
      \end{gather*}
      with $\delta(q) = \opguess(q')$. Let $\rho' =
      \tup{i',\alpha,(a,d),\Threads'} \lqwS \rho$. Suppose there is $(q,d') \in \Threads'$, then we then take a \opguess transition from $\rho'$ obtaining some $\tau'$ by guessing $e$ and hence $\tau' \lqwS \tau$. Otherwise, if $(q,d') \not\in \Threads'$, we take $\tau' = \rho'$ and check that $\tau' \lqwS \tau$.

      Finally, if a \opspread is performed (Eq.~\eqref{def:righteps:spread}),
      let
      \begin{gather*}
        \rho = \tup{i,\alpha,\gamma,\set{(q,d')} \cup \Threads} \,
        \rightarrow_\eps  \tau = \tup{i,\alpha,\gamma,\set{(q_1,d)
          \mid (q_2,d) \in \Threads} \cup \Threads}
      \end{gather*}
      with $\delta(q) =
      \opspread(q_2,q_1)$. Let $\rho' = \tup{i',\alpha,\gamma,\Threads'}
      \lqwS \rho$ and suppose there is $(q,d') \in \Threads'$ (otherwise $\tau' = \rho'$ works). We then take a \opspread instruction $\rho' \rightarrow_\eps \tau'$ and see that      $\tau' \lqwS \tau$, because any $(q_1,e)$ in $\tau'$      generated by the \opspread must come from $(q_2,e)$ of      $\rho'$, and hence there is some $(q_2,e)$ in $\rho$; now by      the \opspread applied on $\rho$, $(q_1,d')$ is in $\tau$. 

The      remaining cases of $\rightarrow_\eps$ are only easier.



      Finally, there can be a `moving' application of $\araT$. Suppose that we have
      \begin{gather*}
        \rho=\tup{i,\ttypeb,(a,d),\Threads} \rightarrowtriright
        \tau=\tup{i+1,\alpha_1,(a_1,d_1),\Threads_1} .
      \end{gather*}
%
     Let
      $\rho'=\tup{i',\ttypeb,(a,d),\Threads'} \lqwS \rho$.
      If $\rho'$ is such that $\rho' \lqwS \tau$, the relation is trivially compatible. Otherwise, we shall prove that there is $\tau'$ such that $\rho' \araT \tau'$ and $\tau' \lqw \tau$. Condition~\ref{ara:moving:i} of $\rightarrowtriright$ (\ie, that all states are moving) holds for $\rho'$, because all the states present in $\rho'$ are also in $\rho$ (by definition of    $\lqwS$) where the condition must hold. Then, we can apply the      $\rightarrowtriright$ transition to $\rho'$ and obtain      $\tau'$ of the form $\tup{i'+1,\alpha_1,(a_1,d_1),\Threads'_1}$. Notice      that we are taking $\alpha_1$, $a_1$ and $d_1$ exactly as in $\tau$,      and that $\Threads'_1$ is completely determined by the      $\rightarrowtriright$ transition from $\Threads'$. We only need to check that $\tau' \lqwS \tau$. Take any $(q,d')\in\Threads'_1$. There must be some $(q',d')\in\Threads'$ with $\delta(q')=\triright q$. Since $\Threads' \subseteq \Threads$, we also have $(q,d) \in \Threads_1$. Hence, $\Threads'_1 \subseteq \Threads_1$ and then $\tau' \lqwS \tau$.
    \end{proof}

We just showed that $(\araconf,\araT)$ is \emph{\rdc} with respect to $(\araconf,\lqw)$. The only missing ingredient to have decidability is the following, which is trivial.
\begin{lemma}\label{lem:ara-downward-closed}
The set of accepting configurations of $\araconf$ is downward closed with respect to $\lqw$.
\end{lemma}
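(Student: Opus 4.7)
The plan is to unfold the definitions. A configuration $\rho = \tup{i,\alpha,\gamma,\Threads}$ is accepting precisely when $\Threads = \emptyset$, so the set of accepting configurations is $\+A = \set{\tup{i,\alpha,\gamma,\emptyset} \mid i \in \Np,\, \alpha, \gamma \text{ arbitrary}}$. I need to show that whenever $\rho \in \+A$ and $\rho' \lqw \rho$, we have $\rho' \in \+A$.

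By definition $\rho' \lqw \rho$ means there exists $\rho'' \eqconf \rho$ with $\rho' \lqwS \rho''$. Since $\eqconf$ only renames data values and cannot create new threads, $\rho'' = \tup{i,\alpha,\gamma',\emptyset}$ for some $\gamma'$ (the empty set of threads stays empty under any bijection). Now unfolding $\lqwS$, write $\rho' = \tup{i',\alpha',\gamma'',\Threads'}$; the definition forces $\alpha' = \alpha$, $\gamma'' = \gamma'$, and most importantly $\Threads' \subseteq \emptyset$, hence $\Threads' = \emptyset$. Therefore $\rho' \in \+A$, as required.

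This is the entire argument; there is no real obstacle, since downward closure is essentially built into the fact that the ordering on thread sets used inside $\lqwS$ is plain set inclusion, and the accepting condition is the bottom element of that inclusion order. The only point worth making explicit is that passing through the $\eqconf$-quotient in the definition of $\lqw$ is harmless, because $\eqconf$ preserves the cardinality of the thread set.
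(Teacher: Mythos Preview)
Your proof is correct and follows the same approach as the paper, which simply states that the lemma ``is trivial'' without giving details; you have merely spelled out the obvious unfolding of the definitions that the paper leaves implicit.
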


We write $\araconf / {\eqconf}$ to the set of configurations modulo $\eqconf$, by keeping one representative  for every equivalence class. Note that the transition system $(\araconf / {\eqconf}, \araT)$ is effective. This is just a consequence of the fact that the $\araT$-image of any configuration has only a finite number of configurations modulo $\eqconf$, and representatives for every class are computable. Hence, we have that $(\araconf / {\eqconf},\lqw,\araT)$ verify conditions (1) and (2) from Proposition~\ref{prop:rdc-computable}. Finally, condition (3) holds since $(\araconf,\lqw)$  is computable.  We can then apply Lemma~\ref{lem:downward-closed-decidable}, obtaining that for a given $\anAut \in \ara(\opguess, \opspread)$, testing wether there exists a final configuration $\tau$ and an element $\rho$ in
\begin{gather*} \set{\tup{1,\alpha,(a,d_0),\set{(q_I,d_0)}}\mid \alpha \in
    \set{\ttypeb, \ttypenb }, a \in \A}
\end{gather*}
---for any fixed $d_0$--- such that $\rho \eqconf \rho' \araT^* \tau$ (for some $\rho'$) is decidable. Thus, we can decide the  emptiness problem and Theorem~\ref{thm:arags-decidable} follows.

\subsection{Ordered data}
\label{subsect:ara-guess-spread-order}
\newcommand{\lqwo}{<\!\!<}
\newcommand{\testineq}{\ensuremath{\mathsf{test}}}
\index{ARA(guess,spread,<)@$\ara(\opguess,\opspread,<)$}
We show here that the previous decidability result holds even if we add \emph{order} to the data domain.
Let $(\D,<)$ be a linear order, like for example the reals or the natural numbers with the standard ordering. Let us replace the instructions $\opeq$ and $\opneq$ with
\begin{align*}
  \delta(q) \coloneqq \; \dotsc \midd \testineq(>) \midd \testineq(<) \midd \testineq(=) \midd \testineq(\neq)
\end{align*} 
and let us call this model of automata $\ara(\opguess,\opspread,<)$. The semantics is as expected. $\testineq(<)$ verifies that the data value of the current position is less than the data value in the register, $\testineq(>)$ that is greater, and $\testineq(=)$ (resp.\ $\testineq(\neq)$) that both are (resp.\ are not) equal.
We modify accordingly $\rightarrow_\eps$, for $\rho = \tup{i,\alpha,(a,d),\set{(q,d')} \cup \Threads}$.
\begin{align}
    \rho &\rightarrow_\eps  \tup{i,\alpha,(a,d),\Threads} \quad  \text{ if } \delta(q) = \testineq(<) \text{ and }  d < d' \label{def:righteps:testineq<}\\
    \rho &\rightarrow_\eps  \tup{i,\alpha,(a,d),\Threads} \quad  \text{ if } \delta(q) = \testineq(>) \text{ and }  d > d' \label{def:righteps:testineq>}\\
    \rho &\rightarrow_\eps  \tup{i,\alpha,(a,d),\Threads} \quad  \text{ if } \delta(q) = \testineq(=) \text{ and }  d = d' \label{def:righteps:testineq=}\\
    \rho &\rightarrow_\eps  \tup{i,\alpha,(a,d),\Threads} \quad  \text{ if } \delta(q) = \testineq(\neq) \text{ and }  d \neq d' \label{def:righteps:testineq-neq}
\end{align}
All the remaining definitions are preserved. We can show that the emptiness problem for this extended model of automata is still decidable.

\begin{theorem}\label{thm:ara-guess-spread-order:decidable}
  The emptiness problem for $\ara(\opguess,\opspread,<)$ is decidable.
\end{theorem}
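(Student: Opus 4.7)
The plan is to reuse the WSTS framework of the proof of Theorem~\ref{thm:arags-decidable}, but with a well-quasi-order that respects the linear order on $\D$. Only the $\rightarrow_\eps$ transition is changed (the new $\testineq$ cases, equations \eqref{def:righteps:testineq<}--\eqref{def:righteps:testineq-neq}); all other transitions---$\opguess$, $\opspread$, moving, boolean---are untouched. The work is therefore to redesign $\eqconf$ and $\lqw$ so that (i)~they remain a \wqo, (ii)~the new order tests are reflexive-downward compatible, and (iii)~the $\opguess$ operation remains compatible as well.

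First, I would redefine $\rho \eqconf \rho'$ to hold iff there is an \emph{order-preserving} bijection $f : \data(\rho) \to \data(\rho')$ with $f(\rho)=\rho'$, and declare $\rho \lqw \rho'$ iff there is an order-preserving injection $f : \data(\rho) \to \data(\rho')$ sending the current datum of $\rho$ to that of $\rho'$, preserving the label and type, and with $\Threads_\rho(e) \subseteq \Threads_{\rho'}(f(e))$ for every $e \in \data(\rho)$. To re-prove the analogue of Lemma~\ref{lem:nodeconf-wqo}, I would encode each $\eqconf$-class as a word over the finite alphabet $\Sigma = \subsets(Q) \times \set{c,\bar c}$ obtained by listing the data values of $\data(\rho)$ in increasing order, each tagged with its thread-state set and a bit marking whether it is the current datum. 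Ordering $\Sigma$ pointwise with subset on the first coordinate and equality on the second (a \wqo because $\Sigma$ is finite), an application of \higmanslem replaces the Dickson-style argument used in the unordered case.

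The core step is the analogue of Lemma~\ref{lem:nodeconf-rdc}. All cases already treated there carry over verbatim, since $f$ still preserves equality and inequality of data values; in particular $\testineq(=)$ and $\testineq(\neq)$ behave exactly as $\opeq$ and $\opneq$. The new tests $\testineq(<)$ and $\testineq(>)$ are immediate from order-preservation of $f$: if the current $d$ and the stored $d'$ satisfy $d < d'$ in $\rho$, then $f(d) < f(d')$ in $\rho'$ and the same test succeeds.

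The case I expect to require the most care is $\opguess$: from $\rho'$ I must guess an $e'$ that realizes the \emph{same order type} w.r.t.\ $\data(\rho')$ as the guessed $e$ does w.r.t.\ $\data(\rho)$. In a discrete $\D$ the images $f(x_i), f(x_{i+1})$ of two adjacent elements of $\data(\rho)$ may be consecutive in $\D$, leaving no slot in that specific representative. I would resolve this by first passing to an $\eqconf$-equivalent $\rho''$ obtained via an order-preserving bijection that leaves arbitrary gaps in $\D$ between the images of $\data(\rho')$---always possible in any infinite linear order, since only the order type on the finite set $\data(\rho')$ matters---performing the $\opguess$ inside this stretched representative, and invoking the fact that $\lqw$ is taken modulo $\eqconf$ to conclude $\tau' \lqw \tau$. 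Finiteness of successors modulo $\eqconf$ is preserved because for any finite $\data(\rho)$ there are only finitely many order types for inserting one new value. With reflexive-downward compatibility established, Lemma~\ref{lem:ara-downward-closed} transfers unchanged, and Proposition~\ref{prop:rdc-computable} together with Lemma~\ref{lem:downward-closed-decidable} deliver decidability of emptiness exactly as in the proof of Theorem~\ref{thm:arags-decidable}.
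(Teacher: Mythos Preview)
Your proposal is correct and follows essentially the same route as the paper: an order-preserving equivalence $\eqconfo$, a refinement of $\lqw$ that respects $<$, a \wqo argument via Higman's Lemma on words over (essentially) $\subsets(Q)$ tagged with a current-datum marker, and the ``make space'' trick modulo $\eqconfo$ to handle $\opguess$. One point to add: the moving transition $\rightarrowtriright$ also introduces a fresh data value (the datum of the next position) and therefore needs exactly the same make-space-via-equivalence treatment you spell out for $\opguess$; the paper handles both cases together, and you should not list ``moving'' among the cases that carry over verbatim.
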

As in the proof in the previous Section~\ref{sec:emptiness-problem-ara-guess-spread}, we show that there is a \wqo ${\lqwo} \subseteq \araconf \times \araconf$ that is $\rdc$ with respect to $\araT$, such that the set of final states is $\lqwo$-downward closed. However, we need to be more careful when showing that we can always work modulo an equivalence relation.

\begin{definition}  A function $f$ is an \textbf{order-preserving bijection} on $D \subseteq \D$ if{f} it is a bijection on $\D$, and furthermore for every $\set{d,d'} \subseteq D$, if $d< d' $ then $f(d) < f(d')$.
\end{definition}

The following Lemma is straightforward from the definition just seen. 
\begin{lemma}\label{lem:make-space-in-ordered-bijection}
  Let $D \subseteq \D$, $|D| < \infty$. There exists an order-preserving bijection $f$ on $D$ such that 
  \begin{iteMize}{$\bullet$}
  \item for every $\set{d,d'} \subseteq D$ such that $d < d'$ there exists $\tilde d$ such that $f(d) < \tilde d < f(d')$,
  \item for every $d \in D$ there exists $\tilde d$ such that $f(d) < \tilde d$, and there exists $\tilde d$ such that $\tilde d < f(d)$.
  \end{iteMize}
\end{lemma}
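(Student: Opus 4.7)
The plan is to construct $f$ by sending $D$ to ``widely-spaced'' targets in $\D$, reserving spare elements as the required witnesses $\tilde d$, and then extending arbitrarily to a bijection on the complement. Concretely, I would enumerate $D = \{d_1 < \dotsb < d_n\}$ with $n = |D|$, and exploit the fact that $\D$ is infinite and linearly ordered to pick any $2n+1$ distinct elements of $\D$; once picked, they can be sorted into a strictly increasing chain $a_0 < a_1 < \dotsb < a_{2n}$. This is the only place in the argument that needs both infiniteness of $\D$ and linearity of~$<$.

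Next I would define $f$ on $D$ by $f(d_i) \coloneqq a_{2i-1}$ for $i \in [n]$, so that the odd-indexed $a_{2i-1}$ carry the image of $D$ while the even-indexed $a_{2i}$ act as the gap witnesses. Order-preservation on $D$ is immediate. To extend $f$ to a bijection on $\D$, note that both $\D \setminus D$ and $\D \setminus f(D)$ are the complements of $n$-element sets inside the infinite set $\D$, hence have the same cardinality; any bijection $g : \D \setminus D \to \D \setminus f(D)$ can be used to define $f$ on $\D \setminus D$. Then $f$ is a bijection on $\D$ and, restricted to~$D$, it is order-preserving, so $f$ is an order-preserving bijection on~$D$ in the sense of the preceding definition.

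Finally, I would verify the two bulleted properties. For the first, given $d, d' \in D$ with $d < d'$, write $d = d_i$, $d' = d_j$ with $i<j$; then $\tilde d \coloneqq a_{2i}$ satisfies $f(d) = a_{2i-1} < a_{2i} < a_{2j-1} = f(d')$. For the second, given $d = d_i \in D$, the element $a_{2i-2}$ lies strictly below $f(d_i) = a_{2i-1}$ and $a_{2i}$ lies strictly above it. I do not foresee any substantive obstacle: the lemma is essentially a padding argument, and the only subtlety is the initial observation that an infinite linear order contains a strictly increasing finite chain of any desired length, which follows simply by sorting any $2n+1$ chosen elements of~$\D$.
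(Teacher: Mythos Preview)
Your proposal is correct. The paper does not give a proof of this lemma at all; it simply states that the lemma ``is straightforward from the definition just seen,'' so your padding construction is exactly the kind of argument the author has in mind and there is nothing to compare.
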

\newcommand{\eqconfo}{\eqconf_{\mathit{ord}}}
\begin{definition}[$\eqconfo$]
Let $\rho, \rho'$ be two configurations.  We define  $\rho \eqconfo \rho'$ if{f} $f(\rho)=\rho'$ for some order-preserving bijection $f$ on $\data(\rho)$.
\end{definition}
\begin{remark}\label{rem:only-one-datum-added-in-one-step}
  If $\rho \araT \rho'$ then there exists $\hat d \in \D$ such that $\set{\hat d} \cup \data(\rho) \subseteq \data(\rho')$. This is a simple consequence of the definition of $\araT$.
\end{remark}

\newcommand{\araTord}{\araT_\textit{ord}}
Let us define a version of $\araT$ that works modulo $\eqconfo$, and let us call it $\araTord$.
\begin{definition}
Let $\rho_1, \rho_2$ be two configurations. We define $\rho_1 \araTord \rho_2$ if{f} $\rho'_1 \araT \rho'_2$ for some $\rho'_1 \eqconfo \rho_1$ and $\rho'_2 \eqconfo \rho_2$.
\end{definition}

In the previous section, when we had that $\eqconf$ was simply a bijection and we could not test any linear order $<$, it was clear that we could work modulo $\eqconf$. However, here we are working modulo a more complex relation $\eqconfo$. In the next lemma we show that working with $\araT$ or working with $\araTord$ is equivalent.

\begin{lemma}
  If $\rho_1 \araTord \dotsb \araTord \rho_n$, then $\rho'_1 \araT \dotsb \araT \rho'_n$, with $\rho'_i \eqconfo \rho_i$ for every $i$.
\end{lemma}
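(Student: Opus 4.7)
I would proceed by induction on $n$. The base case $n=1$ is immediate: take $\rho'_1 = \rho_1$.

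For the inductive step, assume the result for sequences of length $n$ and consider $\rho_1 \araTord \dotsb \araTord \rho_n \araTord \rho_{n+1}$. Unfolding the definition of $\araTord$ on the last step yields $\sigma_n \eqconfo \rho_n$ and $\sigma_{n+1} \eqconfo \rho_{n+1}$ with $\sigma_n \araT \sigma_{n+1}$, and by Remark~\ref{rem:only-one-datum-added-in-one-step} there is at most one new datum $\hat d$ such that $\data(\sigma_{n+1}) = \data(\sigma_n) \cup \set{\hat d}$. The inductive hypothesis gives a chain $\rho'_1 \araT \dotsb \araT \rho'_n$ with $\rho'_i \eqconfo \rho_i$, and composing equivalences yields an order-preserving bijection $h$ on $\data(\sigma_n)$ with $h(\sigma_n) = \rho'_n$. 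The task reduces to extending $h$ by choosing $h(\hat d) = d^* \in \D$ so that $d^*$ stands in the same order relation to $\data(\rho'_n)$ as $\hat d$ does to $\data(\sigma_n)$; if this succeeds, defining $\rho'_{n+1} \coloneqq h(\sigma_{n+1})$ gives $\rho'_n \araT \rho'_{n+1}$ (since $\araT$ only tests equalities and inequalities, which $h$ preserves) and $\rho'_{n+1} \eqconfo \sigma_{n+1} \eqconfo \rho_{n+1}$.

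The main obstacle is that such a $d^*$ need not exist in $\D$: if $\data(\rho'_n)$ occupies ``consecutive'' positions in $\D$ (think $\D=\N$), there may be no room between two adjacent values of $\data(\rho'_n)$, nor beyond its maximum or minimum. To resolve this, before performing the final transition I would first apply Lemma~\ref{lem:make-space-in-ordered-bijection} to $D \coloneqq \data(\rho'_n)$ to obtain an order-preserving bijection $f$ on $\data(\rho'_n)$ that leaves room between every pair of consecutive elements and beyond the extremes. The crucial observation, exploiting the monotonicity of data sets along $\araT$ (again Remark~\ref{rem:only-one-datum-added-in-one-step}), is that $\data(\rho'_1) \subseteq \dotsb \subseteq \data(\rho'_n)$, so $f$ is automatically order-preserving on each $\data(\rho'_i)$. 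Hence $f$ can be applied uniformly to the entire chain, giving $f(\rho'_1) \araT \dotsb \araT f(\rho'_n)$ with each $f(\rho'_i) \eqconfo \rho'_i \eqconfo \rho_i$, and now $f(\rho'_n)$ has the required room.

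Finally, replacing $\rho'_i$ by $f(\rho'_i)$ throughout, I would extend the order-preserving bijection $h$ (suitably updated so that $h(\sigma_n) = f(\rho'_n)$) to $\data(\sigma_{n+1})$ by sending $\hat d$ to some $d^* \in \D \setminus \data(f(\rho'_n))$ placed in the correct interval — guaranteed to exist by the spreading step. Then $\rho'_{n+1} \coloneqq h(\sigma_{n+1})$ closes the induction. This establishes the lemma; in particular, it justifies reasoning with $\araTord$ in place of $\araT$ modulo $\eqconfo$ in the order-enriched setting.
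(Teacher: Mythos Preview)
Your overall strategy matches the paper's: induct on $n$, unfold the last $\araTord$ step, use the space-making Lemma~\ref{lem:make-space-in-ordered-bijection} to renormalize the already-built $\araT$ chain, and then insert a fresh datum $d^*$ in the correct interval. The problem lies in the ``crucial observation'' you invoke to justify applying $f$ only to $D = \data(\rho'_n)$.

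The monotonicity $\data(\rho'_1) \subseteq \dotsb \subseteq \data(\rho'_n)$ does not hold. Remark~\ref{rem:only-one-datum-added-in-one-step} as printed has the inclusion the wrong way round; what is actually true is that at most one datum is \emph{added} per step, $\data(\rho') \subseteq \set{\hat d} \cup \data(\rho)$, whereas data values can disappear. For instance the transitions $\testineq(\neq)$, $\odot?$, $a$, $\bar a$, $\opset$, $\opguess$ each delete a thread $(q,d')$ and replace it by nothing or by a thread with a different datum; if no other thread carries $d'$ and $d'$ is not the current datum, then $d'$ leaves the configuration. Hence some earlier $\data(\rho'_i)$ may contain a value absent from $\data(\rho'_n)$. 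Your $f$, guaranteed order-preserving only on $\data(\rho'_n)$, is free to permute such values, so neither $f(\rho'_i) \eqconfo \rho'_i$ nor (since $\araT$ in the ordered model tests $<$ and $>$) $f(\rho'_i) \araT f(\rho'_{i+1})$ is justified.

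The fix is exactly what the paper does: apply Lemma~\ref{lem:make-space-in-ordered-bijection} to the full union $\bigcup_{i} \data(\rho'_i)$ rather than to the last set alone. Every $\data(\rho'_i)$ is then a subset of the domain on which $f$ is order-preserving, so the whole chain $f(\rho'_1) \araT \dotsb \araT f(\rho'_n)$ survives with $f(\rho'_i) \eqconfo \rho_i$, and room for $d^*$ is available at $f(\rho'_n)$ as you need. With this single change your argument goes through.
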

\begin{proof}
  The case $n=1$ is trivial. Otherwise, if $n>1$, we have $\rho_1 \araTord \dotsb \araTord \rho_{n-1} \araTord \rho_n$. Then, by inductive hypothesis, we obtain $\rho'_1 \araT \dotsb \araT \rho'_{n-1}$ and $\rho''_{n-1} \araT \rho''_n$ with $\rho'_i \eqconfo \rho_i$ for every $i \in \set{1, \dotsc,  n-1}$,  and $\rho''_{j} \eqconfo \rho_{j}$ for every $j \in \set{n-1, n}$. Let $g$ be the witnessing bijection such that $g(\rho''_{n-1})=\rho'_{n-1}$, and let us assume that $\set{\hat d} \cup \data(\rho''_{n-1}) \subseteq \data(\rho''_n)$ by Remark~\ref{rem:only-one-datum-added-in-one-step}.

Let $f$ be an order-preserving bijection on $\bigcup_{i \leq n-1}\data(\rho'_i)$ as in Lemma~\ref{lem:make-space-in-ordered-bijection}. We can then pick a data value $\tilde d$ such that
\begin{iteMize}{$\bullet$}
\item for every $d > \hat d$ with $d \in \data(\rho''_{n-1})$, $f(g(d)) > \tilde d$, and
\item for every $d < \hat d$ with $d \in \data(\rho''_{n-1})$, $f(g(d)) < \tilde d$.
\end{iteMize}
Let $h \coloneqq (g \circ f) [\hat d \mapsto \tilde d]$. We then have
\begin{iteMize}{$\bullet$}
\item $h(\rho''_{n-1}) \araT h(\rho''_n)$,
\item $f(\rho'_1) \araT \dotsb \araT f(\rho'_{n-1})$,
\item $f(\rho'_{n-1}) = h(\rho''_{n-1})$.
\end{iteMize}
In other words, $f(\rho'_1) \araT \dotsb \araT f(\rho'_{n-1}) \araT h(\rho''_n)$, with $h(\rho''_n) \eqconfo \rho_n$ and  $f(\rho'_i) \eqconfo \rho_i$ for every $i \leq n-1$.
\end{proof}

Now that we proved that we can work modulo $\eqconfo$, we show that we can decide if we can reach an accepting configuration by means of $\araTord$, by introducing some suitable ordering $\lqwo$ and showing the following lemmas.

\begin{lemma}\label{lem:ordered-wqo}
  $(\araconf,\lqwo)$ is a well-quasi-order.
\end{lemma}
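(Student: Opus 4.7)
The plan is to realize $\lqwo$ as the Higman embedding order on a suitable word encoding of configurations over a finite alphabet.

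First, I will fix the definition of $\lqwo$: say that
$\tup{i,\alpha,(a,d),\Threads} \lqwo \tup{i',\alpha',(a',d'),\Threads'}$ iff $\alpha=\alpha'$, $a=a'$, and there is a strictly order-preserving injection $f:\data(\rho)\to\data(\rho')$ with $f(d)=d'$ and $\Threads(\bar d)\subseteq\Threads'(f(\bar d))$ for every $\bar d\in\data(\rho)$. Equivalently, $\rho\lqwo\rho'$ if{f} there exists $\rho''\eqconfo\rho'$ with $\rho\lqwS\rho''$. Reflexivity and transitivity are immediate from the closure of strictly monotone injections and of subset inclusion under composition.

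Next, I will encode each configuration as a finite word. Fix the finite alphabet $\Sigma_Q \coloneqq \subsets(Q)\times\set{\star,\diamond}$, quasi-ordered by $(S,m)\leq_\Sigma(S',m')$ if{f} $S\subseteq S'$ and $m=m'$. Since $\Sigma_Q$ is finite, $\leq_\Sigma$ is a \wqo. Given a configuration $\rho=\tup{i,\alpha,(a,d),\Threads}$, enumerate $\data(\rho)=\set{d_1<\dotsb<d_n}$ and define
\[
    w(\rho) \coloneqq \ell_1\dotsb\ell_n \in \Sigma_Q^*,
    \quad
    \ell_j \coloneqq \bigl(\Threads(d_j),\, m_j\bigr),
    \quad
    m_j \coloneqq \star \text{ if } d_j=d, \text{ else } \diamond.
\]
Because $d\in\data(\rho)$, each $w(\rho)$ has exactly one occurrence of the $\star$-marker.

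The core observation is that $\rho\lqwo\rho'$ holds if{f} $\alpha=\alpha'$, $a=a'$, and $w(\rho)\lqemb w(\rho')$ in the Higman embedding over $(\Sigma_Q,\leq_\Sigma)$. The right-to-left direction: any Higman embedding must map each $\star$-position to a $\star$-position, so the unique $\star$ of $w(\rho)$ goes to the unique $\star$ of $w(\rho')$, which pins $d\mapsto d'$; strict monotonicity is built into Higman's embedding indices; and subset inclusion on thread sets is the first coordinate of $\leq_\Sigma$. The left-to-right direction simply reads off the word indices from the order-preserving injection $f$.

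Finally, given any infinite sequence $\rho_1\rho_2\dotsb$, since the header pairs $(\alpha,a)$ range over the finite set $\set{\triright,\bar\triright}\times\A$, we pass to an infinite subsequence with a common header by the pigeonhole principle. By Higman's Lemma applied to $(\Sigma_Q,\leq_\Sigma)$, the words $w(\rho_{k_1}),w(\rho_{k_2}),\dotsc$ are \wqo, so some $k_i<k_j$ satisfy $w(\rho_{k_i})\lqemb w(\rho_{k_j})$, yielding $\rho_{k_i}\lqwo\rho_{k_j}$, and hence $\rho_i\lqwo\rho_j$ for some $i<j$ in the original sequence. Thus $(\araconf,\lqwo)$ is a \wqo.

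The one delicate point to check carefully will be the equivalence between $\lqwo$ and the Higman embedding on $w(\rho)$; the $\star/\diamond$ marker is precisely what forces the current datum to be matched correctly under the embedding, and without it one would only obtain a weaker ordering that does not transfer the compatibility arguments from the unordered case.
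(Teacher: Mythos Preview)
Your proposal is correct and takes essentially the same approach as the paper: encode each configuration as a finite word over a finite alphabet (ordering data values, recording the thread set at each value, and marking the current datum), then invoke Higman's Lemma after pigeonholing on the header $(\alpha,a)$. The only cosmetic difference is that the paper uses the alphabet $\subsets(Q\cup\{\star\})$ with \emph{equality} on letters (via Corollary~\ref{cor:higman-lemma}), obtaining a refinement of $\lqwo$ that suffices for the wqo conclusion, whereas you use $\subsets(Q)\times\{\star,\diamond\}$ with the componentwise order $\leq_\Sigma$, which characterizes $\lqwo$ exactly; both routes are equally valid and equally short.
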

\begin{lemma}\label{lem:ordered-rdc}
  $(\araconf,\lqwo)$ is \rdc with respect to $\araTord$.
\end{lemma}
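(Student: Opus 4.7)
The plan is to mimic the proof of Lemma~\ref{lem:nodeconf-rdc}, extending the case analysis to the ordered tests and taking care to preserve the linear order when introducing fresh data values. I will assume that $\lqwo$ is defined as $\lqwS$ modulo $\eqconfo$, i.e.\ $\rho \lqwo \rho'$ iff there is $\rho'' \eqconfo \rho'$ with $\rho \lqwS \rho''$. Fix then $\rho' \lqwo \rho$ and $\rho \araTord \tau$. Since both $\araTord$ and $\lqwo$ are defined modulo $\eqconfo$, we may freely replace $\rho$, $\rho'$ and $\tau$ by $\eqconfo$-equivalent representatives, and in particular assume $\rho' \lqwS \rho$ and $\rho \araT \tau$. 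The goal is to exhibit $\tau'$ with $\rho' \araTord \tau'$ (or $\tau' = \rho'$) and $\tau' \lqwo \tau$.

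For all the non-moving cases that do not involve a fresh datum, namely the boolean combinations, $\opset$, $\opeq$, $\opneq$, $\opspread$, and $\beta?$ / letter-test clauses, the argument is identical to the one in Lemma~\ref{lem:nodeconf-rdc}: if the triggering thread $(q,d')$ is in $\Threads'$ we can replay the step from $\rho'$, and otherwise we simply take $\tau'=\rho'$. The two genuinely new clauses are $\testineq(<)$ and $\testineq(>)$; for these, observe that $\rho' \lqwS \rho$ shares the same current datum $d$ with $\rho$, and if $(q,d') \in \Threads'$ the register datum $d'$ is likewise identical to the one used in $\rho$, so the inequality test holds in $\rho'$ exactly when it holds in $\rho$.

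The first nontrivial case is $\opguess$. Here $\rho$ produces a fresh $e \in \D$ and a new thread $(q',e)$. I will pick $e' \in \D$ for $\rho'$ whose position relative to $\data(\rho')$ mirrors the position of $e$ relative to $\data(\rho)$: concretely, determine the cut of $e$ in the ordered finite set $\data(\rho) \cup \{e\}$, and use Lemma~\ref{lem:make-space-in-ordered-bijection} (applied after a preliminary $\eqconfo$-renaming of $\rho'$ that makes room in the corresponding cut of $\data(\rho')$) to produce $e'$ in the analogous cut of $\data(\rho') \cup \{e'\}$. The resulting $\tau'$, obtained by guessing $e'$ from $\rho'$, then satisfies $\tau' \lqwS \tau$ up to the order-preserving bijection sending $e' \mapsto e$ and extending the inclusion $\data(\rho') \hookrightarrow \data(\rho)$, hence $\tau' \lqwo \tau$. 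The moving transition $\rightarrowtriright$ is treated the same way: the next datum $d_1$ read on the word is a fresh element with respect to the current configuration's data, and an appropriate $\eqconfo$-renaming of $\rho'$ (again via Lemma~\ref{lem:make-space-in-ordered-bijection}) places the corresponding new datum in the correct cut; condition~\eqref{ara:moving:i} transfers from $\rho$ to $\rho'$ because $\Threads' \subseteq \Threads$, and $\Threads'_\triright \subseteq \Threads_\triright$ by exactly the argument of Lemma~\ref{lem:nodeconf-rdc}.

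The main obstacle is precisely this ``making room'' bookkeeping for the fresh-datum cases: the ordering forces us to match not only equalities between data values (as in the unordered proof) but also the order type of the datum that the step is about to introduce. The density-like property packaged in Lemma~\ref{lem:make-space-in-ordered-bijection} is what makes this possible, and the reason we chose to phrase $\araTord$ and $\lqwo$ modulo $\eqconfo$ rather than strictly is exactly so that this renaming freedom is available at each simulation step.
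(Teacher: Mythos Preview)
Your proof is correct and follows essentially the same approach as the paper: both arguments hinge on working modulo $\eqconfo$ and invoking Lemma~\ref{lem:make-space-in-ordered-bijection} to handle the fresh-datum cases ($\opguess$ and $\rightarrowtriright$), while the remaining clauses go through exactly as in Lemma~\ref{lem:nodeconf-rdc}. One small observation: having reduced to $\rho' \lqwS \rho$ at the outset (a clean move the paper does not make explicitly), you actually have $\data(\rho') \subseteq \data(\rho)$, so for $\opguess$ and the moving step you can simply reuse the \emph{same} value $e$ (resp.\ $d_1$) that $\rho$ used and obtain $\tau' \lqwS \tau$ directly---the ``making room'' renaming of $\rho'$ is then unnecessary, though not incorrect.
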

\begin{lemma}\label{lem:ordered-accepting-downward}
The set of accepting configurations of $\araconf$ is downward closed with respect to $\lqwo$.
\end{lemma}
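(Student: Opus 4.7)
The plan is to mirror exactly the reasoning used for Lemma~\ref{lem:ara-downward-closed}, observing that the ordered refinement $\lqwo$ only affects how data values of configurations are related (through order-preserving bijections instead of arbitrary bijections) and not the structural component that involves the thread set. The key observation is that accepting configurations are characterized by having an empty thread set $\Threads = \emptyset$, and this property is invariant under any of the manipulations encoded in $\lqwo$.

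Concretely, unfold the definition of $\lqwo$: if $\rho \lqwo \tau$ with $\tau$ accepting, then there exists $\tau' \eqconfo \tau$ such that $\rho \lqwS \tau'$. Since $\eqconfo$ is realised by an order-preserving bijection applied pointwise to $\tau$, and $\tau = \tup{i,\alpha,\gamma,\emptyset}$, we get $\tau' = \tup{i,\alpha,\gamma',\emptyset}$ for some $\gamma'$ obtained by relabelling the datum of $\gamma$. Then $\rho \lqwS \tau'$ forces $\rho = \tup{i'',\alpha,\gamma',\Threads}$ with $\Threads \subseteq \emptyset$, hence $\Threads = \emptyset$ and $\rho$ is itself accepting.

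There is no real obstacle here: the lemma is a direct consequence of the fact that the thread-set part of $\lqwo$ is given by inclusion $\Threads \subseteq \Threads'$ (unaffected by the order on $\D$), together with the syntactic definition of ``accepting'' as ``empty thread set''. The only point worth stating explicitly is that working modulo $\eqconfo$ rather than $\eqconf$ does not alter the shape of the thread set, only the identities of the data values it mentions, so the argument of Lemma~\ref{lem:ara-downward-closed} transfers verbatim.
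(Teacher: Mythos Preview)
Your argument is correct in substance, and the conclusion follows for the reason you give: an accepting configuration has $\Threads = \emptyset$, and this is preserved downward along $\lqwS$ and invariant under $\eqconfo$. One small inaccuracy: when you unfold $\lqwo$ you write ``there exists $\tau' \eqconfo \tau$ such that $\rho \lqwS \tau'$'', but the paper's definition of $\lqwo$ allows \emph{both} sides to be replaced modulo $\eqconfo$, i.e.\ $\rho' \lqwS \tau'$ with $\rho' \eqconfo \rho$ and $\tau' \eqconfo \tau$. This does not harm your argument (emptiness of $\Threads$ transfers from $\tau$ to $\tau'$ to $\rho'$ to $\rho$ just as well), but you should state the unfolding accurately.

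The paper's own proof is shorter and slightly different in packaging: rather than re-running the argument of Lemma~\ref{lem:ara-downward-closed}, it simply observes that $\lqwo \subseteq \lqw$ (since every order-preserving bijection is in particular a bijection, so $\eqconfo \subseteq \eqconf$), and then downward closure under $\lqw$ immediately implies downward closure under the finer relation $\lqwo$. Your direct unfolding and the paper's inclusion argument amount to the same triviality; the paper's route just avoids repeating any details.
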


We next define the order $\lqwo$ and show that the aforementioned lemmas are valid. In the same spirit as before, $\lqwo$ is defined as $\lqwS$  modulo  $\eqconfo$.
\begin{definition}[$\lqwo$] $\rho_1 \lqwo \rho_2$ if{f} $\rho'_1 \lqwS \rho'_2$ for some $\rho'_1 \eqconfo \rho_1$ and $\rho'_2 \eqconfo \rho_2$.
\end{definition}


\newcommand{\abs}{\mathit{abs}}
To prove Lemma~\ref{lem:ordered-wqo}, given a configuration $\rho = \tup{i_0,\alpha,(a,d),\Threads}$, with $\data(\rho) = \set{d_1 < \dotsb < d_n}$ we define
\begin{align*}
\abs(d_i) &=\Threads(d_j) \cup \set{\star \mid d_j = d} \subseteq Q \cup \set{\star}\\
\abs(\rho) &= \abs(d_1), \dotsc, \abs(d_n) \in (\subsets(Q \cup \set{\star}))^*
\end{align*}
where $\star \not\in Q$ is to denote that the data value is the one of the current configuration.

  \begin{proof}[Proof of Lemma~\ref{lem:ordered-wqo}]
    This is a consequence of \higmanslem stated as in Corollary~\ref{cor:higman-lemma}. As stated above, we can see each configuration $\rho = (i,\alpha,(a,d),\Threads)$ as a word over $(\subsets(Q\cup\set{\star}))^*$. As shown in Lemma~\ref{lem:nodeconf-wqo} if there is an infinite sequence, there is an infinite subsequence $\rho_1, \rho_2, \dotsc$, with the same type $\alpha$ and letter $a$. Then for the infinite sequence $\abs(\rho_1), \abs(\rho_2), \dotsc$,  Corollary~\ref{cor:higman-lemma} tells us that there are $i < j$ such that $\abs(\rho_i)$ is a substring of $\abs(\rho_j)$. This implies that they are in the $\lqwo$ relation. 
  \end{proof}
  \begin{proof}[Proof of Lemma~\ref{lem:ordered-rdc}]
Note that although $\lqwo$ is a more restricted \wqo, for all the non-moving cases in which the register is not modified (that is, all except $\opguess$, $\opspread$, and $\opset$), the $\rightarrow_\eps$ transition continues to be \rdc. This is because for any $\tau \lqwo \rho \rightarrow_\epsilon \rho'$, $\rho=\tup{i,\alpha,\gamma,\Threads}$ and $\rho'=\tup{i',\alpha',\gamma',\Threads'}$ are  similar in the following sense. Firstly $\data(\rho)=\data(\rho')$, and moreover the only difference between $\rho$ and $\rho'$ is that $\Threads'$ is the result of removing some thread $(q,d)$  from $\Threads$ and inserting another one $(q',d)$ with the same data value $d$. This kind of operation is compatible, since $\tau$ can perform the same operation $\tau \rightarrow_\eps \tau'$ on the data value $d'$, supposing that $d'$ is the preimage of $d$ given by the $\lqwo$ ordering. In this case, $\tau' \lqwo \rho'$.  Otherwise, if there is no preimage of $d$, then $\tau \lqwo \rho'$. The compatibility of $\opspread$ is shown equivalently.

Regarding the $\opset$ instruction, we see that the operation consists in removing some $(q,d)$ from $\Threads$ and inserting  some $(q',d_0)$ with $d_0$ the datum of the current configuration. This is downwards compatible since $\tau$ can perform the same operation on the configuration's data value, which is necessarily the preimage of $d_0$.

For the remaining two cases (\opguess and $\triright$) we rely on the premise that we work modulo $\eqconfo$. The idea is that we can always assume that we have enough data values to choose from in between the existing ones. That is, for every pair of data values $d<d'$ in a configuration, there is always one in between. We can always assume this since otherwise we can apply a bijection as the one described by Lemma~\ref{lem:make-space-in-ordered-bijection} to obtain this property. Thus, at each point where we need to guess a data value (as a consequence of a $\opguess(q)$ or a $\triright q$ instruction) we will have no problem in performing a symmetric action, preserving the embedding relation. 

More concretely, suppose the execution of a transition  $\delta(q)=\opguess(q')$ on a thread $(q,d_j)$ of configuration $\rho$ with $\data(\rho)=\set{d_1 < \dotsb < d_n}$ guesses a data value $d$ with $d_i<d<d_{i+1}$. Then, for any configuration $\tau \lqwo \rho$ with $\data(\tau)=\set{e_1< \dotsb < e_m}$ and the property just described, there must be an order-preserving injection $f : \data(\tau) \to \data(\rho)$ with $f(\tau) \lqwS \rho$. If $\tau$ contains a thread $(q,e_{j'})$ with $f(e_{j'})=d_j$ the operation is simulated by guessing a data value $e$ such that $e>e_\ell$ for all $e_\ell$ such that $f(e_\ell) \leq d_i$ and $e<e_k$ for all $e_k$ such that $f(e_k) > d_i$. Such data value $e$ exists as explained before. The \rdc compatibility of a $\delta(q)=\triright q'$ instruction is shown in an analogous fashion.
  \end{proof}
  \begin{proof}[Proof of Lemma~\ref{lem:ordered-accepting-downward}]
Given that $\lqwo$ is a subset of $\lqw$, and that by Lemma~\ref{lem:ara-downward-closed} the set of accepting configurations is $\lqw$-downward closed, it follows that this set is also $\lqwo$-downward closed.
  \end{proof}

Finally, we should note that $(\araconf / {\eqconfo}, \araTord)$ is also finitely branching and effective. As in the proof of Section~\ref{sec:emptiness-problem-ara-guess-spread}, by Lemmas~\ref{lem:ordered-wqo}, \ref{lem:ordered-rdc} and \ref{lem:ordered-accepting-downward} we have that all the conditions of Proposition~\ref{prop:rdc-computable} are met and by Lemma~\ref{lem:downward-closed-decidable} we conclude that the emptiness problem for $\ara(\opguess,\opspread,<)$ is decidable.


\begin{remark}
Notice that this proof works independently of the particular ordering of $(\D,<)$. It could be dense or discrete, contain accumulation points, be open or closed, etc. In some sense, this automata model is blind to these kind of properties. If there is an accepting run on $(\D,<)$ then there is an accepting run on $(\D,<')$ for any linear order $<'$.
\end{remark}

\begin{open}
It is perhaps possible that these results can be extended  to prove decidability when $(\D,<)$ is a tree-like \emph{partial order}, this time making use of Kruskal's tree theorem \cite{Kr60} instead of \higmanslem. We leave this issue as an open question.
\end{open}

\begin{remark}[constants]
One can also extend this model with a finite number of constants $\set{c_1, \dotsc, c_n} \subseteq \D$. In this case, we extend the transitions with the possibility of testing that the data value  stored in the register is (or is not) equal to $c_i$, for every $i$. In the proof, it suffices to modify $\eqconfo$ to take into account every constant $c_i$. In this case we define that $\rho \eqconfo \tau$ if{f} $f(\rho) = \rho'$ for some order-preserving bijection $f$ on $\data(\rho) \cup \set{c_1, \dotsc, c_n}$ such that $f(c_i)=c_i$ for every $i$.  In this case Lemma~\ref{lem:make-space-in-ordered-bijection} does not hold anymore, as there could be finitely many elements in between two constants. This is however an easily surmountable obstacle, by adapting Lemma~\ref{lem:make-space-in-ordered-bijection} to work separately on the $n+1$ intervals defined by $c_1, \dotsc, c_n$.  Suppose $c_1 < \dotsb < c_n$. Without any loss of generality we can assume that between $c_i$ and $c_{i+1}$ there are infinitely many data values, or none (we can always add some constants to ensure this). Then,
for every infinite interval $[c_i,c_{i+1}]$, we will have a lemma like Lemma~\ref{lem:make-space-in-ordered-bijection} that we can apply separately.
\end{remark}

\subsection{Timed automata}
\label{sec:ordered-reg-aut-VS-timed-automata}
\index{Timed automata}
Our investigation on register automata also yields new results on the class of timed automata. An \emph{alternating 1-clock timed automaton} is an automaton that runs over \emph{timed words}. A timed word is a finite sequence of \emph{events}. Each event carries a symbol from a finite alphabet and a \emph{timestamp} indicating the quantity of time elapsed from the first event of the word. A timed word can hence be seen as a data word over the rational numbers, whose data values are  strictly increasing. The automaton has alternating control and contains one \emph{clock} to measure the lapse of time between two events (that is, the difference between the data of two positions of the data word). It can reset the clock, or test whether the clock contains a number equal, less or greater than a constant, from some finite set of constants. For more details on this automaton we refer the reader to \cite{AD94}.

Register automata over ordered domains  have a strong connection with timed automata. The work in \cite{FHL10} shows that the problems of nonemptiness, language inclusion, language equivalence and universality are equi\-va\-lent---modulo an \exptime reduction---for timed automata and register automata over a linear order. That is, any of these problems for register automata can be reduced  to the same problem on timed automata, preserving the number of registers equal to the number of clocks, and the mode of computation (nondeterministic, alternating). And in turn, any of these problems for timed automata can also be reduced to a similar problem on register automata over a linear order. We argue that this is also true when the automata are equipped with $\opguess$ and $\opspread$.

Consider an extension of 1-clock alternating timed automata, with  $\opspread$ and $\opguess$, where
\begin{iteMize}{$\bullet$}
\item the operator $\opspread(q,q')$ works in the same way as for register automata, duplicating all threads with state $q$ as threads with state $q'$, and
\item the $\opguess(q)$ operator resets the clock to any value, non deterministically chosen, and continues the execution with state $q$.
\end{iteMize}

\noindent The coding technique of \cite{FHL10} can be adapted to deal with the guessing of a clock (the $\opspread$ operator being trivially compatible), and one can show the following statement.

\begin{remark}\label{lem:alt1clock-guess-spread-reduces-ara}
  The emptiness problem for alternating 1-clock timed automata extended with $\opguess$ and $\opspread$
reduces to the emptiness problem for the class \mbox{$\ara(\opguess,\opspread,<)$}.
\end{remark}

Hence, by Remark~\ref{lem:alt1clock-guess-spread-reduces-ara} cum Theorem~\ref{thm:ara-guess-spread-order:decidable} we obtain the following.
\begin{remark}
  The emptiness problem for alternating 1-clock timed automata extended with $\opguess$ and $\opspread$ is decidable.
\end{remark}

\subsection{A note on complexity}
Although the $\ara(\opguess, \opspread)$ and $\ara(\opguess, \opspread, <)$ classes have both non-primitive-recursive complexity, we must remark that the decision procedure for the latter has much higher complexity. While the former can be roughly bounded by the  Ackermann function applied to the number of states, the complexity of the decision procedure we give for $\ara(\opguess,\opspread,<)$ majorizes every multiply-recursive function (in particular, Ackermann's). In some sense this is a consequence of relying on \higmanslem instead of \dicksonslem for the termination arguments of our algorithm.

More precisely, it can be seen that the emptiness problem for
$\ara(\opguess,$ $\opspread,<)$ sits in the class $\mathfrak
F_{\omega^\omega}$ in the Fast Growing Hierarchy \cite{fast}---an
extension of the Grzegorczyk Hierarchy for non-primitive-recursive
func\-tions---by a reduction to the emptiness problem for timed one
clock automata (see \S\ref{sec:ordered-reg-aut-VS-timed-automata}),
which are known to be in this class.\footnote{The emptiness problem
  for timed one clock automata can be at the same time reduced to that
  of Lossy Channel Machines \cite{ADOW05}, which are known to be
  `complete' for this class, \ie, in $\mathfrak F_{\omega^\omega}
  \setminus \mathfrak F_{<\omega^\omega}$ (see~\cite{CS-lics08}).}
  However, the emptiness problem for $\ara(\opguess,\opspread)$
  belongs to $\mathfrak F_\omega$ in the hierarchy. The lower bound
  follows by a reduction from Incrementing Counter Automata
  \cite{DL-tocl08}, which are hard for $\mathfrak F_\omega$
  \cite{phs-mfcs2010,FFSS10}. The upper bound is a consequence of
  using a saturation algorithm with a \wqo that is the component-wise
  order of the coordinates of a vector of natural numbers in a
  controlled way. The proof that it belongs to $\mathfrak F_\omega$
  goes similarly as for Incrementing Counter Automata (see~\cite[\S
    7.2]{FFSS10}).  We do not know whether the emptiness problem for
  $\ara(\opguess, \opspread, <)$ is also in $\mathfrak F_\omega$.

\section{LTL with registers}
\label{sec:ltl-with-registers}
\newcommand{\ltlt}[1]{{\sf LTL}^\downarrow_2(#1)\xspace}
\newcommand{\ltlnnf}[1]{{\sf LTL}^\downarrow_\nnf(#1)}
\newcommand{\ltlF}[1]{\ensuremath{\ltlnnf{\mathfrak F ,#1}}\xspace}
\newcommand{\ltlFut}{\ensuremath{\ltlnnf{\mathfrak F}}\xspace}
\newcommand{\ltlf}{\ensuremath{\ltl{\F}}\xspace}
\newcommand{\ltlfp}{\ensuremath{\ltl{\F,\F^{-1}}}\xspace}
\newcommand{\ltlux}{\ensuremath{\ltl{\U,\X}}\xspace}
\newcommand{\ltlurx}{\ltlFut}
\newcommand{\ltlfx}{\ensuremath{\ltl{\X,\F}}\xspace}
\newcommand{\ltlfs}{\ensuremath{\ltl{\Fs}}\xspace}
\newcommand{\ltlfps}{\ensuremath{\ltl{\Fs,\Fs^{-1}}}\xspace}
\newcommand{\ltluxae}{\ltlF{\Edown,\Adown}}
\newcommand{\ltlurxae}{\ltlF{\Edown,\Adown}}
\newcommand{\ltlurxa}{\ltlF{\Adown}\xspace}
\newcommand{\ltlurxe}{\ltlF{\Edown}\xspace}
\newcommand{\Edown}{\Edownf}
\newcommand{\Adown}{\Adownp}
\newcommand{\Edownp}{\exists_\leq^\downarrow}
\newcommand{\Edownps}{\exists_<^\downarrow}
\newcommand{\Adownp}{\forall_\leq^\downarrow}
\newcommand{\Adownps}{\forall_<^\downarrow}
\newcommand{\Edownf}{\exists_\geq^\downarrow}
\newcommand{\Edownfs}{\exists_>^\downarrow}
\newcommand{\Adownf}{\forall_\geq^\downarrow}
\newcommand{\Adownfs}{\forall_>^\downarrow}

The logic $\ltl{\U,\X}$ is a logic for data words that corresponds to the extension of the Linear-time Temporal Logic ${\sf LTL}(\U,\X)$ on data words, with the ability to
use one \emph{register} for storing a data value for later comparisons, studied in \cite{DL-tocl08,DLN05}. It contains the next (\X) and until (\U) temporal operators to navigate the data word, and two extra operators. The \emph{freeze} operator $\downarrow \varphi$ permits to \emph{store} the current datum in the register and continue the evaluation of the formula $\varphi$. The operator $\uparrow$  \emph{tests} whether the current data value is equal to the one stored in the register.

As it was shown in~\cite{DL-tocl08}, if we allow more than one register to store data values, satisfiability of ${\sf LTL}^\downarrow(\U,\X)$ becomes undecidable. We will
then focus on the language that uses only \emph{one} register. We study an extension of this language with a restricted form of \emph{quantification} over data values. We will actually add \emph{two} sorts of quantification. On the one hand,  the operators $\Adownp$ and $\Edownp$ quantify universally or existentially over all the data
values occurring \emph{before} the current point of
evaluation. Similarly, $\Adownf$ and $\Edownf$ quantify over the
\emph{future} elements on the data word. For our convenience and without any loss of generality,
we will work in Negation Normal Form (\nnf), and we use $\Ud$ to denote
the dual operator of $\U$, and similarly for $\Xd$. Sentences of
$\msf{LTL}^\downarrow_\nnf(\U,\R,\X,\Xd,\+{O})$, where
$\+{O}\subseteq \set{\Adownp,\Edownp,\Adownf,\Edownf}$ are
defined as follows,
\begin{multline*}
  \varphi \Coloneqq a \midd {\lnot a} \midd {\uparrow} \midd {\lnot\uparrow}
  \midd {\downarrow \varphi} \midd {\X \varphi} \midd {\Xd \varphi} \midd 
  {\U(\varphi,
  \varphi)} \midd {\Ud(\varphi, \varphi)} \midd 
  {\text{op } \varphi} \midd {\varphi \land \varphi} \midd {\varphi \lor  \varphi}
\end{multline*}
where $a$ is a symbol from a finite alphabet $\A$, and $\text{op} \in \+{O}$. For economy of space we  write $\ltlF{\+{O}}$ to denote this logic. In this notation, $\mathfrak{F}$ is to mark that we have all the \emph{forward} modalities: $\U,\Ud,\X,\Xd$. Notice that the \emph{future} modality can be
defined $\F \varphi \coloneqq  \U(\varphi,\top)$ and its dual
$\G\varphi$ as the \nnf of $\lnot \F \lnot \varphi$.

Figure~\ref{fig:semantics-ltl} shows the  definition of the satisfaction relation $\models$.
For example, in this logic we can express properties like ``\textsl{for every $a$ element there is a future $b$ element with the same data value}'' as $\G ( {\lnot a} \lor {\downarrow (\F ({b} \land {\uparrow}))})$.
We say that $\varphi$ satisfies $\wW=\aA\otimes\dD$, written $\wW \models \varphi$, if $\wW, 1 \models^{\dD(1)} \varphi$. 

\begin{figure}
  \centering
  \begin{align*}
    (\wW, i) \models^d a  &\text{\quad if{f} \quad}  \aA(i) = a\\
    (\wW, i) \models^d \, \uparrow &\text{\quad if{f} \quad} d = \dD(i)\\ (\wW, i) \models^d \,
    \downarrow \varphi &\text{\quad if{f} \quad} (\wW, i)
    \models^{\dD(i)} \varphi\\
    (\wW, i) \models^d \U(\varphi,\psi) &\text{\quad if{f} \quad}
    \textrm{for some } i \leq j \in \tpos(\wW) 
    \text{ and for all } i \leq k < j\\
    &\hspace{0.88cm} \textrm{ we have } (\wW, j) \models^d \varphi \text{ and  } (\wW,k) \models^d \psi\\
    (\wW, i) \models^d \X \varphi  &\text{\quad if{f} \quad} i+1 \in \tpos(\wW) \textrm{ and }(\wW, i+1) \models^d \varphi\\
    (\wW, i) \models^d \Edownf \varphi &\text{\quad if{f} \quad}
    \text{there
      exists } i \leq j \in \tpos(\wW) \text{ such that }  (\wW, i) \models^{\dD(j)} \varphi\\
    (\wW, i) \models^d \Adown \varphi &\text{\quad if{f} \quad} \text{for
      all } 1 \leq j \leq i \text{ we have } (\wW, i)
    \models^{\dD(j)} \varphi
  \end{align*}
  \caption{Semantics of $\msf{LTL}^\downarrow(\U,\R,\X,\Xd,\Edownf,\Adownp)$ for a data word $\wW = \aA \otimes \dD$ and   $i \in\tpos(\wW)$.}
\label{fig:semantics-ltl}
\end{figure}

\subsection{Satisfiability problem}
This section is dedicated to the satisfiability problem for ${\sf LTL}^\downarrow_\nnf(\mathfrak F, \Adownp, \Edownf)$. But first let us show that $\Edownp$ and $\Adownf$ result in an undecidable logic.
\begin{theorem}\label{thm:temporal-undec-dec}
  Let $\Edownps$ be the operator $\Edownp$
  restricted only to the data values occurring strictly before the
  current point of evaluation. Then, on finite data words: 
  \begin{enumerate}[\em(1)]
  \item satisfiability of $\ltlnnf{\F,\G,\Edownps}$ is undecidable; and \label{thm:temporal-undec-dec:1}
  \item satisfiability of $\ltlnnf{\F,\G,\Adownf}$ is undecidable.\label{thm:temporal-undec-dec:2}
  \end{enumerate}
\end{theorem}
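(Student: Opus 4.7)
The plan is to reduce, for both items, from the halting problem of deterministic $2$-counter Minsky machines, which is well-known to be undecidable. A halting computation will be encoded as a data word over the alphabet $Q\times\{\inc,\dec,\iz\}\times\{1,2\}\times Q$, exactly as in the proof of Proposition~\ref{prop:guess-spread-more-expressive}: successive positions list the instructions executed, and the data value at each position is used to pair every increment of a counter with the decrement of the same counter that later cancels it. A data word is a legitimate coding iff (a)~the chain of transitions is consistent and starts/ends at the initial/final states; (b)~distinct increments of the same counter carry distinct data values; (c)~between any increment of counter $c$ and a subsequent $\iz_c$ there is a matching decrement of $c$; and (d)~every decrement has a \emph{previous} matching increment. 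Conditions (a)--(c) are already expressible in $\ltlnnf{\F,\G,\downarrow,\uparrow}$ by standard constructions, see~\cite{DL-tocl08}; the crux is to express (d) using the extra operator of each fragment.

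For item~(1), I would augment the coding with the auxiliary invariant ``the data value of every non-increment position has already occurred strictly earlier'', which can be written as $\G(\neg\textit{``increment''}\Rightarrow\Edownps\uparrow)$. Under this invariant, each data value is first introduced by an increment, so (d) at a decrement boils down to saying that the current data value has occurred strictly earlier, which is precisely
\[\G\bigl(\textit{``decrement''}\,\Rightarrow\,\Edownps\uparrow\bigr).\]
For item~(2), I would evaluate the universal quantifier at the first position of the word, so that $\Adownf$ ranges over \emph{all} data values appearing in the word. Condition (d) can then be rephrased as: for every data value $d$ occurring in the word, if $d$ is carried by some decrement then $d$ is also carried by some increment, that is,
\[\Adownf\bigl(\F(\uparrow\land\textit{``decrement''})\,\Rightarrow\,\F(\uparrow\land\textit{``increment''})\bigr).\]
Combined with~(b) and a base-fragment constraint forcing each data value carried by an increment to be used by at most one increment and to precede any decrement with the same value, this yields~(d).

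The main obstacle in both cases is that (d) mentions a \emph{previous} increment, while the added quantifier ranges only over data values---and, in case~(2), only over \emph{future} ones. The uniform trick is to burn the ordering ``increment-before-decrement'' into the coding itself, via uniqueness and first-occurrence constraints expressible in $\ltlnnf{\F,\G,\downarrow,\uparrow}$, so that the added quantifier only has to witness the \emph{existence} of a matching data value, not its positional relation. Once (d) is encoded, the conjunction of (a)--(d) is satisfiable iff the Minsky machine halts, and undecidability of satisfiability follows for both fragments.
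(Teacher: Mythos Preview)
Your proposal is correct and follows essentially the same approach as the paper: reduce from Minsky machines via the coding of Proposition~\ref{prop:guess-spread-more-expressive}, rely on the base fragment $\ltlnnf{\F,\G}$ (the paper cites~\cite{FS09} rather than~\cite{DL-tocl08} for this) for conditions (a)--(c), and use the key formul{\ae} $\G(\dec_i\Rightarrow\Edownps\uparrow)$ for item~(1) and $\Adownf\bigl(\F(\dec_i\land\uparrow)\Rightarrow\F(\inc_i\land\uparrow)\bigr)$ evaluated at the first position for item~(2). The paper's proof is slightly terser---it omits your auxiliary ``every non-increment value has occurred before'' invariant in~(1) and simply appeals to the coding, and for~(2) it just remarks that the increment-before-decrement ordering ``is then easy to ensure''---but the substance is identical.
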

\begin{proof}
We prove \eqref{thm:temporal-undec-dec:1} and \eqref{thm:temporal-undec-dec:2} by reduction of the halting problem for Minsky machines. We show that these logics can code an accepting run of a 2-counter Minsky machine as in Proposition~\ref{prop:guess-spread-more-expressive}. Indeed, we show that the same kind of properties are expressible in this logic. To prove this, we build upon some previous results~\cite{FS09} showing that $\ltlnnf{\F,\G}$ can code conditions {\minski} and {\minskii} of the proof of Proposition~\ref{prop:guess-spread-more-expressive}. Here we show that both $\ltlnnf{\F,\G,\Edownps}$ and $\ltlnnf{\F,\G,\Adownf}$ can express condition \minskiii,  ensuring that for every decrement ($\dec_i$) there is a previous increment ($\inc_i$) with the same data value. Let us see how to code this.
  \begin{enumerate}[(1)]
  \item[\eqref{thm:temporal-undec-dec:1}] The ${\sf LTL}^\downarrow_\nnf(\F,\G,\Edownps)$ formula 
\[\G (\msf{dec}_i \rightarrow \Edownps \uparrow) \] states that the data value of every decrement must \emph{not} be
    new, and in the context of this coding this means that it must
    have been introduced by an increment instruction.
  \item[\eqref{thm:temporal-undec-dec:2}] The ${\sf LTL}^\downarrow_\nnf(\F,\G,\Adownf)$ formula
\[\Adownf (\F(\msf{dec}_i \land \uparrow) \rightarrow
    \F(\msf{inc}_i \land \uparrow))\]
evaluated at the first element of the data word expresses that for every data value: if there is a \emph{decrement} with value $d$, then there is an \emph{increment} with value $d$. It is then easy to ensure that they appear in the correct order (first the increment, then the decrement).
  \end{enumerate}
  The addition of any of these conditions to the coding of \cite{FS09}
  results in a coding of an $n$-counter Minsky machine, whose
  emptiness problem is undecidable.
\end{proof}
\begin{corollary}
  The  satisfiability problem for both ${\sf LTL}^\downarrow_\nnf(\mathfrak F, \Edownp)$ and
  ${\sf LTL}^\downarrow_\nnf(\mathfrak F, \Adownf)$ are undecidable. 
\end{corollary}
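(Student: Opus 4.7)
My plan is to establish both undecidability results by reducing, in each case, to the corresponding part of Theorem~\ref{thm:temporal-undec-dec}.

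The second statement is immediate: $\ltlnnf{\F,\G,\Adownf}$ is essentially a sublogic of $\ltlnnf{\mathfrak F,\Adownf}$, since the eventuality $\F\varphi$ is just $\U(\varphi,\top)$ and its dual $\G\varphi$ is definable in \nnf{} as $\Ud(\varphi,\bot)$ (with $\top$ and $\bot$ encoded as $a\lor\lnot a$ and $a\land\lnot a$ for any fixed letter $a$). Thus the undecidability of Theorem~\ref{thm:temporal-undec-dec}(\ref{thm:temporal-undec-dec:2}) transfers directly by syntactic rewriting.

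For the first statement, I would replay the Minsky machine coding used in Theorem~\ref{thm:temporal-undec-dec}(\ref{thm:temporal-undec-dec:1}). Conditions (i) and (ii) from the proof of Proposition~\ref{prop:guess-spread-more-expressive} are already expressible in $\ltlnnf{\F,\G}$ and hence, via the translation above, in $\ltlnnf{\mathfrak F,\Edownp}$. The only nontrivial step is condition (iii)---``every decrement has a matching strictly earlier increment''---which in Theorem~\ref{thm:temporal-undec-dec}(\ref{thm:temporal-undec-dec:1}) is written as $\G(\msf{dec}_i\rightarrow\Edownps\uparrow)$. The key observation I would exploit is that, at any position $i>1$, $\Edownps\varphi$ evaluated at $i$ is equivalent to $\Edownp(\X\varphi)$ evaluated at $i-1$: indeed $\Edownp$ at $i-1$ quantifies precisely over the data values appearing at positions $1,\dotsc,i-1$, which are exactly the strictly earlier positions relative to $i$, and $\X$ then shifts the evaluation of $\varphi$ back to $i$. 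Thus I would rewrite the offending conjunct as $\G\bigl(\X\,\msf{dec}_i\rightarrow\Edownp(\X\uparrow)\bigr)$, which is a formula of $\ltlnnf{\mathfrak F,\Edownp}$.

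The main obstacle I anticipate is the boundary case: at position $1$ there is no ``previous position'' from which to shift, so the reformulated conjunct imposes no constraint on a $\msf{dec}_i$ occurring at the very first position of the word. This is easily handled inside the Minsky coding itself by requiring the first position to carry a distinguished starting instruction different from every $\msf{dec}_i$, so that decrements can only appear at positions $\geq 2$ and the shifted formula always applies. Combining the translated (i), (ii) with the reformulated (iii) then yields an encoding of accepting Minsky machine runs inside $\ltlnnf{\mathfrak F,\Edownp}$, giving the desired undecidability.
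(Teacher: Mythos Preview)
Your proposal is correct and follows essentially the same route as the paper: the $\Adownf$ case is immediate by syntactic inclusion, and for $\Edownp$ you recover condition~(iii) via the shift $\G\bigl(\X\,\msf{dec}_i\rightarrow\Edownp(\X\uparrow)\bigr)$, which is exactly the formula the paper uses. Your explicit treatment of the boundary case at position~$1$ is a nice touch the paper leaves implicit (it is absorbed by the assumption that a Minsky run begins with a designated initial instruction).
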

\begin{proof}
The property of item \eqref{thm:temporal-undec-dec:1} in the proof of Theorem~\ref{thm:temporal-undec-dec} can be equally coded in ${\sf LTL}^\downarrow_\nnf(\mathfrak F, \Edownp)$ as
  $\G (\X(\msf{dec}_i) \rightarrow \Edownp (\X\uparrow))$. The undecidability of ${\sf LTL}^\downarrow_\nnf(\mathfrak F, \Adownf)$ follows directly from Theorem~\ref{thm:temporal-undec-dec}, item \eqref{thm:temporal-undec-dec:2}.
\end{proof}

We now turn to our decidability result. We show that ${\sf LTL}^\downarrow_\nnf(\mathfrak F, \Adownp, \Edownf)$ has a decidable satisfiability problem by a translation to $\ara(\opguess,\opspread)$.

The translation to code ${\sf LTL}^\downarrow_\nnf(\mathfrak F, \Adownp, \Edownf)$ into $\ara(\opguess,\opspread)$ is standard, and follows same lines as \cite{DL-tocl08}\footnote{Note that this logic already contains $\ltl{\U,\X}$.} (which at the same time follows the translation from $\ltl{\U,\X}$ to alternating finite automata).  We then obtain the following result.

\begin{proposition}\label{prop:ltl2ara}
$\ara(\opguess,\opspread)$ captures $\ltlnnf{\mathfrak F,\Edown,\Adown}$.
\end{proposition}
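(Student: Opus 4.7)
The plan is to do a Vardi--style structural translation: for every subformula $\psi$ of the input formula $\varphi$ introduce a state $q_\psi$ and define $\delta(q_\psi)$ so that, by induction on $\psi$, a thread $(q_\psi,d)$ launched at position $i$ accepts exactly when $(\wW,i)\models^d\psi$. The initial state will be $q_\varphi$, with the convention (matching the definition of a run) that its first register value equals $\dD(1)$; since $\varphi$ is a sentence, the initial register content is irrelevant as long as $\varphi$ begins with a $\downarrow$ (which one may impose by syntactic normalisation).

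For the core $\ltlnnf{\mathfrak F}$ fragment I would just reproduce the encoding of \cite{DL-tocl08}: $\delta(q_a)=a$ and $\delta(q_{\lnot a})=\bar a$; $\delta(q_{\uparrow})=\opeq$ and $\delta(q_{\lnot\uparrow})=\opneq$; $\delta(q_{\psi_1\land\psi_2})=q_{\psi_1}\land q_{\psi_2}$ and dually for $\lor$; $\delta(q_{\downarrow\psi})=\opset(q_\psi)$; $\delta(q_{\X\psi})=\ttypeb?\land\triright q_\psi$ and $\delta(q_{\Xd\psi})=\ttypenb?\lor(\ttypeb?\land\triright q_\psi)$; and the standard recursive unfoldings $\delta(q_{\U(\psi_1,\psi_2)})=q_{\psi_1}\lor(q_{\psi_2}\land\ttypeb?\land\triright q_{\U(\psi_1,\psi_2)})$ and $\delta(q_{\Ud(\psi_1,\psi_2)})=q_{\psi_1}\lor(q_{\psi_2}\land(\ttypenb?\lor\ttypeb?\land\triright q_{\Ud(\psi_1,\psi_2)}))$. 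The usual inductive invariant then gives the required equivalence for these cases, using only the plain \ara{} fragment.

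The new cases are the two quantifiers. For $\Edown\psi$, i.e.\ ``some future position carries a datum $d$ such that $(\wW,i)\models^d\psi$'', I let $\delta(q_{\Edown\psi})=\opguess(q_{\psi}\land q^{\mathrm{occ}})$, where $q^{\mathrm{occ}}$ is an auxiliary ``witness'' state that marches rightward and eventually matches the stored value, defined by $\delta(q^{\mathrm{occ}})=\opeq\lor(\ttypeb?\land\triright q^{\mathrm{occ}})$. The nondeterministic $\opguess$ picks a data value $d$, the $q_\psi$ thread verifies $(\wW,i)\models^d\psi$ by induction, and $q^{\mathrm{occ}}$ enforces that $d=\dD(j)$ for some $j\geq i$, which is exactly the semantics. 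For $\Adown\psi$ I rely on a ``memory'' state $q^{\mathrm{past}}$ whose purpose is to carry forward, through every position, one thread per datum $\dD(j)$ seen so far: implement this by making the initial transition spawn $q^{\mathrm{past}}$ in parallel with $q_\varphi$, with $\delta(q^{\mathrm{past}})=\opset(q^{\mathrm{keep}})$ and $\delta(q^{\mathrm{keep}})=(\ttypenb?)\lor(\ttypeb?\land\triright(q^{\mathrm{keep}}\land q^{\mathrm{past}}))$, so that at every position a fresh $q^{\mathrm{keep}}$ thread is stored with the current datum and all previously stored ones are transported rightward. Then $\delta(q_{\Adown\psi})=\opspread(q^{\mathrm{keep}},q_\psi)$; by the semantics of $\opspread$, from the set of active pairs $\{(q^{\mathrm{keep}},\dD(j)) : j\le i\}$ it creates exactly one thread $(q_\psi,\dD(j))$ for each $j\le i$, which is precisely the conjunction demanded by $\Adown\psi$.

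The main obstacle will be the side condition on $\opspread$: the rule may fire only when every other active thread is itself a \opspread- or moving-instruction. Hence one must make sure that, at the instant the $\opspread(q^{\mathrm{keep}},q_\psi)$ for $\Adown\psi$ is applied, no $\rightarrow_\eps$ transitions are pending on other threads. I would address this by (i) giving each $q_\psi$'s transition a shape that can be saturated via $\rightarrow_\eps$ before any movement happens, (ii) arranging $q^{\mathrm{keep}}$ so that its only pending transition is moving, and (iii) observing that any run can be reordered so all $\opspread$ applications at position $i$ are performed last among the non-moving steps, which is legal because the relevant set of $q^{\mathrm{keep}}$-threads at position $i$ is independent of the other $\rightarrow_\eps$ firings. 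The final check is that the inductive invariant is preserved: given the correspondence between $q_\psi$ and $\psi$ inductively established, each of the clauses above matches Figure~\ref{fig:semantics-ltl}; the automaton $t(\varphi)$ has an accepting run on $\wW$ iff $\wW\models\varphi$, establishing the capture.
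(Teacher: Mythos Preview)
Your proposal is correct and follows essentially the same approach as the paper's proof: a standard Vardi-style structural translation for the $\ltlnnf{\mathfrak F}$ core, $\opguess$ together with a rightward ``occurrence witness'' state for $\Edownf$, and a persistent ``memory'' state (your $q^{\mathrm{keep}}$, the paper's $q_{\text{save}}$) collected by $\opspread$ for $\Adownp$. The only cosmetic difference is that the paper conjoins an explicit $\opset(q_\psi)$ to the $\opspread$ transition to guarantee that the \emph{current} datum $\dD(i)$ is covered, whereas you rely on your $q^{\mathrm{past}}/q^{\mathrm{keep}}$ mechanism having already stored it before $\opspread$ fires; both are fine once the side condition on $\opspread$ is handled, and you are in fact more explicit than the paper in flagging that $q^{\mathrm{keep}}$ must be arranged to be in a moving state when $\opspread$ is applied.
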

From Proposition~\ref{prop:ltl2ara} and Theorem~\ref{thm:arags-decidable} it will follow the main result, stated next.
\begin{theorem}\label{thm:ltl-A-E-decidable}
The satisfiability problem for $\ltlnnf{\mathfrak F,\Edown,\Adown}$ is decidable.
\end{theorem}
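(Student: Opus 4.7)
The theorem follows immediately by composing the two previous results: given a formula $\varphi \in \ltlnnf{\mathfrak F,\Edown,\Adown}$, Proposition~\ref{prop:ltl2ara} yields an automaton $t(\varphi) \in \ara(\opguess,\opspread)$ accepting exactly the data words satisfying $\varphi$, and by Theorem~\ref{thm:arags-decidable} its emptiness is decidable. So the plan reduces to describing the translation $t$ underlying Proposition~\ref{prop:ltl2ara}, which is standard but needs care on the two quantifier clauses.

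I would proceed by induction on the subformulas of $\varphi$. To every subformula $\psi$ I assign a state $q_\psi$, with the invariant that a thread $(q_\psi, d)$ live at the configuration reading position $i$ corresponds to $(\wW, i) \models^d \psi$. The Boolean, atomic and freeze/test cases mirror the classical translation from $\ltlux$ into alternating register automata~\cite{DL-tocl08}: $\delta(q_a) = a$, $\delta(q_{\uparrow}) = \opeq$, $\delta(q_{\downarrow \psi}) = \opset(q_\psi)$, $\delta(q_{\psi_1 \land \psi_2}) = q_{\psi_1} \land q_{\psi_2}$, and the $\U, \Ud, \X, \Xd$ clauses unfold in the usual way, using $\ttypeb?$ and $\ttypenb?$ to detect the end of the word.

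The two quantifiers exploit the new operators. For $\Edownf \psi$, I use $\opguess$ to nondeterministically pick a candidate data value $d'$ and fork: one branch runs $q_\psi$ on the current position (with $d'$ in the register), while a second branch $q_{\text{witness}}$ walks forward looking for a position whose datum equals $d'$, via $\delta(q_{\text{witness}}) = \opeq \lor (\ttypeb? \land \triright q_{\text{witness}})$. For $\Adownp \psi$, I add a persistent \emph{memory sub-automaton}: a dedicated state $q_{\text{mem}}$ is spawned, at every position visited, with the current datum stored in its register; its threads simply propagate to the right and accept at the last position. Whenever the translated formula needs $\Adownp \psi$, it invokes $\opspread(q_{\text{mem}}, q_\psi)$, which creates a copy of $q_\psi$ carrying each past data value, implementing the universal past quantification exactly.

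The main delicate point will be making the memory sub-automaton compatible with the firing discipline of $\opspread$, which requires that at the moment of spreading all other active threads be either in moving states or in spread instructions. I would therefore break $q_{\text{mem}}$ into an auxiliary pair of states so that after one round of $\rightarrow_\eps$ resolution every memory thread is in the purely moving state $\triright q_{\text{mem}}$, while the ``check-if-last'' disjunction is delegated to a separate transient state. A second subtlety is scheduling: before evaluating any $\Adownp$-triggered $\opspread$, all boolean unfoldings for the current step must complete, which is guaranteed by the side condition in equation~\eqref{def:righteps:spread} of the $\ara$ semantics. Once correctness of the translation is established by induction on $\psi$, the theorem follows.
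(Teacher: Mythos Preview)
Your proposal is correct and follows essentially the same route as the paper: reduce satisfiability to emptiness of $\ara(\opguess,\opspread)$ via the standard subformula-indexed translation, using $\opguess$ plus a forward witness thread for $\Edownf$, and a persistent ``memory'' state together with $\opspread$ for $\Adownp$; the paper's $q_{\text{save}}$ is exactly your $q_{\text{mem}}$. One small detail to watch when you write it out: since $\Adownp$ quantifies over positions $j \leq i$ (including the current one), and since the firing discipline forces memory threads into their moving avatar before $\opspread$ executes, you should either target the moving auxiliary state in the $\opspread$ call or, as the paper does, conjoin an explicit $\opset(q_\psi)$ to cover the current datum separately.
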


We now show how to make the translation to $\ara(\opguess,\opspread)$ in order to obtain our decidability result.

\begin{proof}[Proof of Proposition~\ref{prop:ltl2ara}]
Let $\eta \in \ltlnnf{\mathfrak F, \Adownp,\Edownf}$. 
We show that for every formula $\eta \in \ltlnnf{\mathfrak F, \Adownp,\Edownf}$ there exists a computable $\ara(\opspread,\opguess)$ $\anAut_\eta$ such
  that for every data word $\wW$,
  \begin{gather*}
    \text{$\wW$ satisfies $\eta$ ~~~if{f}~~~ $\anAut_\eta$ accepts $\wW$.}
  \end{gather*}

In the construction of $\anAut_\varphi$,  we first make sure to maintain all  the data values seen so far as \emph{threads} of the configuration. We can do this by having 3  special states $q_1, q_2, q_{save}$ in $\anAut_\eta$, defining $q_1$ as the initial state, and  $\delta$  as follows.
  \begin{align*}
    \delta(q_1) &= \opset(q_2) \land q_\eta &
    \delta(q_2) &= (\bar{\triright}? \lor \triright q_1) \land q_{save}&
    \delta(q_{save}) &= \bar{\triright}? \lor \triright q_{save}
  \end{align*}
  Now we can assume that at any point of the run, we maintain the data  values of all the previous elements of the data word as threads  $(q_{\text{save}},d)$. Note that these threads are maintained until the last element of the data word, at which point the test $\ttypenb?$ is satisfied and they are accepted.



Now we show how to define $\anAut_\eta$. We proceed by induction on $|\eta|$. If $\eta = a$ or $\lnot a$, we simply define the set of states as $Q=\set{q_1,q_2,q_{\text{save}},q_\eta}$ and $\delta(q_\eta) = a$ or $\delta(q_\eta) = \bar a$ ($\delta(q_1)$, $\delta(q_2)$ and $\delta(q_{\text{save}})$ are defined as above). If $\eta = {\downarrow} \psi$ (or $\eta = {\uparrow}$, $\eta = \lnot{\uparrow}$)   we define it as follows. We add one new state $q_\eta$ to the set of states of $\anAut_\psi$, we extend the definition of $\delta$ with $\delta(q_\eta) = \opset(q_\psi)$ (or $\delta(q_\eta) = \opeq(q_\psi)$, $\delta(q_\eta) = \opneq(q_\psi)$), and we redefine $\delta(q_1)$ with $\delta(q_1) = \opset(q_2) \land q_\eta$. If $\eta = \F \psi$, $\eta = \G \psi$, or $\eta = \U(\psi,\psi')$, it is easy to define $\delta(q_\eta)$ from the definition of $\anAut_\psi$ and $\anAut_{\psi'}$, perhaps adding some new states. On the other hand, if $\varphi = \psi \land \psi'$ or $\varphi = \psi \lor \psi'$, it is also straightforward as it corresponds to the alternation and nondeterminism of the automaton. 

Suppose now that $\eta = \Adownp \psi$. We define $\anAut_\eta$ as $\anAut_\psi$ with the extra state $q_\eta$ and we define $\delta(q_\eta) = \opspread(q_{\text{save}},q_\psi) \land {\downarrow} q_{\psi}$. This ensures that at the moment of execution of this instruction, all the previous data values in the data word will be taken into account, including the current one.

Finally, if $\eta = \Edownf
  \psi$, we build $\anAut_\eta$ from $\anAut_\psi$ in the same fashion as before, adding new states $q_\eta, q_\eta', q''_\eta$, defining $\delta(q_\eta)=\opguess(q'_\eta)$,
  $\delta(q'_\eta)=q_\psi \land q''_\eta$, 
$\delta(q''_\eta) = \opeq \lor \triright q''_\eta$. Note that $q''_\eta$ checks that the guessed data value appears somewhere in a future position of the word.
\end{proof}

Moreover, we argue that these extensions add expressive power.

\begin{proposition} \label{prop:ltl-expressive}
  On finite data words:
  \begin{enumerate}[\em(i)]
  \item The logic $\ltlurxa$ is  more expressive than \ltlurx;
  \item The logic $\ltlurxe$ is  more expressive than \ltlurx.
  \end{enumerate}
\end{proposition}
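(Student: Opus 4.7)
My plan for both items follows the same recipe as Proposition~\ref{prop:guess-spread-more-expressive}: in each extended logic I would exhibit a sentence whose language lies outside $\Lang(\ara)$, and then invoke the (standard) translation in the spirit of Proposition~\ref{prop:ltl2ara} applied to the base logic, which yields $\Lang(\ltlurx)\subseteq\Lang(\ara)$. Any such sentence is therefore inequivalent to every \ltlurx formula, and the forward inclusion $\Lang(\ltlurx)\subseteq\Lang(\ltlurxa)$ (resp.\ $\subseteq\Lang(\ltlurxe)$) being obvious from the grammar, strictness follows.

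For item~(ii) I would re-use the property of Proposition~\ref{prop:guess-spread-more-expressive}(b) verbatim: the language $L_{\exists}$ of data words containing a $b$-labelled position whose datum differs from the datum of every earlier $a$-labelled position is already shown there to lie in $\Lang(\ara(\opspread))\setminus\Lang(\ara)$. It is captured by the \ltlurxe sentence
\[
\Edownf\bigl(\,(\neg(a\land\uparrow))\;\U\;(b\land\uparrow)\,\bigr),
\]
evaluated at position~$1$: $\Edownf$ guesses a datum $d$ appearing somewhere in the word, and the \U{}-formula states that, reading from position~$1$, a $b$-labelled position with datum $d$ is reached without first crossing any $a$-labelled position with datum $d$. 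Since $L_{\exists}\notin\Lang(\ara)\supseteq\Lang(\ltlurx)$, this sentence is not equivalent to any \ltlurx formula.

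For item~(i) I would dualise the construction. Let $L_{\forall}$ be the language of data words for which there exists a position $i$ such that every datum occurring at some $j\le i$ also occurs at some $a$-labelled position $k\ge i$; it is captured in \ltlurxa by
\[
\F\bigl(\,\Adownp\,\F(a\land\uparrow)\,\bigr).
\]
I would argue $L_{\forall}\notin\Lang(\ara)$ as follows. If $L_{\forall}$ were in $\Lang(\ara)$, closure of $\ara$ under complementation would put its complement in $\Lang(\ara)$ as well; this complement asserts that for every position $i$ some prefix datum fails to reappear at an $a$-labelled position in the suffix, and it can be used to play the role of condition~\minskiii\ ("for every decrement there is a previous matching increment") in a two-counter Minsky-machine encoding of the same flavour as in the proof of Proposition~\ref{prop:guess-spread-more-expressive}. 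Since the emptiness problem for \ara is decidable (Theorem~\ref{thm:arags-decidable}, with the $\opguess,\opspread$ extensions switched off), this is a contradiction, so $L_{\forall}\notin\Lang(\ltlurx)$.

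The main obstacle is item~(i): whereas~(ii) inherits its separating language directly from Proposition~\ref{prop:guess-spread-more-expressive}(b), the universal-past analogue needed for~(i) requires a fresh Minsky encoding in which the witness position acts as a zero-test checkpoint and the $a$-labelled reappearances serve as the matching decrements, so that validity of the coded run is exactly the complement of $L_{\forall}$. Setting this encoding up so that \emph{only} the universal past quantification afforded by $\Adownp$ is needed---rather than the existential past quantification that would yield undecidability (\cf\ Theorem~\ref{thm:temporal-undec-dec})---is the delicate part of the argument.
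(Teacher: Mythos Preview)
Your argument for item~(ii) is correct: the property of Proposition~\ref{prop:guess-spread-more-expressive}(b) is expressed by your \ltlurxe sentence, lies outside $\Lang(\ara)\supseteq\Lang(\ltlurx)$, and hence separates the two logics. This is a valid route, different from the paper's.

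For item~(i), however, there is a genuine gap. You search for a \emph{new} language $L_\forall$ and then need a \emph{fresh} Minsky encoding for its complement; you yourself call this ``the delicate part'' and do not carry it out. The complement of your $L_\forall$ has a $\forall i\,\exists j\le i\,\forall k\ge i$ shape that does not obviously match the $\forall\exists$ shape of condition~\minskiii, and it is not clear the encoding can be completed as you sketch it.

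The idea you are missing is that no new language is required. Since $\Adownp$ and $\Edownp$ are semantic duals, the \nnf{} negation of the very formula already built in (the proof of the corollary to) Theorem~\ref{thm:temporal-undec-dec}---namely $\G(\X(\dec_i)\to\Edownp(\X{\uparrow}))$---is equivalent to a formula of the form $\F(\X(\dec_i)\land\Adownp(\Xd\,\lnot{\uparrow}))$, which already lies in \ltlurxa. If \ltlurxa and \ltlurx had the same expressive power, this formula would be equivalent to some \ltlurx formula, and then closure of \ltlurx under negation would put condition~\minskiii\ itself into $\Lang(\ltlurx)$, making satisfiability of \ltlurx undecidable and contradicting Theorem~\ref{thm:ltl-A-E-decidable}. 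The paper runs exactly this argument, uniformly for both items (using $\Edownf/\Adownf$ duality for~(ii)), in two lines: assume equal expressiveness, transfer closure under negation, invoke Theorem~\ref{thm:temporal-undec-dec}, contradict decidability. All the Minsky-machine work has already been done there; nothing new needs to be encoded.
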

\begin{proof}
This is a consequence of \ltlurx being closed under negation and Theorem~\ref{thm:temporal-undec-dec}. Ad absurdum, if one of these logics were as expressive as $\ltlurx$, then it would be closed under negation, and then we could express conditions \eqref{thm:temporal-undec-dec:1} or \eqref{thm:temporal-undec-dec:2} of the proof of Theorem~\ref{thm:temporal-undec-dec} and hence obtain that $\ltlurx$ is undecidable. But this leads to a contradiction since by Theorem~\ref{thm:ltl-A-E-decidable} $\ltlurx$ is decidable.
\end{proof}

\begin{remark}
  The translation of Proposition~\ref{prop:ltl2ara}  is far from using all the expressive
  power of $\opspread$. In fact, we can consider a 
  binary operator $\Adownp(\varphi,\psi)$ defined
  \begin{gather*}
    \wW, i \models^d \Adownp(\varphi,\psi) \text{~~~if{f}~~~for all $j
      \leq i$ such that $\wW,j \models^{\dD(j)} \psi$, we have $\wW,i
      \models^{\dD(j)} \varphi$.}
  \end{gather*}
  with $\psi \in \ltlurx$. This operator can be coded into
  $\ara(\opguess,\opspread)$, using the same technique as in
  Proposition~\ref{prop:ltl2ara}. The only difference is that instead of
  `saving' every data value in $q_{\text{save}}$, we use several
  states $q_{\text{save} (\psi)}$. Intuitively, only the data values
  that verify the test $\downarrow \psi$ are stored in
  $q_{\text{save} (\psi)}$. Then, a formula $\Adownp(\varphi,\psi)$ is
  translated as $\opspread(q_{\text{save}(\psi)},q_\varphi)$.
\end{remark}

\subsection{Ordered data}
If we consider a  linear order over $\D$ as done in Section~\ref{subsect:ara-guess-spread-order}, we can consider  $\ltluxae$ with richer tests  
\begin{align*}
  \varphi &\Coloneqq  \quad {\uparrow_>} \midd {\uparrow_<} \midd \dotsc
\end{align*}
that access to the linear order and compare the data values for $=,<,>$. The semantics are  extended accordingly, as in Figure~\ref{fig:semantics-ltl-order}.
\begin{figure}
  \centering
  \begin{align*}
    (\wW, i) \models^d \, \uparrow_> &\text{\quad if{f} \quad} d > \dD(i)\\ 
    (\wW, i) \models^d \, \uparrow_< &\text{\quad if{f} \quad} d < \dD(i)
  \end{align*}
  \caption{Semantics for the operators $\uparrow_>$, $\uparrow_<$ for a data word $\wW = \aA \otimes \dD$ and   $i \in\tpos(\wW)$.}
\label{fig:semantics-ltl-order}
\end{figure}
\newcommand{\ltluxaeo}{\mathit{ord}\text{-}\ltluxae}
Let us call this logic $\ltluxaeo$. The translation from $\ltluxaeo$ to $\ara(\opguess,\opspread,<)$ as defined in Section~\ref{subsect:ara-guess-spread-order} is straightforward. Thus we  obtain the following result.

\begin{proposition}
 The satisfiability problem for  $\ltluxaeo$ is decidable.
\end{proposition}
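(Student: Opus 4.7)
The plan is to replay the proof of Proposition~\ref{prop:ltl2ara} in the ordered setting, producing a translation $\eta \mapsto \anAut_\eta$ from $\ltluxaeo$ into $\ara(\opguess,\opspread,<)$ such that $\wW \models \eta$ iff $\wW \in \Lang(\anAut_\eta)$, and then invoking Theorem~\ref{thm:ara-guess-spread-order:decidable} to obtain decidability of emptiness of $\anAut_\eta$.

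Concretely, I would first reuse verbatim the scaffolding from the proof of Proposition~\ref{prop:ltl2ara}: the initial states $q_1,q_2,q_{\text{save}}$ with the same transitions, whose role is to keep, throughout the entire run, a thread $(q_{\text{save}},d)$ for every data value $d$ occurring at a previous position. This is the machinery that makes $\opspread$ and $\opguess$ usable for $\Adown$ and $\Edown$ respectively, and it does not depend on the presence of the order. Then I would walk through the inductive cases from Proposition~\ref{prop:ltl2ara} unchanged: the boolean, temporal ($\F$, $\G$, $\U$, $\Ud$, $\X$, $\Xd$), freeze, and quantification cases all go through in exactly the same way, since the automaton they generate uses only transitions already present in the plain $\ara(\opguess,\opspread)$ model, which is a sub-model of $\ara(\opguess,\opspread,<)$.

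The only genuinely new cases are the atomic data tests. For $\eta = {\uparrow_=}$ (\ie, $\uparrow$) I would set $\delta(q_\eta) = \testineq(=)$; for $\eta = \lnot \uparrow$ I would set $\delta(q_\eta) = \testineq(\neq)$; for $\eta = {\uparrow_<}$ I would set $\delta(q_\eta) = \testineq(<)$; and for $\eta = {\uparrow_>}$ I would set $\delta(q_\eta) = \testineq(>)$. Their negations are handled by the dual $\testineq$ operation (\eg, $\lnot {\uparrow_<}$ is the disjunction of $\testineq(=)$ and $\testineq(>)$), which is a minor bookkeeping exercise on the NNF. Semantic correctness is then immediate from the definitions in Figure~\ref{fig:semantics-ltl-order} and the definitions of equations~\eqref{def:righteps:testineq<}--\eqref{def:righteps:testineq-neq}: a thread at position $i$ with datum $d$ in its register satisfies $\testineq(\triangleleft)$ for $\triangleleft \in \{<,>,=,\neq\}$ exactly when $\dD(i) \triangleleft d$, which matches $\wW,i \models^d \uparrow_\triangleleft$.

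There is no real obstacle here; the nontrivial work has already been done in building the decision procedure for $\ara(\opguess,\opspread,<)$ (Theorem~\ref{thm:ara-guess-spread-order:decidable}) and in the translation of Proposition~\ref{prop:ltl2ara}. The one point that deserves a brief remark is the interaction of $\Adown$ and $\Edown$ with the ordered tests: because the saving mechanism via $q_{\text{save}}$ just carries raw data values and does not depend on the order, the same $\opspread(q_{\text{save}},\cdot)$ and $\opguess$ patterns remain correct when the body of the quantifier now contains $\uparrow_<$ or $\uparrow_>$ subformul\ae{}. Hence, composing the translation with Theorem~\ref{thm:ara-guess-spread-order:decidable} yields decidability of satisfiability for $\ltluxaeo$.
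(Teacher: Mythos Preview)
Your proposal is correct and matches the paper's approach exactly: the paper simply states that the translation from $\ltluxaeo$ to $\ara(\opguess,\opspread,<)$ is straightforward and concludes by Theorem~\ref{thm:ara-guess-spread-order:decidable}, without spelling out details. Your write-up just makes explicit what the paper leaves implicit, handling the new atomic tests $\uparrow_<$, $\uparrow_>$ via the corresponding $\testineq$ transitions and otherwise reusing Proposition~\ref{prop:ltl2ara} verbatim.
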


\part{Data trees}\label{part:trees}

The second part of this work deals with logics and automata for \emph{data trees}. In Section \ref{sec:atra-model}, we extend the model $\ara(\opguess, \opspread)$ to the new model $\atra(\opguess,\opspread)$ that runs over data trees (instead of data words). The decidability of the emptiness follows easily from the decidability result shown for $\ara(\opguess,\opspread)$. As in the case of data trees, this model allows to show the decidability of a logic.

In Section~\ref{sec:forward-xpath} we introduce `forward \xpath', a logic for \xml documents. The satisfiability problem for this logic will follow by a reduction to the emptiness problem of $\atra(\opguess,\opspread)$. This reduction is not as easy as that of the first part, since \xpath is closed under negation and our automata model is not closed under complementation. Indeed, $\atra(\opguess,\opspread)$ and forward \xpath have incomparable expressive power.

\medskip

\section{ATRA model}
\label{sec:atra-model}

Herein, we introduce the class of Alternating Tree Register Automata by slightly adapting the definition for alternating (word) register automata. This model is essentially the same automaton presented in Part~\ref{part:words}, that works on a (unranked, ordered) \emph{data tree} instead of a data word. The only difference is that instead of having one instruction $\triright$ that means `move to the next position', we have two instructions $\triright$ and $\tridown$ meaning `move to the next sibling to the right' and `move to the leftmost child'. This class of  automata is known as $\atra(\opguess,\opspread)$.
This model of computation will enable us to show decidability of a large fragment of \xpath. 

An \textbf{Alternating Tree Register Automaton} (\atra) consists in a top-down tree walking
automaton with alternating control and \emph{one} register to store and test data. \cite{JL08} shows that its emptiness problem is decidable and non-primitive-recursive. Here, as in the Part~\ref{part:words}, we consider an extension with the operators \opspread and \opguess.  We call this model $\atra(\opspread,\opguess)$.

\begin{definition}\label{def:atra-model}
  An alternating tree register automaton of $\atra(\opspread,\opguess)$
  is a tuple $\anAut=\tup{\A,Q,q_I,\delta}$ such that
  $\A$ is a finite alphabet;  $Q$ is a finite set of states; $q_I \in Q$ is the initial state;  and
 $\delta : Q \to \Phi$ is the transition function,
    where $\Phi$ is defined by the grammar
\begin{align*}
    a \midd \bar{a} \midd \odot ? \midd \opset(q) \midd \opeq \midd \opneq \midd q
    \land q' \midd q \lor q' \midd  \tridown q \midd \triright q \midd
    \opguess(q) \midd \opspread(q,q')
\end{align*}
    where $a\in \A, q,q' \in Q,$ $ \odot \in \set{\ttypea,$
    $\ttypena,$ $\ttypeb,$ $\ttypenb}$.
\end{definition}

We only focus on the differences with respect to the  \ara class. $\tridown$ and $\triright$ are to move to the leftmost child or to the next sibling to the right of the current position, and as before  `$\odot ?$' tests the current type of the position of the tree. For example, using $\ttypena?$ we test that we are in a leaf node, and by $\ttypeb?$ that the node has a sibling to its right. $\opset(q)$, $\opeq$ and $\opneq$ work in the same way as in the \ara model. We say that a state $q \in Q$ is \textbf{moving} if $\delta(q) = \triright q'$ or $\delta(q) = \tridown q'$ for some $q'\in Q$.

We define two sorts of configurations: \emph{node} configurations and \emph{tree} configurations.
In this context a \textbf{node configuration} is a tuple $\tup{x,\alpha,\gamma,\Threads}$  that describes the partial state of the execution at a position $x$ of the tree. $x \in \tpos(\tT)$ is the current position
in the tree $\tT$, $\gamma = \tT(x) \in \A \times \D$ is the current node's symbol and datum,
and $\alpha = \type_{\tT}(x)$ is the tree type of $x$. As before, $\Threads \in \subsetsf (Q \times \D)$ is a finite collection of execution threads. $\atranconf$ is the set of all node configurations. A \textbf{tree configuration} is just a finite set of node configurations, like
$\set{\tup{\epsilon,\alpha,\gamma,\Threads}, \tup{1211,\alpha',\gamma',\Threads'}, \dotsc }$. 
We call $\atraconf = \subsetsf(\atranconf)$ the set of all tree configurations.

We define the \emph{non-moving} relation $\rightarrow_\eps$ over node configurations just as in page \pageref{moving-nonmoving-rels}. As a difference with the $\ara$ mode, we have two types of moving relations. The \emph{first-child} relation $\rightarrowtridown$, to move to the leftmost child, and the \emph{next-sibling} relation $\rightarrowtriright$ to move to the next sibling to the right.

The $\rightarrowtridown$ and $\rightarrowtriright$ are defined,  for any  $\alpha_1 \in \set{\tridown,\bar{\tridown},\triright,\bar{\triright}}$,  $\gamma, \gamma_1 \in \A \times \D$, $h \in\set{\triright,\bar\triright}$, $v \in \set{\tridown,\bar\tridown}$, as follows
  \begin{align} \label{eq:def:atra:fs}
    \tup{x,(\ttypea,h),\gamma,\Threads} &\rightarrowtridown
    \tup{x\conc 1,\alpha_1,\gamma_1,\Threads_\tridown}, 
\\
    \tup{x\conc i,(v,\ttypeb),\gamma,\Threads} &\rightarrowtriright
    \tup{x \conc (i+1),\alpha_1,\gamma_1,\Threads_\triright}\label{eq:def:atra:ns}
  \end{align}
  if{f} (i) the configuration is `moving' (\ie,
    all the threads $(q,d)$ contained in $\Threads$ are of the form
    $\delta(q)=\tridown q'$ or $\delta(q)=\triright q'$); and (ii) for $\odot \in \set{\tridown,\triright}$,  $\Threads_\odot = \set{(q',d) \mid {(q,d)\in \Threads}, \delta(q)=\odot\,q'}$.

    Let ${\atraTn}  \;\coloneqq\;  {\rightarrow_{\eps} \cup \rightarrowtridown \cup \rightarrowtriright} \;\subseteq {\atranconf \times \atranconf}$. Note that through $\atraTn$ we obtain a run over a \emph{branch} of the tree (if we think about the underlying binary tree according to the first-child and next-sibling relations). In order to maintain all information about the run over all branches we need to lift this relation to tree configurations. We define the transition between tree configurations that we write $\atraT$. This corresponds to applying a `non-moving' $\rightarrow_\eps$ to a node configuration, or to apply a `moving' $\rightarrowtridown$, $\rightarrowtriright$, or both to a node configuration according to its type.  That is, we define $\cl{S}_1 \atraT \cl{S}_2$ if{f} one of  the following conditions holds:
  \begin{enumerate}[(1)]
  \item $\cl{S}_1=\set{\rho} \cup \cl{S}'$, \;
    $\cl{S}_2=\set{\tau} \cup \cl{S}'$, \;
    $\rho \rightarrow_\eps \tau$;
  \item $\cl{S}_1=\set{\rho} \cup \cl{S}'$, \;
    $\cl{S}_2=\set{\tau} \cup \cl{S}'$, \; 
    $\rho=\tup{x,(\ttypea,\ttypenb),\gamma,\Threads}$, \;
    $\rho \rightarrowtridown \tau$;
  \item $\cl{S}_1=\set{\rho} \cup \cl{S}'$, \;
    $\cl{S}_2=\set{\tau} \cup \cl{S}'$,\; 
    $\rho=\tup{x,(\ttypena,\ttypeb),\gamma,\Threads}$, \;
    $\rho \rightarrowtriright \tau$;
  \item $\cl{S}_1=\set{\rho} \cup \cl{S}'$, \;
    $\cl{S}_2=\set{\tau_1,\tau_2} \cup \cl{S}'$, \; 
    $\rho=\tup{x,(\ttypea,\ttypeb),\gamma,\Threads}$, \;
    $\rho \rightarrowtridown \tau_1$, \;
    $\rho \rightarrowtriright \tau_2$.
  \end{enumerate}

  A \emph{run}
  over a data tree $\tT=\aA \otimes\dD$ is a nonempty
  sequence  $\cl{S}_1 \atraT \dotsb \atraT \cl{S}_n$
  with $\cl{S}_1=\set{\tup{\epsilon,\alpha_0,\gamma_0,\Threads_0}}$ and $\Threads_0 = \set{(q_I,\dD(\epsilon))}$ (\ie, the thread consisting in the initial state with the root's datum), such that for every
  $i \in [n]$ and $\tup{x,\alpha,\gamma,\Threads} \in \cl{S}_i$: (1) $x\in\tpos(\tT)$; (2) $\gamma=\tT(x)$; and (3) $\alpha = \type_{\tT}(x)$.
As before, we say that the run is \emph{accepting} if 
  \[\cl{S}_n \subseteq \set{\tup{x,\alpha,\gamma,\emptyset} \mid  \tup{x,\alpha,\gamma,\emptyset} \in \atranconf }.\]



Note that the transition relation that defines the run replaces a node configuration in a given position by one or two node configurations in children positions of the fcns coding. Therefor the following holds.
  \begin{remark}\label{rem:no2nodeconfdescendant}
Every run is such that  any of its tree configurations never contains two node configurations in a descendant/ancestor relation of the fcns coding.
  \end{remark}

The \atra model is closed under all boolean operations~\cite{JL08}. However, the extensions introduced \opguess and \opspread, while adding expressive power, are not closed under complementation as a trade-off for decidability.
It is not surprising that the same properties as for the case of data words apply here.
\begin{proposition}\label{prop:atra-closed}
  $\atra(\opspread,\opguess)$ models have the following properties:
  \begin{enumerate}[\em(i)]
  \item they are closed under union,
  \item they are closed under
    intersection,
  \item they are not closed under complementation.
  \end{enumerate}
\end{proposition}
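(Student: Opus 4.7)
My plan is to follow exactly the same strategy as was used for the analogous Proposition~\ref{prop:ara-closed} on $\ara(\opguess,\opspread)$, adapting the constructions from data words to data trees. The three parts require two different techniques: (i) and (ii) are positive closure results proved by routine alternating-automaton constructions, while (iii) is a negative result that reduces to the undecidability arguments already given in the word case.

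For item (i), given two automata $\anAut_1=\tup{\A,Q_1,q_{I_1},\delta_1}$ and $\anAut_2=\tup{\A,Q_2,q_{I_2},\delta_2}$ with disjoint state sets, I build $\anAut=\tup{\A,\set{q_I}\uplus Q_1\uplus Q_2,q_I,\delta}$ where $\delta$ extends $\delta_1\cup\delta_2$ with $\delta(q_I)=q_{I_1}\lor q_{I_2}$. For item (ii) I replace $\lor$ by $\land$. As in the proof sketch of Proposition~\ref{prop:ara-closed}, the key point is that because the first argument of $\opspread$ names a specific state in $Q_1$ or $Q_2$, and the state sets are disjoint, an $\opspread(q,q')$ of $\anAut_j$ only acts on threads of $\anAut_j$ and is oblivious to threads of the other component. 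The only delicate scheduling issue is that $\opspread$ requires all other threads to be in a moving or $\opspread$ state; but alternation lets us advance the two components independently on parallel branches of the computation tree, so each component can drive itself to a point where its own $\opspread$ may fire without being blocked by the other.

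For item (iii), I reduce to the word case. Every data word $\wW$ can be encoded as the unary data tree $\tT_\wW$ whose only $\tridown$-branch spells $\wW$ and which contains no $\triright$-edges; recognising the set of such unary trees is trivial in $\atra$ using the type tests $\bar{\triright}?$ at every node and $\ttypena?$ at the bottom. Under this encoding the $\atra(\opguess,\opspread)$ model on unary trees is at least as expressive as $\ara(\opguess,\opspread)$ on words, since one only has to replace every $\triright q$ in an $\ara$ automaton by $\tridown q$. Hence the properties used in the proof of Proposition~\ref{prop:guess-spread-more-expressive} are recognisable by $\atra(\opguess,\opspread)$. If the class were closed under complementation, then using the closure under intersection from (ii) we could intersect the complement with the unary-shape language and thereby express the Minsky-style property~(P1) or~(P2); this would give an undecidable emptiness problem and contradict the (forthcoming) decidability of emptiness for $\atra(\opguess,\opspread)$.

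The only point that is not entirely routine is the scheduling argument alluded to in (i)/(ii), and this is already implicit in the corresponding proof sketch for $\ara$; I will simply spell it out once and reuse it for both constructions. Everything else is bookkeeping.
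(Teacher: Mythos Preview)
Your proposal is correct and matches the paper's approach exactly: the paper gives no separate argument for the tree case, merely stating that ``the same properties as for the case of data words apply here,'' implicitly pointing back to the sketch for Proposition~\ref{prop:ara-closed} and Corollary~\ref{cor:ara-not-closed-compement}. Your reduction of~(iii) via the unary-tree encoding of data words is the natural way to make that implicit reference precise.

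One caveat on your scheduling explanation for (i)/(ii): the phrase ``parallel branches of the computation tree'' is misleading in this semantics. The threads of both components live in the \emph{same} node configuration, and the precondition for firing $\opspread$ is global---\emph{all} threads in the configuration, including those of the other automaton, must be moving or $\opspread$. So neither component can ``drive itself'' independently past that barrier. The correct argument is: first advance both components via non-$\opspread$ $\varepsilon$-transitions until every remaining thread is moving or $\opspread$ (these transitions act on single threads and do not interfere across components); at that point any pending $\opspread$ from either side may fire, and because its first argument names a state from its own $Q_j$ it collects exactly the data it would have collected in the solo run. Iterating this phase-by-phase yields the interleaved accepting run. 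When you ``spell it out,'' phrase it this way rather than via branch independence.
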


\begin{example}
  We show an example of the expressiveness that \opguess adds to \atra. Although as a corollary of Proposition~\ref{prop:guess-spread-more-expressive} we have that the $\atra(\opguess,\opspread)$ class is   more expressive than $\atra$, we give an example that inherently uses the tree structure of the model.
 We force  that the node at position $2$ and the node at position $1\conc 1$ of a data tree to have the same data value  without any further data constraints. Note that this datum does not
  necessarily has to appear at some common ancestor of these nodes.
  Consider the $\atra(\opguess)$ defined over $\A=\set{a}$ with
  \begin{align*}
    \delta(q_0) &= \opguess(q_1), & 
    \delta(q_1) &= \tridown q_2, &  
    \delta(q_2) &= q_3 \land q_4, \\
    \delta(q_3) &= \tridown q_5, & 
    \delta(q_4) &= \triright q_5, &
    \delta(q_5) &= \opeq .
  \end{align*}
For the data trees of Figure~\ref{fig:atra-vs-guess},  any \atra either accepts both, or rejects both. This is because when a thread is at position $1$, and performs a moving operation splitting into two threads, one at position $1\conc 1$, the other at position $2$, none of these configurations contain the data value $2$. Otherwise, the automaton should have read the data value $2$, which is not the case. But then, we see that the continuation of the run from the node configuration at position $2$ and at position $1$ are isomorphic independently of which data values we choose (either the data $2$ and $3$, or the data  $2$ and $2$). Hence, both trees are accepted or rejected. However, the $\atra(\opguess)$ we just built distinguishes them, as it can introduce the data value $2$ in the configuration, without the need of reading it from the tree. 
\begin{figure}
  \centering
  \includegraphics[scale=.55]{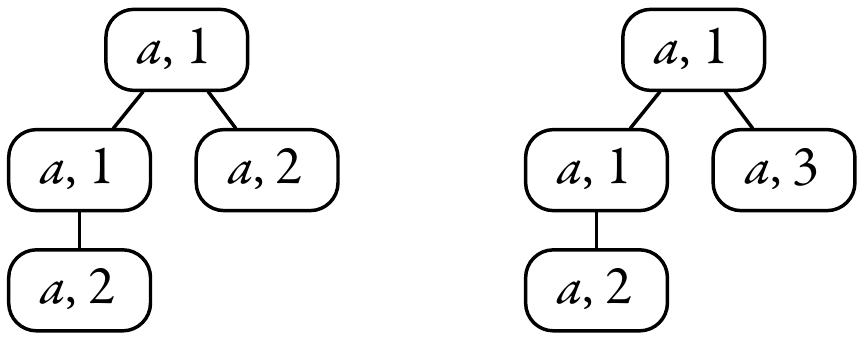}
  \caption{Two indistinguishable data trees for \atra.}
  \label{fig:atra-vs-guess}
\end{figure}
Equivalently, the property of  ``there are two leaves with the same data values'' is expressible in $\ara(\opguess)$ and not in $\ara$.
\end{example}

\subsection{Emptiness problem}
\label{sec:atrags-emptiness}

We show that the emptiness problem for this model is decidable, reusing the results of Part~\ref{part:words}. We remind the reader that the decidability of the emptiness of $\atra$ was proved in~\cite{JL08}. Here we extend the approach used for \ara and show the decidability of the two extensions \opspread and \opguess.

\begin{theorem} \label{thm:atrags-decidable}
  The emptiness problem of $\atra(\opguess,\opspread)$ is decidable.
\end{theorem}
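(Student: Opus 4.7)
The plan is to reuse the machinery developed for $\ara(\opguess,\opspread)$ in Section~\ref{sec:emptiness-problem-ara-guess-spread}, lifting it from configurations (\ie, sets of threads at one position) to tree configurations (\ie, finite sets of node configurations) by means of Proposition~\ref{prop:wsts-higman}. First, I would transport the quasi-order $\lqw$ from the word setting to $\atranconf$, essentially verbatim: a node configuration $\tup{x,\alpha,\gamma,\Threads}$ embeds into $\tup{x',\alpha',\gamma',\Threads'}$ iff $\alpha=\alpha'$, $\gamma=\gamma'$ and $\Threads\subseteq\Threads'$, again modulo a data-value bijection $\eqconf$. The proof of Lemma~\ref{lem:nodeconf-wqo} adapts without modification (the position coordinate is abstracted away), so $(\atranconf,\lqw)$ is a \wqo.

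Next, I would prove that $(\atranconf,\atraTn)$ is \rdc with respect to $\lqw$. All non-moving cases are identical to those already handled in Lemma~\ref{lem:nodeconf-rdc}, since $\rightarrow_\eps$ acts purely on the thread multiset; the moving case for $\rightarrowtriright$ is exactly the word case; and the new case $\rightarrowtridown$ is treated in the same way, since its definition mirrors $\rightarrowtriright$ with $\tridown$ in place of $\triright$, subject to the type gate on $(\ttypea,\_)$ which is preserved under $\lqwS$ (as the second component of the configuration is fixed). Hence $(\atranconf,\lqw,\atraTn)$ forms a \wqo that is \rdc.

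Now I would apply Proposition~\ref{prop:wsts-higman} with $\rightarrow_1 := \atraTn$ and $\rightarrow_2 := \atraT$. This fits the required pattern: inspecting the four clauses defining $\atraT$, each transition takes $\set{a}\cup\hat{\+S}$ to $\set{b_1,\dotsc,b_m}\cup\hat{\+S}$ with $m\in\set{1,2}$ and $a\atraTn b_i$ for each $i$ (the fourth clause uses $m=2$, with $a\rightarrowtridown b_1$ and $a\rightarrowtriright b_2$, both of which lie in $\atraTn$). The proposition then yields that $(\atraconf,\lqdom)$, where $\lqdom$ is the majoring order over $(\atranconf,\lqw)$, is a \wqo that is \rdc with respect to $\atraT$.

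The remaining ingredients are routine. The set of accepting tree configurations—those whose every node configuration has empty thread set—is downward-closed under $\lqdom$, since shrinking a $\Threads$-component preserves emptiness and removing entire node configurations only makes acceptance easier. Working modulo $\eqconf$, the transition system $(\atraconf/{\eqconf},\atraT)$ is effective and finitely branching (each $\atraT$-image has finitely many equivalence classes, computable from the node configuration being rewritten), and the orderings $\lqw$ and $\lqdom$ are decidable. Hence Lemma~\ref{lem:downward-closed-decidable} applies to the finite set of initial tree configurations $\set{\set{\tup{\epsilon,\alpha,(a,d_0),\set{(q_I,d_0)}}}}$ and to the recursive downward-closed set of accepting configurations, yielding decidability of emptiness. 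The only delicate point is verifying that $\atraT$ genuinely matches the format required by Proposition~\ref{prop:wsts-higman}—in particular that the ``split'' clause~(4) produces exactly two successors both obtained by $\atraTn$ from the same node configuration—which is immediate from the definition.
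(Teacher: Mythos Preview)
Your proposal is correct and follows essentially the same route as the paper: transport $\lqw$ to $\atranconf$, observe that Lemma~\ref{lem:nodeconf-rdc} goes through for $\atraTn$ (the new $\rightarrowtridown$ case being symmetric to $\rightarrowtriright$), then lift to tree configurations via Proposition~\ref{prop:wsts-higman} and conclude by Lemma~\ref{lem:downward-closed-decidable}. The only cosmetic difference is that the paper quotients by the equivalence $\anEquivRel$ given by the symmetric part of $\lqdom$ rather than by (a lift of) $\eqconf$, but this does not affect the argument.
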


\begin{proof}
The proof goes as follows. We will reuse the \wqo $\lqw$ used in Section~\ref{sec:emptiness-problem-ara-guess-spread}, which here is defined over the \emph{node configurations}. The only difference being that to use $\lqw$ over $\atranconf$ we work with \emph{tree} types instead of word types. Since $\rightarrow_\triright$ and $\rightarrow_\tridown$ are analogous, by the same proof as in Lemma~\ref{lem:nodeconf-rdc} we obtain the following.
\begin{lemma}\label{lem:atra-nodeconf-rdc}
$(\atranconf,\atraTn)$ is \rdc with respect to $(\atranconf,\lqw)$.
\end{lemma}

We now lift this result to \emph{tree configurations}.
%
We instantiate Proposition~\ref{prop:wsts-higman} by taking $\rightarrow_1$ as $\atraTn$, $\leq$ as $\lqw$, and taking $\lqdom$ the majoring order over $(\atranconf,\lqw)$.
We take $\rightarrow_2$ to be $\atraT{}$ as it verifies the hypothesis demanded in the Lemma. As a result we obtain the following.
\begin{lemma}
  $(\atraconf,\atraT)$ is \rdc with respect to $(\atraconf,\lqdom)$.
\end{lemma}
Hence, condition (1) of Proposition~\ref{prop:rdc-computable} is met. 
Let us write $\anEquivRel$ for the equivalence relation over $\atraconf$ such that $\+S \anEquivRel \+S'$ if{f} $\+S \lqdom \+S'$ and $\+S' \lqdom \+S$.
Similarly as for Part~\ref{part:words}, we have that $(\atraconf/{\anEquivRel},\atraT)$ is finitely branching and effective. That is, the $\atraT$-image of any configuration has only a finite number of configurations up to isomorphism of the data values contained (remember that only equality between data values matters), and representatives for every class are computable. Hence, we have that $(\atraconf/{\anEquivRel},\atraT,\lqdom)$ verifies condition (2) of Proposition~\ref{prop:rdc-computable}. Finally, condition (3) holds as $(\atraconf/{\anEquivRel},\lqdom)$ is a \wqo (by Proposition~\ref{prop:lqdom-wqo}) that is a computable relation.
We conclude the proof by the following obvious statement.
\begin{lemma}
  The set of accepting tree configurations is downwards closed with respect to $\lqdom$.
\end{lemma}
Hence, by Lemma~\ref{lem:downward-closed-decidable}, we conclude as before that the emptiness problem for the class $\atra(\opguess,\opspread)$ is decidable.
\end{proof}

\section{Forward-XPath}
\label{sec:forward-xpath}
\label{sec:definitionsxpath}

We consider a navigational fragment of \xpath $1.0$ with data equality and inequality. In particular this logic is here defined over \emph{data trees}. However, an \xml document may typically have not \emph{one} data value per node, but a set of \emph{attributes}, each carrying a data value. This is not a problem since every attribute of an \xml element can be encoded as a child node in a data tree labeled by the attribute's name (\cf\ Section~\ref{sec:dtrees-xml}). Thus, all the decidability results hold also for \xpath with attributes over \xml documents.

Let us define a simplified syntax for this logic. \xpath is a two-sorted language, with \emph{path} expressions ($\alpha, \beta, \dotsc$) and \emph{node} expressions ($\varphi, \psi, \dotsc$). We write $\xpath(\+{O},=)$ to denote the data-aware fragment with the set of axes $\+{O} \subseteq \set{\down,\td,\rightarrow,\tr,\leftarrow,\tl, \uparrow, \tu\,}$. It is defined by mutual recursion as follows,
\begin{align*}
\alpha, \beta \;&\Coloneqq\;  o \midd [\varphi] \midd \alpha\beta \midd \alpha \cup \beta && o \in \+{O}\cup\set{\varepsilon}\\
\varphi, \psi \;&\Coloneqq\; a \midd \lnot \varphi \midd \varphi \lor \psi \midd \varphi \land \psi \midd \tup{\alpha} \midd \tup{\alpha = \beta} \midd \tup{\alpha \not= \beta} && a \in \A
\end{align*}
where $\A$ is a finite alphabet. A \emph{formula} of $\xpath(\+{O},=)$ is either a node expression  or a path expression. We define the `forward' set of axes as  $\mathfrak F \coloneqq  \set{\down,\td,\rightarrow,\tr}$, and consequently the fragment `forward-\xpath' as $\xpath(\mathfrak F,=)$. We also refer by $\xpatheps(\mathfrak F,=)$ to the fragment considered in \cite{JL08} where data tests are of the restricted form $\tup{\eps = \alpha}$ or $\tup{\eps \not= \alpha}$.\footnote{\cite{JL08} refers to $\xpatheps(\mathfrak F,=)$ as `forward \xpath'. Here, `forward \xpath' is the unrestricted fragment $\xpath(\mathfrak F,=)$, as we believe is more appropriate.}

There have been efforts to extend \xpath to have the full expressivity of MSO, e.g. by adding a least fix-point operator (\cf\ \cite[Sect.~4.2]{tC06}), but these logics generally lack clarity and simplicity. However, a form of recursion can be added by means of the Kleene star, which allows us to take the transitive closure of any path expression. Although in general this is not enough to already have MSO \cite{tCS08}, it does give an intuitive language with a counting ability. By $\rxpath(\+{O},=)$ we refer to the enriched language where path expressions are extended by allowing the Kleene star on \emph{any} path expression.
\[\alpha, \beta \;\Coloneqq\; o \midd [\varphi] \midd \alpha\beta \midd \alpha \cup \beta \midd \alpha^* \qquad o \in \+{O}\cup\set{\varepsilon}\]

\newcommand{\model}{\ensuremath{\cl{T}}\xspace}
Let $\tT$ be a data tree. The semantics of \xpath is defined as the set of elements (in the case of node expressions) or pairs of elements (in the case of path expressions) selected by the expression. The data aware expressions are the cases $\tup{\alpha = \beta}$ and $\tup{\alpha \not=\beta}$. The formal  definition of its semantics is in Figure~\ref{fig:xpath-semantics}. We write $\tT \models \varphi$ to denote $\dbracket{\varphi}^{\tT}\not=\emptyset$, and in this case we say that $\tT$ \emph{satisfies} $\varphi$.

\begin{figure}
\newcommand{\indspace}{\mathrel{\phantom{=}}\mathop{\phantom{\{}}}
\small
  \centering
  \begin{align*}
   \abracket{\varepsilon}^\tT & = \set{(x,x) \mid x \in \tpos(\tT)} & \abracket{\downarrow}^\tT & = \set{(x,x\conc i) \mid x \conc i \in
    \tpos(\tT)} \\ 
\abracket{\alpha \cup \beta}^\tT & =
    \abracket{\alpha}^\tT \cup \abracket{\beta}^\tT&
\abracket{\rightarrow}^\tT & = \set{(x \conc i,x\conc (i+1)) \mid x \conc (i+1) \in
    \tpos(\tT)}\\ 
\abracket{\varphi \land \psi}^\tT & = \abracket{\varphi}^\tT \cap
    \abracket{\psi}^\tT&
\abracket{\alpha^*}^\tT & =
    \textrm{the reflexive transitive closure of
    }\abracket{\alpha}^\tT\\
    \abracket{a}^\tT & = \set{ x \in \tpos(\tT) \mid
    \aA(x) = a }    &
\abracket{\alpha \beta}^\tT & = \{(x,z) \mid
    \text{ there exists } y 
\text{ such that }\\
    \abracket{\lnot \varphi}^\tT & =  \tpos(\tT) \setminus \abracket{\varphi}^\tT&&\indspace (x,y) \in
    \abracket{\alpha}^\tT, (y,z) \in \abracket{\beta}^\tT\}    \\
     \abracket{\tup{\alpha}}^\tT & = \{ x \in \tpos(\tT) \mid \exists
    y. (x,y) \in \abracket{\alpha}^\tT \}
&\abracket{[\varphi]}^\tT & = \{(x,x)  \mid x \in \abracket{\varphi}^\tT\}
\\
 \abracket{\tup{\alpha {=} \beta}}^\tT 
& = \{ x \in \tpos(\tT) \mid \exists y, z. (x,y) \in
    \abracket{\alpha}^\tT,
&
\abracket{\tup{\alpha {\not=} \beta}}^\tT & =  \{ x \in \tpos(\tT) \mid \exists y, z. (x,y) \in \abracket{\alpha}^\tT,
    \\ 
&\indspace (x,z) \in
    \abracket{\beta}^\tT, \dD(y)=\dD(z)\}&
&\indspace (x,z) \in \abracket{\beta}^\tT, \dD(y)\not=\dD(z)\}
  \end{align*}
  \caption{Semantics of $\rxpath(\mathfrak F,=)$ for a data tree $\tT = \aA \otimes \dD$.}
\label{fig:xpath-semantics}
\end{figure}

\begin{example}
In the model of Figure~\ref{fig:data-tree} on page \pageref{fig:data-tree}, \[\dbracket{\;\tup{\;\td\![b\;\land \tup{\down\![b]\not=\down\![b]}]\;}\;}^\tT=\set{\epsilon,\, 1,\, 1{\conc} 2}.\]
\end{example}

We define $\subf(\varphi)$ to denote the set of all substrings of $\varphi$ which are formul{\ae},
 $\psubf(\varphi)\coloneqq \{ \alpha \mid \alpha \in \subf(\varphi), \alpha$ is a path expression$\}$, and
$\nsubf(\varphi)\coloneqq \{ \psi \mid \psi \in \subf(\varphi), \psi$ is a node expression$\}$.

\subsection{Key constraints}
It is worth noting that $\xpath(\mathfrak F,=)$ ---contrary to $\xpatheps(\mathfrak{F},=)$--- can express unary \emph{key constraints}. That is, whether for some symbol $a$, all the $a$-elements in the tree have different data values. 
\begin{lemma}\label{lem:primary-key}
For every $a \in \A$  let $key(a)$ be the property over a data tree $\tT=\aA \otimes \dD$: ``\textsl{For every two different positions $x,x' \in \tpos(\tT)$ of the tree, if $\aA(x)=\aA(x')=a$, then $\dD(x)\not=\dD(x')$}.'' Then, $key(a)$ is expressible in $\xpath(\mathfrak F, =)$ for any $a$.
\end{lemma}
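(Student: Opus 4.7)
The plan is to exhibit $key(a)$ as the negation of a $\xpath(\mathfrak F,=)$ formula that asserts the existence of two distinct $a$-nodes with equal data value; this works because $\xpath(\mathfrak F,=)$ is closed under Boolean combinations. The only real difficulty is that, with only forward axes, the data-equality test $\tup{\alpha=\beta}$ does not inherently guarantee that its two witnesses $y_1,y_2$ are distinct positions. I would overcome this by doing a case split on the lowest common ancestor of the two hypothetical witnesses.

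Concretely, given two distinct $a$-labeled positions $x,x'$ with $\dD(x)=\dD(x')$, let $z$ be their LCA. Exactly one of the following holds: (A) one of $x,x'$ equals $z$, so one is a strict descendant of the other; or (B) $z$ is a proper ancestor of both, and $x,x'$ descend through two distinct children $c_1,c_2$ of $z$, where (after possibly swapping) $c_2$ is a proper following sibling of $c_1$. Case (A) is captured by
\[
\varphi_A \;:=\; \tup{\td\bigl[\,a \land \tup{\tds[a] = \eps}\bigr]}
\]
evaluated at the root: the outer $\td$ lets us pick any node $u$ of the tree; at $u$ we require $u$ to be labeled $a$ and to have a strict $a$-descendant $v$ with $\dD(v)=\dD(u)$, whence $u\neq v$ automatically. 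Case (B) is captured by
\[
\varphi_B \;:=\; \tup{\td\bigl[\,\tup{\td[a] \,=\, \trs\td[a]}\,\bigr]},
\]
where the outer $\td$ picks $c:=c_1$; at $c_1$, the left branch $\td[a]$ reaches an $a$-node $y_1$ in the subtree of $c_1$ (including $c_1$ itself), while the right branch $\trs\td[a]$ reaches an $a$-node $y_2$ in the subtree rooted at some proper following sibling of $c_1$. Because these two subtrees are disjoint, $y_1\neq y_2$ is forced; this is precisely what makes the forward-only fragment suffice.

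I would then verify the equivalence in both directions. For soundness, every witness produced by $\varphi_A$ or $\varphi_B$ yields two distinct $a$-nodes with equal data value (by the distinctness arguments above). For completeness, given any pair $x\neq x'$ of $a$-nodes with $\dD(x)=\dD(x')$, the LCA dichotomy places us either in case (A), making $\varphi_A$ true at the root with $u=z\in\{x,x'\}$ and $v$ the other one, or in case (B), making $\varphi_B$ true at the root with $c=c_1$. Hence
\[
\neg\bigl(\varphi_A \lor \varphi_B\bigr) \;\equiv\; key(a),
\]
and since $\neg$, $\lor$, $\tup{\cdot}$, $\tup{\cdot=\cdot}$, concatenation and test are all available in $\xpath(\mathfrak F,=)$, this formula is in the fragment.

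The step I expect to be the main obstacle (and the only substantive one) is ensuring the distinctness of the two witnesses in the data-equality test. The trick $\tup{\td[a]=\trs\td[a]}$ evaluated at an internal node sidesteps this by forcing the witnesses into disjoint subtrees, which is the crucial insight; the remainder of the argument is a straightforward case analysis on the LCA.
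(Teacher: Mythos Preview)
Your proposal is correct and essentially identical to the paper's proof: the paper also expresses $\lnot key(a)$ via the same two cases, writing it as the single formula $\lnot \tup{\td[\, \tup{\eps[a] = \tds[a]} \lor \tup{\td[a] = \trs\td[a]} \,]}$, and justifies distinctness of the witnesses by the same disjoint-subtrees argument (phrased there in terms of the first-child/next-sibling common ancestor). The only differences are cosmetic: you split the disjunction into two separate outer $\td$'s and describe the case split via the LCA in the unranked tree rather than the fcns coding.
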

\begin{proof}
  It is easy to see that the \emph{negation} of this property can be
  tested by first \emph{guessing} the closest common ancestor of two
  different $a$-elements with equal datum in the underlying
  first-child next-sibling binary tree. At this node, we
  verify the presence of two $a$-nodes with equal datum, one
  accessible with a ``$\td\,$'' relation and the other with a compound
  ``$\trs \, \td\,$'' relation (hence the nodes are different). The
  expressibility of the property then follows from the logic being
  closed under negation. The reader can check that the following formula expresses the property
  \begin{gather*}
    ~~~~~~key(a) \equiv \lnot \tup{\;\td [\, \tup{\eps[a] = \tds[a]} \lor \tup{\td[a] = \trs\td[a]} \, ]\;}~~~~~~ 
  \end{gather*}
where `$\tds$\,' $=$ `$\down\td$\,' and `$\trs$\,' $=$ `$\rightarrow \tr$\,'.
\end{proof}

Note that while \atra cannot express, for instance, that there are two different nodes with the same data value, $\atra(\opguess)$ can express it. But on the other hand, $\atra(\opguess,\opspread)$ cannot express the \emph{negation} of the property.
\begin{lemma}\label{lem:atrags-cannot-leaves}
 The class  $\atra(\opguess,\opspread)$ cannot express the property ``all the data values of the data tree are different''.
\end{lemma}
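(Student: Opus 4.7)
The plan is to proceed by contradiction: assume $\psi=$ ``all data values of the data tree are different'' is recognized by some $\anAut_\psi \in \atra(\opguess,\opspread)$, and derive a reduction from the halting problem for $2$-counter Minsky machines to the emptiness of some language in $\atra(\opguess,\opspread)$, contradicting Theorem~\ref{thm:atrags-decidable}.

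Following the encoding strategy of Proposition~\ref{prop:guess-spread-more-expressive}(a), I would first reduce Minsky runs to linear data trees (\ie, single branches, essentially data words). The encoding of a valid run requires conditions \minski, \minskii, and \minskiii; the first two are $\atra$-expressible, and the third one---``every decrement has a preceding matching increment with the same data value''---is the conceptual obstacle left open by \cite{DL-tocl08,JL08}. The key step of the plan is to leverage $\anAut_\psi$ to express condition \minskiii.

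The idea is to augment the input with auxiliary ``identifier'' nodes, carried in a separate subtree that is structurally distinguished by labels. Each increment node would be linked by equality of data values (a test easily expressible in $\atra$) to some identifier node, and symmetrically each decrement node would be linked to an identifier. Asking that the identifier subtree satisfies $\psi$, through a subautomaton based on $\anAut_\psi$, would force the identifier data values to be pairwise distinct and therefore in bijection with the data values of the increments and decrements they witness. Combined with the linking constraints and with the ordering constraints of $\atra$, this would force that the set of decrement data values is included in the set of increment data values occurring earlier---precisely condition \minskiii. Closure of $\atra(\opguess,\opspread)$ under intersection (Proposition~\ref{prop:atra-closed}) would then give the required automaton whose emptiness decides Minsky halting.

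The main obstacle is the \emph{global} scope of $\psi$: $\anAut_\psi$ constrains the data values of the \emph{entire} tree to be distinct, while the Minsky encoding inherently needs certain data values to \emph{repeat} (since the matching of increments with decrements is coded via shared data values). The hard part of the plan is thus to restrict the effect of $\psi$ to the identifier subtree alone. One would attempt this by engineering a tree with two disjoint subtrees under the root---one holding the Minsky run, one holding the identifiers---and by composing $\anAut_\psi$ with a navigation prefix so that its threads only descend into the identifier subtree. The delicate point is that the $\opspread$ operator reasons over all data values seen by threads in the current configuration, so care is needed to ensure that threads carrying ``main-run'' data values do not pollute the verification of $\psi$ on the identifier subtree; this may require a preliminary structural transformation, enforceable through additional $\atra$ constraints, to syntactically isolate the two regions.
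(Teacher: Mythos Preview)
Your approach has a genuine gap at its central step. You claim that linking each increment and each decrement by data equality to some identifier node, together with forcing the identifier subtree to satisfy $\psi$, ``would force that the set of decrement data values is included in the set of increment data values occurring earlier.'' But this does not follow. Your linking constraints are purely existential on the run side: they yield $\{\text{inc values}\}\subseteq\{\text{identifier values}\}$ and $\{\text{dec values}\}\subseteq\{\text{identifier values}\}$, and distinctness of the identifiers adds nothing beyond what condition~\minski{} already guarantees. To obtain $\{\text{dec values}\}\subseteq\{\text{earlier inc values}\}$ you would additionally need something like ``for every identifier there exists an increment with the same data value,'' and that $\forall\exists$ same-data pattern is precisely condition~\minskiii, the thing you set out to encode. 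So the reduction is circular: the hypothetical automaton $\anAut_\psi$ for ``all data values are distinct'' simply does not give you a handle on the inclusion of one set of data values in another. The secondary obstacle you raise---that $\opspread$ ranges over the whole configuration, so $\anAut_\psi$ cannot be cleanly confined to a subtree---is real and also unresolved, but it is moot given that the main step does not go through.

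The paper's argument is entirely different and much more direct: it is an indistinguishability (swapping) argument rather than a Minsky reduction. One fixes a putative $\anAut$ with state set $Q$, chooses $N$ large compared to $|Q|$, and takes a tree $\tT$ whose root has $N$ children, the first of which carries a branch of length $N$, with all data values pairwise distinct. In a minimal accepting run, consider the first tree configuration $\cl{S}_i$ that contains a node configuration at position~$2$. Only boundedly many transitions (two moving ones, plus at most $f(|Q|)$ consecutive $\varepsilon$-transitions at each of the two visited positions) have occurred, so $\cl{S}_i$ mentions only boundedly many data values. Hence there exist a position $x$ among the root's children $2,\dotsc,N$ and a position $y$ on the branch below~$1$ whose data values are both absent from $\cl{S}_i$. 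Replacing $\dD(x)$ by $\dD(y)$ everywhere in $\tT$ and in the run yields a tree with a repeated data value on which the same run is still accepting---contradiction. This avoids any appeal to Theorem~\ref{thm:atrags-decidable} and shows the inexpressibility unconditionally.
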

\begin{proof}
Ad absurdum, suppose that there exists an automaton $\anAut$ expressing the property, and let $Q$ be its set of states.
Notice that for any accepting run of minimal length there are no more that $f(|Q|)$ consecutive $\epsilon$-transitions (\ie, $\atraT$ transitions that are fired by an underlying $\rightarrow_\epsilon$ transition on node configurations), for some fixed function $f$.

Consider a data tree $\tT$ with 
\[\tpos(\tT) = \set{\epsilon} \cup \set{1,2, 3, \dotsc, N} \cup \set{11, 111, \dotsc, \underbrace{111 \dotsb 1}_{\text{$N$ times}}}\] 
for $N= 2\cdot f(|Q|) +4$. That is, the root of the tree has $N$ children, and the first child has a long branch with $N$ nodes. All positions of $\tT$ have different data values, and they all  carry the same label, say $a$. 

Consider a minimal accepting run of $\anAut$ on $\tT$, $\cl{S}_1 \atraT \dotsb \atraT \cl{S}_n$. Let $\cl{S}_i$ be the first tree configuration containing a node configuration with position $2$. That is, the first configuration after the second moving transition. Since there are at most $2\cdot f(|Q|)$ non-moving transitions before $\cl{S}_i$ in the run, we have that $i \leq 2 \cdot f(|Q|) +2$. In particular, this means that $\cl{S}_i$ cannot contain more than $2 \cdot f(|Q|) +2$ different data values. Therefore, there is a  position $x$ with $2 \leq x \leq N$ and a position $y$ with $y \descendant 1$ such that neither $\dD(x)$ or $\dD(y)$ are in $\data(\cl{S}_i)$. (Remember that by definition of $\tT$, $\dD(x) \neq \dD(y)$.) This is because there are $N-1 = 2\cdot f(|Q|) +3$ different possible data values of $x$ and for $y$.

Consider $\tT'$ as the result of replacing $\dD(x)$ by $\dD(y)$ in $\tT$. Clearly, $\tT'$ does not have the property ``all the data values of the data tree are different''. Now, consider the run obtained as the result of replacing $\dD(x)$ by $\dD(y)$ in $\cl{S}_1, \dotsc, \cl{S}_n$. Note that this is still a run, and it is still accepting. Therefore $\anAut$ accepts $\tT$ and thus $\anAut$ does not express the property.
\end{proof}

\subsection{Satisfiability problem}
\label{sec:reduction-atra}
This section is mainly dedicated to the decidability of the satisfiability problem for $\xpath(\mathfrak F, =)$, known as `forward-\xpath'. This is proved by a reduction to the emptiness problem of the automata model $\atra(\opguess,\opspread)$ introduced in Section~\ref{sec:atra-model}. 

\cite{JL08} shows that \atra captures the fragment \mbox{$\xpatheps(\mathfrak{F},=)$}.
 It is immediate to see that
\atra can also easily capture the Kleene star operator on any path
formula, obtaining decidability of
$\rxpatheps(\mathfrak{F},=)$. However, these decidability results
cannot be further generalized to the full unrestricted forward fragment
$\xpath(\mathfrak F,=)$ as \atra is not powerful enough to capture the full expressivity of the logic. Indeed, while $\xpath(\mathfrak F,=)$ can express that all the data values of the leaves are different, $\atra(\opguess,\opspread)$ cannot (Lemma~\ref{lem:atrags-cannot-leaves}).
Although $\atra(\opguess,\opspread)$ cannot capture $\xpath(\mathfrak F,=)$, in the sequel we show that there exists a
reduction from the satisfiability of
$\rxpath(\mathfrak{F},=)$ to the emptiness of $\atra(\opguess,\opspread)$, and hence that the former problem is decidable.
This result settles an open question 
regarding the decidability of the satisfiability problem for the 
forward-\xpath fragment $\xpath(\mathfrak{F},=)$.
The main results that will be shown in Section~\ref{sec:allow-arbitr-data} are the following.
\begin{theorem}\label{thm:fxpath-dec}
Satisfiability of $\rxpath(\mathfrak F,=)$ in the presence of DTDs (or any regular language) and unary key constraints is decidable, non-primitive-recursive. 
\end{theorem}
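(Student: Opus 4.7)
The plan is to reduce satisfiability of $\rxpath(\mathfrak F,=)$, together with any DTD (or any regular tree language) and unary key constraints, to the emptiness problem of $\atra(\opguess,\opspread)$, which is decidable by Theorem~\ref{thm:atrags-decidable}. Any regular tree language is recognisable by a standard $\atra$, so it can be conjoined with the translation of the formula using the closure of the class under intersection (Proposition~\ref{prop:atra-closed}). Unary key constraints are themselves $\xpath(\mathfrak F,=)$-definable by Lemma~\ref{lem:primary-key}, so finitely many of them can be absorbed as an additional conjunct in the input formula. The task therefore reduces to effectively associating, with each formula $\varphi \in \rxpath(\mathfrak F,=)$, an automaton $\anAut_\varphi$ of $\atra(\opguess,\opspread)$ that is nonempty iff $\varphi$ is satisfiable.

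First I would put $\varphi$ into negation normal form, making the positive and the universal (negated) data tests primitive constructs, then build $\anAut_\varphi$ by structural induction. Boolean connectives go through alternation, the axes in $\mathfrak F$ are implemented on the first-child/next-sibling coding using $\tridown$, $\triright$ and self-looping states for the transitive versions, and the Kleene star on an arbitrary path expression is handled by a standard loop over the subautomaton. The base node expressions $a, \lnot a$ are immediate. The existential data tests $\tup{\alpha=\beta}$ and $\tup{\alpha\neq\beta}$ at a node $x$ are translated using \opguess: guess the common (respectively, the two distinct) data values at $x$, and then verify in parallel, via alternation, the existence of $\alpha$- and $\beta$-endpoints carrying them.

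The central difficulty is the universal data tests arising from $\lnot\tup{\alpha=\beta}$ and $\lnot\tup{\alpha\neq\beta}$: the first, for instance, asserts that \emph{every} $\alpha$-endpoint $y$ and \emph{every} $\beta$-endpoint $z$ from $x$ satisfy $\dD(y)\neq\dD(z)$. The idea is to use \opspread, but since \opspread only quantifies over data values already present in the node configuration at $x$, one must first stage at $x$ a collection of threads carrying exactly the $\alpha$-endpoint data values, before the automaton descends. I plan to do this by nondeterministically guessing, at $x$, a finite set of candidate data values via \opguess, imposing two checks in parallel via alternation: (i) \emph{soundness}~---~each guessed datum is witnessed by an actual $\alpha$-endpoint from $x$; and (ii) \emph{coverage}~---~no $\alpha$-endpoint from $x$ carries a datum outside the guessed set. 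The fact that $\mathfrak F$ contains only forward axes is crucial, because it lets us pre-commit to the witnesses at $x$ before moving down in the fcns tree. Once staged correctly, a final \opspread at $x$ duplicates each witness thread $(q_\alpha,d)$ into a thread $(q_\beta,d)$ that verifies no $\beta$-endpoint from $x$ carries datum $d$; the dual case $\lnot\tup{\alpha\neq\beta}$ is symmetric.

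The most delicate step is the coverage condition: it is where the one-way, fcns-directed discipline of $\atra(\opguess,\opspread)$ clashes most with the two-sided universal quantification implicit in the negated data tests, and the bookkeeping of states and threads will make this construction heavy. Correctness is then by induction on the structure of $\varphi$, matching semantics of each clause with the language of the constructed automaton. Finally, the non-primitive-recursive lower bound is inherited from the already-known lower bound for the strictly weaker fragment $\xpatheps(\mathfrak F,=)\subseteq\rxpath(\mathfrak F,=)$ mentioned in the introduction, so no additional work is needed there.
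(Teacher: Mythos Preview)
Your high-level reduction is correct and matches the paper: absorb key constraints into the formula via Lemma~\ref{lem:primary-key}, intersect with an $\atra$ for the DTD using Proposition~\ref{prop:atra-closed}, and inherit the non-primitive-recursive lower bound from the smaller fragment. The treatment of the positive data tests via $\opguess$ is also fine.

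The gap is in your handling of $\lnot\tup{\alpha=\beta}$. Your plan is to guess, at the evaluation node $x$, a finite set $D$ of data values, and then enforce \emph{coverage}: every $\alpha$-endpoint from $x$ carries a datum in $D$. This is not implementable in $\atra(\opguess,\opspread)$. At an $\alpha$-endpoint $y$ you would need to check ``$\dD(y)$ equals \emph{some} datum held by a thread in state $q_\alpha$'', i.e.\ an \emph{existential} test across the threads of the current configuration; the model offers only the \emph{universal} $\opspread$. More directly: if your construction worked, then $\atra(\opguess,\opspread)$ would recognise the language of $key(a)$ (all $a$-nodes carry pairwise distinct data), since $key(a)$ is an instance of $\lnot\tup{\alpha=\beta}$. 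This contradicts Lemma~\ref{lem:atrags-cannot-leaves}. Hence no automaton in the class can capture $\lnot\tup{\alpha=\beta}$ exactly, and your inductive invariant (``$\anAut_\varphi$ has the same language as $\varphi$'') cannot be maintained.

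What the paper does instead is give up on language equivalence and aim only at \emph{equisatisfiability}. The missing idea is the \emph{disjoint values property} (Proposition~\ref{prop:normal-form-disjoint-values-properties}): any nonempty $\atra(\opguess,\opspread)$ has an accepting run on some data tree in which, at every moving configuration $\rho$ with position $x{\cdot}i$, all data values shared between the subtree at $x{\cdot}i$ and the subtrees at $x{\cdot}j$ for $j>i$ already occur in $\data(\rho)$. Under this assumption, to refute $\tup{\alpha=\beta}$ it suffices, at each potential fcns closest common ancestor $z$ of an $\alpha$-endpoint and a $\beta$-endpoint, to run $\opspread$ over the data values \emph{currently in the configuration at $z$} and check, for each such $d$, that it is not simultaneously the datum of an $\alpha$-continuation and of a $\beta$-continuation from $z$. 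This is only a sound over-approximation of $\lnot\tup{\alpha=\beta}$ on arbitrary trees, but it is exact on dvp runs; Lemma~\ref{lem:xp->atra} together with Proposition~\ref{prop:normal-form-disjoint-values-properties} then yields Proposition~\ref{prop:rxpathRedAtra}. Finally, $\lnot\tup{\alpha\neq\beta}$ is not symmetric to $\lnot\tup{\alpha=\beta}$: it is the easier case, since it is equivalent to ``no $\alpha$-endpoint, or no $\beta$-endpoint, or there \emph{exists} a single $d$ carried by all $\alpha$- and all $\beta$-endpoints'', and is handled with a single $\opguess$ followed by universal checks.
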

And hence the next corollary follows from the logic being closed under boolean operations.
\begin{corollary}
  The query containment and the query equivalence problems are decidable for $\xpath(\mathfrak F,=)$.
\end{corollary}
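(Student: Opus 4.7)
The plan is to reduce each of the four problems (node-level and path-level containment and equivalence) to satisfiability of a node expression in $\xpath(\mathfrak F,=)$ and then invoke Theorem~\ref{thm:fxpath-dec}.

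For \emph{node} expressions $\varphi, \psi$ the reduction is immediate from closure under the boolean connectives: $\varphi \subseteq \psi$ holds iff $\varphi \land \lnot\psi$ is unsatisfiable, and $\varphi \equiv \psi$ iff both $\varphi \land \lnot\psi$ and $\psi \land \lnot\varphi$ are unsatisfiable.

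For \emph{path} expressions $\alpha, \beta$ over alphabet $\A$, I would reduce non-containment $\alpha \not\subseteq \beta$ to satisfiability of a node expression that encodes a witness pair $(x,y)$. First I enlarge the alphabet to $\A' = \A \times \set{\bot, s, t, st}$, so that an $\A'$-tree records an $\A$-tree together with two distinguished (possibly coinciding) positions marked as source and target. Let $\alpha^\dag, \beta^\dag$ be obtained from $\alpha, \beta$ by syntactically rewriting each label test $[a]$ into $[(a,\bot) \lor (a,s) \lor (a,t) \lor (a,st)]$, so that $\alpha^\dag, \beta^\dag$ on an $\A'$-tree behave exactly like $\alpha, \beta$ on its $\A$-projection. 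Writing $s^+$ (resp.\ $t^+$) for the node test ``marker is $s$ or $st$'' (resp.\ ``marker is $t$ or $st$''), I define a formula $\mathit{unique}(\chi)$ of $\xpath(\mathfrak F,=)$ asserting that exactly one node of the tree satisfies $\chi$. Its definability is obtained by the same case analysis as in Lemma~\ref{lem:primary-key}: the existence of two distinct $\chi$-nodes in an unranked ordered tree is equivalent to the existence of a node witnessing either (i) a $\chi$-node that has a $\tds$-descendant $\chi$-node, or (ii) a node with two children $\trs$-related whose $\td$-subtrees both contain $\chi$-nodes; forbidding both shapes by a negated existential gives ``at most one'', which combined with $\tup{\td[\chi]}$ gives ``exactly one''. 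Then the formula
\[
\eta_{\alpha,\beta} \;\coloneqq\; \mathit{unique}(s^+) \land \mathit{unique}(t^+) \land \tup{\td\,[\,s^+ \land \tup{\alpha^\dag[t^+]} \land \lnot\tup{\beta^\dag[t^+]}\,]}
\]
is satisfiable over $\A'$-trees iff $\alpha \not\subseteq \beta$ over $\A$-trees: uniqueness of the $t^+$-node forces $\tup{\alpha^\dag[t^+]}$ and $\tup{\beta^\dag[t^+]}$ to refer to the same target, while the outer $\tup{\td[\dots]}$ locates the source and the $\A$-projection recovers a genuine witness tree for $\A$.

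Path equivalence reduces to two containment checks, so all four problems reduce to satisfiability of a formula of $\xpath(\mathfrak F,=)$ and are therefore decidable by Theorem~\ref{thm:fxpath-dec}. The only delicate point in this plan is the definability of $\mathit{unique}(\chi)$ in the purely forward fragment, which is exactly the two-shape case analysis already exploited for unary key constraints in Lemma~\ref{lem:primary-key}; everything else is routine syntactic manipulation.
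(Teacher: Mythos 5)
Your node-level reduction is exactly the paper's argument: the corollary is stated with no proof beyond the remark that it ``follows from the logic being closed under boolean operations,'' i.e.\ $\varphi\subseteq\psi$ iff $\varphi\land\lnot\psi$ is unsatisfiable, and equivalence is mutual containment, so Theorem~\ref{thm:fxpath-dec} applies. Where you genuinely go beyond the paper is the binary (path-expression) case, which the paper does not spell out; your source/target-marking construction over the enlarged alphabet $\A\times\set{\bot,s,t,st}$ is correct and is the natural way to handle it. Two small observations on that extra construction. First, your $\mathit{unique}(\chi)$ is indeed definable exactly as you say --- it is the data-free analogue of the two-shape case analysis behind $key(a)$ in Lemma~\ref{lem:primary-key} --- but, being built from forward axes only, it counts $\chi$-nodes in the subtree of the evaluation point rather than in the whole tree; this is harmless here because in the one direction where the markers matter you construct the model yourself and may evaluate $\eta_{\alpha,\beta}$ at the root. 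Second, the uniqueness conjuncts are actually not needed for soundness: if $\eta_{\alpha,\beta}$ is satisfiable, the witness pair $(x,y)$ is read off directly from $\tup{\alpha^\dag[t^+]}\land\lnot\tup{\beta^\dag[t^+]}$ (any $t^+$-node $\beta$-reachable from $x$ would already violate the second conjunct), and in the converse direction you control the marking and place a single target marker. So your proposal subsumes the paper's one-line argument and adds a correct, more general treatment of path queries; the paper's version buys brevity by implicitly restricting attention to node expressions, for which boolean closure does all the work.
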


Moreover, these decidability results hold for $\rxpath(\mathfrak F,=)$ and even for two extensions:
\begin{iteMize}{$\bullet$}
\item a navigational extension with \emph{upward} axes (in Section~\ref{sec:allowing-upward-axes}), and
\item a generalization of the data tests that can be performed (in Section~\ref{sec:allow-strong-data}).
\end{iteMize}

\subsection{Data trees and {XML} documents}
\label{sec:dtrees-xml}



Although our main motivation for working with trees is related to static analysis of logics for \xml documents, we work with \emph{data trees}, being a simpler formalism to work with, from where results can be transferred to the class of \xml documents. We discuss briefly how all the results we give on \xpath over data trees, also hold for the class of \xml documents.

\begin{figure}
  \centering
    \includegraphics[scale=.5]{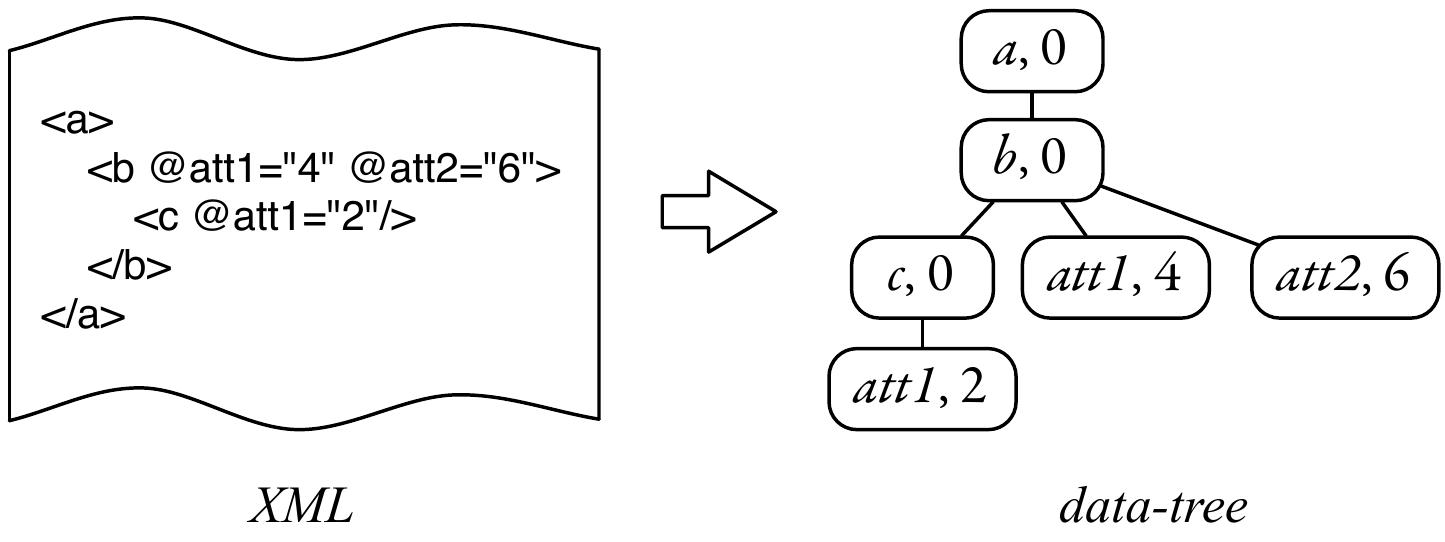}
  \vspace{-1mm}  
  \caption{From XML documents to data-trees.}
  \label{fig:xml2datatree}
  \vspace{-3mm}  
\end{figure}

While a data tree has \emph{one} data value for each node, an \xml document may have \emph{several} attributes at a node, each with a data value. However, every attribute of an \xml element can be encoded as a child node in a data tree labeled by the attribute's name, as in Figure~\ref{fig:xml2datatree}. This coding can be enforced by the formalisms we present below, and we can thus transfer all the decidability results to the class of \xml documents. In fact, it suffices to demand that all the attribute symbols can only occur at the leaves of the data tree and to interpret attribute expressions like `$\mathit{@attrib1}$' of \xpath formul{\ae} as child path expressions `$\down[\mathit{attrib1}]$'.

\subsection{Decidability of forward XPath}
\label{sec:allow-arbitr-data}
This section is devoted to the proof of the following statement.
\begin{proposition}\label{prop:rxpathRedAtra}
  For every $\eta \in \rxpath(\mathfrak{F},=)$ there is a computable $\atra(\opguess,\opspread)$ automaton \anAut such
  that \anAut is nonempty if{f} $\eta$ is satisfiable.
\end{proposition}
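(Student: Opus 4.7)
The plan is a Fischer--Ladner style translation. After putting $\eta$ into negation normal form and collecting the node subformulas $\nsubf(\eta)$, I introduce one state $q_\varphi$ per subformula, plus auxiliary states for path expressions, with $q_\eta$ as the initial state. Boolean connectives translate directly to the $\land$, $\lor$ constructs of $\atra(\opguess,\opspread)$; atomic label tests $a$ and $\neg a$ to the corresponding primitives; and $\tup{\alpha}$ to a thread that alternately walks the fcns coding via $\tridown$ and $\triright$ to realise the four forward axes, recursing into $q_\varphi$ whenever a filter $[\varphi]$ is traversed. The Kleene star $\alpha^*$ is handled natively, since a state may loop on itself.

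The positive data tests are handled with $\opguess$. For $\tup{\alpha=\beta}$ at the current node I execute $\opguess(q)$ with $\delta(q)=q_\alpha\land q_\beta$: this places a freshly chosen value $d$ into the register and spawns two independent threads (each inheriting $d$) that traverse $\alpha$ and $\beta$ respectively and terminate with $\opeq$, witnessing two endpoints that share the value $d$. Dually, $\tup{\alpha\neq\beta}$ guesses the value $d$ of an $\alpha$-endpoint (verified with $\opeq$) and concurrently witnesses some $\beta$-endpoint whose value differs from $d$ (verified with $\opneq$). Since each alternating branch owns its register, filters nested inside $\alpha$ and $\beta$ may freely use theirs without disturbing the outer witness value $d$.

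The main obstacle is the negated data tests $\neg\tup{\alpha=\beta}$ and $\neg\tup{\alpha\neq\beta}$, which are genuinely universal: they quantify over all pairs of $\alpha$- and $\beta$-endpoints. My plan is to use $\opspread$ to universally quantify over a pool of persistent witness threads collected during alternating traversals. For $\neg\tup{\alpha=\beta}$ I would spawn, at every $\alpha$-endpoint $y$, a long-lived thread carrying $\dD(y)$ in a moving state that propagates through the remaining fcns traversal so as to co-locate with every $\beta$-endpoint; at each such endpoint $z$ the weak one-argument $\opspread$ materialises, for every value $d$ held by any such witness thread, a fresh thread that verifies $d\neq\dD(z)$ via $\opneq$. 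The dual case $\neg\tup{\alpha\neq\beta}$ reduces via $\opguess$ to the existence of a common value $d$ that every $\alpha$- and every $\beta$-endpoint carries, with the universals again implemented by $\opspread$. The main subtleties are the timing constraint on $\opspread$ (at firing, every other thread at the node must be either an $\opspread$ or moving), which forces a careful ordering of non-moving and moving transitions, and the need to use disjoint witness-state namespaces across distinct occurrences of negated tests so that the pools do not interfere. Correctness then follows by induction on $\eta$: an accepting run on $\tT$ induces a Hintikka-style satisfying labelling of $\tT$, and conversely any satisfying assignment yields an accepting run by making the intended nondeterministic choices.
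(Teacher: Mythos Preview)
Your translation is right for the positive data tests and for $\lnot\tup{\alpha\neq\beta}$ (the latter indeed reduces to guessing one value and universally checking $\opeq$ along all $\alpha$- and $\beta$-paths; $\opspread$ is not even needed there). The gap is in $\lnot\tup{\alpha=\beta}$.

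Your plan is to spawn, at each $\alpha$-endpoint $y$, a persistent thread carrying $\dD(y)$ that ``propagates through the remaining fcns traversal'' so that a later $\opspread$ at any $\beta$-endpoint $z$ can see it. But $\opspread$ acts only on the threads of the \emph{current node configuration}, and once a moving node configuration at some $w$ splits into its first-child and next-sibling successors, the two resulting node configurations never communicate again. If $y$ lies below the first child of $w$ and $z$ below the next sibling (or vice versa), your witness thread for $\dD(y)$ lives in a different node configuration from the one at $z$, and $\opspread$ at $z$ will not see it. So your automaton would accept trees that do \emph{not} satisfy $\lnot\tup{\alpha=\beta}$. This is not a fixable detail: the paper proves (Lemma~\ref{lem:atrags-cannot-leaves}) that $\atra(\opguess,\opspread)$ cannot even express ``all data values are distinct'', which is essentially the property you are trying to encode.

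What the paper does instead is give up on equivalence between $\anAut$ and $\eta$ and aim only for equisatisfiability. It first shows (Proposition~\ref{prop:normal-form-disjoint-values-properties}) that any nonempty $\atra(\opguess,\opspread)$ has an accepting run with the \emph{disjoint values property}: at the moving configuration of each node $x{\cdot}i$, every data value shared between the subtree at $x{\cdot}i$ and the subtrees of its right siblings is already present in that configuration. Under this assumption, to refute $\lnot\tup{\alpha=\beta}$ via endpoints $y,z$ in different branches, the common value $\dD(y)=\dD(z)$ must appear in the configuration at their fcns closest common ancestor. The translation therefore walks $\alpha$ and $\beta$ \emph{simultaneously}, and at each node fires $\opspread(\llpar\alpha\rrpar^{\lnot=}_i \lor \llpar\beta\rrpar^{\lnot=}_j)$: for every value currently in the configuration, either all remaining $\alpha$-endpoints or all remaining $\beta$-endpoints avoid it. This is strictly weaker than $\lnot\tup{\alpha=\beta}$ in general, but equivalent on runs with the disjoint values property, which suffices for Proposition~\ref{prop:rxpathRedAtra}. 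The missing ingredient in your proposal is precisely this semantic normal-form step.
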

Markedly, the $\atra(\opguess,\opspread)$ class does not capture  $\rxpath(\mathfrak F, =)$. However, given a formula $\eta$, it is possible to construct an automaton that tests a property that guarantees the existence of a data tree  verifying $\eta$.

\subsection*{Disjoint values property}

To show the above proposition, we need to work with runs with the \emph{disjoint values property} as stated next.

\begin{definition}
  A run $\cl{S}_1 \atraT \dotsb \atraT \cl{S}_n$ on a data tree $\tT$ has the \textbf{disjoint values property} if for every $x \conc i \in \tpos(\tT)$ and $\rho$ a \emph{moving} node configuration of the run with position $x\conc i$, then
  \begin{gather*}
    \data(\tT|_{x \conc i }) ~~\cap~ \bigcup_{\substack{x \conc
        j \in \tpos(\tT)\\ j > i}}\data(\tT|_{x \conc j})
    ~~~\subseteq~~~ \data(\rho) \ .
  \end{gather*}
Figure~\ref{fig:atra-disjoint} illustrates this property.
\end{definition}

\begin{figure}
  \centering
  \includegraphics[width=3.2cm]{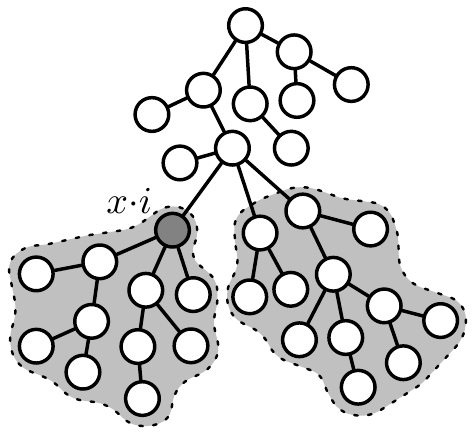}
  \caption{The disjoint values property states that for every position $x \conc i$, the intersection of the grey zones is present in the last configuration for $x \conc i$ appearing in the run.}
  \label{fig:atra-disjoint}
\end{figure}

The proof of Proposition~\ref{prop:rxpathRedAtra} can be sketched as follows:
\begin{enumerate}[(1)]
\item We show that for every nonempty automaton $\anAut \in \atra(\opguess,\opspread)$ there is an accepting run on a data tree with the disjoint values property.
\item We give an effective translation from an arbitrary forward \xpath formula $\eta$ to an $\atra(\opguess,\opspread)$ automaton $\anAut$ such that 
  \begin{iteMize}{$\bullet$}
  \item  any tree accepted by a run of the automaton $\anAut$ with the disjoint values property verifies the \xpath formula $\eta$, and
  \item  any tree verified by the formula $\eta$ is accepted by a run of the automaton \anAut with the disjoint values property.
  \end{iteMize}
\end{enumerate}

We start by proving the disjoint values property normal form.

\begin{proposition}\label{prop:normal-form-disjoint-values-properties}
  For any nonempty automaton $\anAut \in \atra(\opguess,\opspread)$  there exists an accepting run over some data tree with the disjoint values property.
\end{proposition}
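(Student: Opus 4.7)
The plan is to start with any accepting run $R$ of $\anAut$ on some data tree $\tT$ (which exists since $\anAut$ is nonempty) and iteratively modify $(\tT,R)$ by renaming data values until the disjoint values property holds. The progress measure will be the cardinality of the finite set
\[
\mathrm{Bad}(\tT,R) \;\coloneqq\; \Big\{(p,d) \;\Big|\; p = x\conc i \in \tpos(\tT),\; d \in \data(\tT|_p) \cap \bigcup_{j>i}\data(\tT|_{x\conc j}),\; d \notin \data(\rho)\Big\},
\]
where $\rho$ ranges over moving node configurations of $R$ at position $p$. The desired property is exactly $\mathrm{Bad}(\tT,R) = \emptyset$.

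The reduction step is the following: pick a bad pair $(p,d)$, choose a value $d' \in \D$ fresh with respect to $\tT$ and $R$, and form $(\tT',R')$ by replacing every occurrence of $d$ inside the subtree $\tT|_p$ with $d'$, both in the tree itself and in every node configuration of $R$ whose position lies in $\tT|_p$ (current datum, thread registers, and any value introduced by $\opguess$). Outside $\tT|_p$ the tree and the run are left untouched.

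The central observation making $R'$ a valid accepting run of $\anAut$ on $\tT'$ is that, since $(p,d)$ is bad, $d \notin \data(\rho)$ at the moving configuration $\rho$ through which threads enter the subtree $\tT|_p$. Hence no thread descending via $\rightarrowtridown$ carries $d$ in its register, and no thread leaving via $\rightarrowtriright$ ever visits $\tT|_p$ again. Consequently every $\opeq$/$\opneq$ test performed inside $\tT|_p$ compares values that are either both internal to $\tT|_p$ (and so consistently renamed) or where the register value is external to $\tT|_p$ (hence distinct from both $d$ and the fresh $d'$); in either case the test outcome is preserved. The $\opspread$ transition aggregates registers of threads within a single node configuration, so the renaming is harmless there as well. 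A case analysis over the transitions of Figure~\ref{fig:transition-ara-epsilon} (adapted to the tree setting) and over $\rightarrowtridown,\rightarrowtriright$ then lifts each step of $R$ to a matching step of $R'$.

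Finally, I would verify $|\mathrm{Bad}(\tT',R')| < |\mathrm{Bad}(\tT,R)|$. The pair $(p,d)$ is eliminated because $d \notin \data(\tT'|_p)$; bad pairs at proper descendants of $p$ are at worst rebranded $(q,d) \mapsto (q,d')$ in bijective fashion; freshness of $d'$ prevents it from appearing outside $\tT|_p$, so no new bad pair involving $d'$ arises at $p$, at ancestors, or at incomparable positions; and pairs involving the original $d$ at ancestors or incomparable positions are never adversely affected since the renaming is confined to a single sibling block. I expect the bookkeeping in this counting step to be the main technical obstacle, but it is a routine case distinction once the invariant that the renaming is strictly local to $\tT|_p$ is made explicit. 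Well-founded induction on $|\mathrm{Bad}|$ then produces, after finitely many steps, an accepting run with the disjoint values property, proving the proposition.
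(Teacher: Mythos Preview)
Your overall strategy matches the paper's: both fix the run by injectively renaming, inside subtrees $\tT|_p$, those data values that are absent from the moving configuration at $p$. The paper does this wholesale (all such values at once, one position at a time), while you peel off one bad value at a time and track progress via $|\mathrm{Bad}|$; the counting argument you sketch is sound.

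There is, however, a boundary issue in your renaming. You rename $d\mapsto d'$ in every node configuration whose position lies in $\tT|_p$, and this \emph{includes} configurations at $p$ itself. But the badness condition only tells you $d \notin \data(\rho)$ for the \emph{moving} configuration $\rho$ at $p$; earlier (non-moving) configurations at $p$ may well contain a thread $(q,d)$, inherited via $\rightarrowtriright$ from the moving configuration at $x\conc(i{-}1)$ (or via $\rightarrowtridown$ from $x$) and later discarded, say by an $\opneq$ test, before $\rho$ is reached. Renaming such a thread to $(q,d')$ breaks the moving transition \emph{into} $p$, since $\rightarrowtriright$ and $\rightarrowtridown$ copy register values verbatim. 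Your justification (``no thread descending via $\rightarrowtridown$ carries $d$'') only covers the edge from $\rho$ down to $p\conc 1$; it says nothing about the edge arriving at $p$.

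The repair is exactly what the paper does: rename only at \emph{strict} descendants of $p$, both in the tree and in the run. This is harmless because $d \neq \dD(p)$ (as $\dD(p)\in\data(\rho)$), so the tree value at $p$ is unchanged anyway, and now the only boundary transition left to check is $\rho \rightarrowtridown \cdot$, where indeed no thread carries $d$. With this adjustment your argument goes through; the progress-measure bookkeeping is unaffected.
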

\begin{proof}
\newcommand{\ancestorb}{\ancestor^{\textit{fcns}}}
\newcommand{\ancestorstrb}{\ancestorstr^{\textit{fcns}}}
\newcommand{\descendantb}{\descendant^{\textit{fcns}}}
\newcommand{\descendantstrb}{\descendantstr^{\textit{fcns}}}
Given any accepting run $\cl{S}_1 \atraT \dotsb \atraT \cl{S}_n$ on a data tree $\tT = \aA \otimes \dD$, we show how to modify the run and the tree in order to satisfy the disjoint values property. We only need to replace some of the data values, so that the resulting tree and accepting run will be essentially the same. 

The idea is as follows. For a given position $x \conc k$ of the tree, we consider the moving node configuration of the run at position $x \conc k$, and we replace all the data values from $\tT|_{x \conc k}$ by fresh ones except for those present in the node configuration, obtaining a new data tree $\tT'$. We also make the same replacement of data values for all node configurations in the run of nodes below $x \conc k$. Thus, we end up with a modified data tree $\tT'$ and accepting run  that satisfy the disjoint values property at $x \conc k$. That is, such that 
\[\data(\tT'|_{x \conc k }) ~~\cap~ \bigcup_{\substack{x \conc
        j \in \tpos(\tT)\\ j > k}}\data(\tT'|_{x \conc j})
    ~~~\subseteq~~~ \data(\rho_{x \conc k})\]
where $\rho_{x \conc k}$ is the moving node configuration at position $x \conc k$ of the run. If we repeat this procedure for all nodes of the tree, we obtain a run and tree with the disjoint values property. Next, we formalize this transformation.

\smallskip

Take any $x \conc k \in \tpos(\tT)$, and let $\rho_{x\conc k} \in \cl{S}_i$ for some $i$ be such that $\rho_{x \conc k}$ is a moving node configuration with position $x \conc k$. Consider any injective function 
\begin{gather*}
  f ~:~ \data(\tT|_{x \conc k}) \setminus \data(\rho_{x\conc k})
  \,\to\, \D \setminus \data(\tT)
\end{gather*}
and let $\hat f : \data(\tT|_{x \conc k}) \to \D$ be such that $\hat f(d) = d$ if $d \in \data(\rho_{x\conc k})$, or $\hat f (d) = f(d)$ otherwise. Note that $\hat f$ is injective.
Let us consider then $\cl{S}'_1, \dotsc, \cl{S}'_n$ where $\cl{S}'_j$ consists in replacing every node configuration $\rho \in \cl{S}_j$ with $h(\rho)$, where\footnote{By $\hat f (\rho)$ we denote the replacement of every data value $d$ by $\hat f (d)$ in $\rho$.}
\begin{align*}
  h(\rho) &=
  \begin{cases}
    \hat f(\rho) & \text{if $\rho$ has a position $y \descendantstr x \conc k$} \\
    \rho & \text{otherwise.}
  \end{cases}
\end{align*}
Take $\tT'$ to be  the data tree that results from the replacement in $\tT$ of every data value of a position $y \descendantstr x \conc k$  by $\hat f(\dD(y))$.

\begin{claim}
\[\data(\tT'|_{x \conc k }) ~~\cap~ \bigcup_{\substack{x \conc
        j \in \tpos(\tT)\\ j > k}}\data(\tT'|_{x \conc j})
    ~~~\subseteq~~~ \data(\rho_{x \conc k}) \ .\]
\end{claim}
\begin{proof}
The only difference between $\tT$ and $\tT'$ is in the data values below $x \conc k$. Any  node $y$ of $\tT'$ which is below $x\conc k$ contains a data value $\hat f (\dD(y))$ such that
\begin{iteMize}{$\bullet$}
\item $\hat f (\dD(y))$ is in $\data(\rho_{x \conc k})$, or
\item $\hat f (\dD(y))$ is not in $\data(\tT)$, and therefore it is not in $\bigcup_{\substack{x \conc j \in \tpos(\tT)\\ j > k}}\data(\tT'|_{x \conc j})$.
\end{iteMize}
\end{proof}
\begin{claim}
  $\cl{S}'_1 \atraT \dotsb \atraT \cl{S}'_n$ is an accepting run of
  $\anAut$ over $\tT'$.
\end{claim}
\begin{proof}
  Take any leaf $y$ which is rightmost (\ie, with no siblings to its
  right) and consider the sequence of node configurations $\rho_{1}
  \in \cl{S}_{1}, \dotsc, \rho_n \in \cl{S}_n$ that are ancestors of
  $y$ in the first-child next-sibling underlying tree structure. There is exactly one configuration in each $\cl S_i$ due to Remark~\ref{rem:no2nodeconfdescendant}. This is the `sub-run' that leads to $y$: for every $\rho_i, \rho_{i+1}$ either $\rho_i = \rho_{i+1}$ or $\rho_i \atraTn \rho_{i+1}$. Let    $\rho'_j =h(\rho_j)$ for all $j \in [n]$.
  \begin{iteMize}{$\bullet$}
  \item If $y \descendant x \conc k$, take $\ell$ to be the last    index such that $\rho_{\ell} = \rho_{x \conc k}$ (there must be one). Note that $\rho'_j = \hat f(\rho_j)$ for every $\ell < j \leq n$. The sequence $\rho'_{\ell+1} \in \cl{S}'_{\ell+1}, \dotsc,   \rho'_n \in \cl{S}'_n$ is isomorphic, modulo renaming of data
    values, to $\rho_{\ell+1}, \dotsc, \rho_{n}$ since $\hat f$ is
    injective.  We then have that $\rho'_1, \dotsc, \rho'_{\ell}, \dotsc
    \rho'_n$ is a correct run on node configurations, since
    \begin{iteMize}{$-$}
    \item $\rho'_1, \dotsc, \rho'_{\ell}$ is equal to $\rho_1,
      \dotsc, \rho_{\ell}$ (it is not modified by $h$),
    \item $\rho'_{\ell+1}, \dotsc, \rho'_n$ is isomorphic to
      $\rho_{\ell+1}, \dotsc, \rho_n$ (we apply an injection $\hat f$      to every data value), and
    \item the pair $(\rho'_{\ell}, \rho'_{\ell+1})$ is isomorphic to      $(\rho_{\ell}, \rho_{\ell+1})$, as 
      \begin{iteMize}{$*$}
      \item $\rho'_\ell = f(\rho_{x \conc k}) = \rho_{x \conc k} = \rho_\ell$,
      \item $\rho'_{\ell+1} = \hat f (\rho_{\ell+1})$ is isomorphic to $\rho_{\ell+1}$, and
      \item $\data(\rho'_\ell) \cap \data(\rho'_{\ell+1}) = \data(\rho_\ell) \cap \data(\rho_{\ell+1})$ since $\hat f$ is the identity on $\data(\rho_\ell)$, and does not send any data value from $\D \setminus \data(\rho_\ell)$ to $\data(\rho_\ell)$.
      \end{iteMize}
    \end{iteMize}
  \item If $y \not\descendant x \conc k$, then nothing
    was modified: $\rho'_1=\rho_1 \in \cl{S}_1, \dotsc, \rho'_n=\rho_n
    \in \cl{S}_n$.
  \end{iteMize}
In any case, we have that $\rho'_1, \dotsc,
    \rho'_n$ is a correct run on node configurations.
This means that the modified data values are innocuous for the run. As the structure of the run is not changed, this implies that
$\cl{S}'_1 \atraT \dotsb \atraT \cl{S}'_n$
is an accepting run (that verifies the disjoint values property for $x \conc k$ by the previous Claim).
\end{proof}
 If we perform the same procedure for every position of the tree, we end up with an accepting run and tree with the disjoint values property.
\end{proof}

We define a translation from \mbox{$\rxpath(\mathfrak{F},=)$} formul{\ae} to
$\atra(\opguess,\opspread)$.  Let $\eta$ be a
$\rxpath(\mathfrak{F},=)$ formula and let $\anAut$ be the corresponding $\atra(\opguess,\opspread)$ automaton defined by the translation. We show that (i) if a
data tree $\tT$ is accepted by $\anAut$ by a run verifying the disjoint values property, then $\tT\models\eta$, and in turn (ii) if $\tT \models \eta$, then $\tT$ is accepted by $\anAut$. Thus, by the disjoint values normal form (Proposition~\ref{prop:normal-form-disjoint-values-properties}) we obtain our main result of Proposition~\ref{prop:rxpathRedAtra}, which by decidability of $\atra(\opguess,\opspread)$ (Theorem~\ref{thm:atrags-decidable}) implies that the satisfiability problem for $\rxpath(\mathfrak F,=)$ is decidable.

\subsection*{Normal form}
For succinctness and simplicity of the translation, we assume that $\eta$ is in a normal form such that the $\downarrow$-axis is interpreted as the \emph{leftmost} child. To obtain this normal form, it suffices to
replace every appearance of `$\downarrow$' by `$\downarrow\rightarrow^*$'. Also, we assume that every data test subformula of the form $\tup{\alpha = \beta}$ or $\tup{\alpha \neq \beta}$ is such that $\alpha = \varepsilon$, $\alpha = \down \gamma$, or $\alpha = \rightarrow \gamma$ for some $\gamma$, and idem for $\beta$. This will simplify the translation and the proofs for correctness. It is easy to see that every expression $\tup{\alpha = \beta}$ can be effectively transformed into a disjunction of formul{\ae} of the aforementioned form.

\subsection*{The translation}
\newcommand{\tran}{\ensuremath{\msf{tr}}\xspace}
\newcommand{\llpar}{\ensuremath{(\!\hspace{-.4pt}|}\xspace}
\newcommand{\rrpar}{\ensuremath{|\!\hspace{-.4pt})}\xspace}
Let $\eta$ be a $\rxpath(\mathfrak{F},=)$ node expression of the above form in negation normal form (\msf{nnf} for short).  
Let $\tT$ be a data tree and let $z_0 \ancestor_{fcns} z_n$ be two positions of $\tT$  at distance $n \in \Nz$ such that, if $z_0 \neq z_n$,
\[z_0\ancestorstr_{fcns} z_1 \ancestorstr_{fcns} \dotsb \ancestorstr_{fcns} z_{n-1} \ancestorstr_{fcns} z_n\]
for some $z_1, \dotsc, z_{n-1}$, we define $str(z_0,z_{n}) \in \A_\eta^*$ to be the string $a_1 S_1 \dotsb a_{n-1} S_{n-1}$ where $a_i$ is either $\rig$ or $\dow$ depending on the relation between $z_{i-1}$ and $z_i$, and $S_i = \set{ \psi \in \nsubf(\eta) \mid z_i \in \dbracket{\psi}^\tT}$. If $n=0$, then $str(z_0,z_0)=\epsilon$. Note that $str(z_0,z_n)$ does not take into account $z_0$, since we are working under the normal form where no path subformul{\ae} makes tests on the node where they start. 
For every path expression $\alpha \in \psubf(\eta)$, consider a deterministic complete finite automaton $\cl{H}_\alpha$ over the alphabet $\A_\eta = 2^{\nsubf(\eta)} \cup \set{\downarrow, \rightarrow}$ which corresponds to that regular expression, in the sense that the following claim holds. 
\begin{claim}\label{cl:paths}
  For every $\alpha \in \psubf(\eta)$, and $x \ancestor_{fcns} y$, we have that $\cl H_\alpha$ accepts $str(x,y)$ if, and only if, $(x,y) \in \dbracket{\alpha}^\tT$.
\end{claim}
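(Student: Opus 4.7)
The plan is to prove this by structural induction on the path expression $\alpha$, while explicitly constructing the regex $R_\alpha$ over $\A_\eta$ that $\cl H_\alpha$ recognizes. The key observation is that under the normal form all axes appearing in $\alpha$ are either $\downarrow$ (the leftmost-child, which is precisely the left-child of the fcns coding) or $\rightarrow$ (the next-sibling, i.e.\ the right-child of the fcns coding); thus a single axis step in $\alpha$ corresponds to a single axis letter in $str(x,y)$, while a filter $[\varphi]$ corresponds to a constraint on an adjacent subset symbol $S_i \in 2^{\nsubf(\eta)}$.

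For the atomic cases, $R_\varepsilon = \varepsilon$, $R_\downarrow = \downarrow$, and $R_\rightarrow = \rightarrow$, with the equivalence following directly from the definitions of $\dbracket{\cdot}^\tT$ and $str$, together with the correspondence between the original tree (in normal form) and its fcns coding along the two axes. For a filter $[\varphi]$, we observe that $S_i$ records exactly which $\psi\in\nsubf(\eta)$ hold at $z_i$, so $[\varphi]$ translates to the single-letter regex $\{S \in 2^{\nsubf(\eta)} \mid \varphi \in S\}$, constraining the appropriate subset symbol in $str(x,y)$.

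For the inductive cases, we have to be careful about where subset symbols are spliced in. If $(x,y)\in\dbracket{\alpha\beta}^\tT$ via some intermediate $z$, then $str(x,y) = str(x,z)\cdot S_z\cdot str(z,y)$, where $S_z$ is the subset at $z$. We therefore set
\[
R_{\alpha\beta} \;=\; R_\alpha \cdot 2^{\nsubf(\eta)} \cdot R_\beta, \qquad
R_{\alpha\cup\beta} \;=\; R_\alpha + R_\beta, \qquad
R_{\alpha^*} \;=\; \varepsilon + R_\alpha\,(2^{\nsubf(\eta)}\cdot R_\alpha)^*,
\]
reflecting, respectively, the decomposition of $str$ at a concatenation point, the direct union semantics, and the fact that $\alpha^*$ covers the reflexive case together with any number of $\alpha$-steps glued through intermediate subsets. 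The equivalence in each case follows immediately from the inductive hypothesis applied to $R_\alpha$ and $R_\beta$, combined with the corresponding semantic clause for $\dbracket{\cdot}^\tT$.

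The main technical care lies in making sure the subset symbols are placed precisely at each composition point (and not at the extremes of $str$), matching the alternation pattern dictated by the definition of $str$. The delicate case is a filter $[\varphi]$ at the very start of $\alpha$: since $str(z_0,z_n)$ deliberately omits the subset of $z_0$, such a boundary filter cannot be checked by $\cl H_\alpha$ alone. In the context of the larger reduction this is harmless, as a filter on the source of a path is evaluated by the enclosing node expression (i.e.\ by the automaton's bookkeeping at the current node), not by $\cl H_\alpha$; inside a compound path such filters are already subsumed by a neighboring subset symbol in the regex construction above. Once the placement is handled, $\cl H_\alpha$ is obtained from $R_\alpha$ by the standard determinization-and-completion procedure, and the claim follows.
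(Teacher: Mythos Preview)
The paper does not actually prove this claim: it introduces $\cl H_\alpha$ by fiat, writing ``consider a deterministic complete finite automaton $\cl H_\alpha$ \ldots\ which corresponds to that regular expression, in the sense that the following claim holds,'' and takes the existence of such an automaton as a routine fact about regular languages. So your inductive construction is more explicit than anything the paper offers.

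That said, your concatenation and Kleene-star rules have a concrete gap. You set $R_{\alpha\beta}=R_\alpha\cdot 2^{\nsubf(\eta)}\cdot R_\beta$ on the basis of the decomposition $str(x,y)=str(x,z)\cdot S_z\cdot str(z,y)$, but this decomposition is only valid when the intermediate point $z$ is \emph{strictly} between $x$ and $y$. If $\alpha$ can match the empty path (for instance $\alpha=\varepsilon$, $\alpha=[\varphi]$, or $\alpha=\gamma^*$), then $z$ may coincide with $x$, and there is no subset letter to consume at the seam: your regex then forces a spurious $2^{\nsubf(\eta)}$ symbol and rejects $str(x,y)$. Concretely, for $\alpha=\varepsilon$ and $\beta={\downarrow}$ your rule gives $R_{\varepsilon\cdot\downarrow}=2^{\nsubf(\eta)}\cdot{\downarrow}$, which does not accept $str(x,x{\cdot}1)={\downarrow}$. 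The same issue bites $R_{\alpha^*}$ whenever $\alpha$ itself can be empty. The fix is standard---split off the nullable part of each factor, e.g.\ $R_{\alpha\beta}=R_\alpha^{+}\cdot 2^{\nsubf(\eta)}\cdot R_\beta^{+}\;\cup\;[\varepsilon\in L(R_\alpha)]\,R_\beta\;\cup\;[\varepsilon\in L(R_\beta)]\,R_\alpha$, where $R^{+}$ denotes the restriction to nonempty words---but it is not covered by your boundary-filter discussion, which only addresses the \emph{start} of the outermost path and not the seams of the induction.
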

We assume the following names of its components: $\cl{H}_\alpha = \tup{\A_\eta, \delta_\alpha, Q_\alpha,0, F_\alpha}$, where $Q_\alpha \subseteq \Nz$ is the finite set of states and $0 \in Q_\alpha$ is the initial state. We assume that $Q_\alpha$ is partitioned into \emph{moving} and \emph{testing} states, such that in every accepting run the states of the run alternate between moving and testing states, starting in the moving state $0$.

We next show how to translate $\eta$ into an $\atra(\opguess,\opspread)$ automaton $\anAut$. For the sake of readability we define the transitions as positive boolean combinations of $\lor$ and $\land$ over the set of basic tests and  states. Any of these ---take for instance $\delta(q)=(\opset( q_1) \land \tridown q_2) \lor (q_3 \land \bar{\texttt{a}})$--- can be rewritten into an equivalent \atra with at most one boolean connector per transition (as in Definition~\ref{def:atra-model}) in polynomial time. The most important cases are those relative to the following data tests: 
\newcommand{\itemTeq}{(1)}
\newcommand{\itemTeqn}{(2)}
\newcommand{\itemTneq}{(3)}
\newcommand{\itemTneqn}{(4)}
\[
\text{\itemTeq\; }\tup{\alpha=\beta}
\qquad\quad
\text{\itemTeqn\; }\tup{\alpha\not=\beta}
\qquad\quad
\text{\itemTneq\; }\lnot \tup{\alpha = \beta}
\qquad\quad
\text{\itemTneqn\; }\lnot \tup{\alpha \not= \beta}
\]

We define the $\atra(\opguess,\opspread)$ automaton 
\[\anAut \coloneqq  \tup{\A,Q,\llpar \eta \rrpar,\delta}\]
with
\begin{align*}
  Q &\coloneqq \{\llpar \varphi \rrpar, 
  \llpar \alpha \rrpar^{\circledast}_{i},
  \llpar \alpha \rrpar^{\circledast}_{\text{test},i},
 \llpar \alpha  \rrpar^{\circledast}_{\msf{F}}, 
   \llpar \alpha, \beta \rrpar^{\circledast}_{i,j},
   \llpar \alpha, \beta \rrpar^{\circledast}_{\text{test},i,j}
\mid \varphi
  \in\nsubf^\lnot(\eta), 
\\&\fudgece \phantom{\{} \alpha,\beta \in \psubf^\lnot(\eta),
  \circledast \in \{=,\not=,{\lnot}{=},{\lnot}{\not=}\}, 
i \in  Q_\alpha, j \in Q_\beta \}
\end{align*}
where $\mathsf{op}^\lnot$ is the smallest superset of $\mathsf{op}$ closed under negation under \nnf, \ie, if $\varphi \in \mathsf{op}^\lnot(\eta)$ then $\nnf(\lnot\varphi)\in\mathsf{op}^\lnot(\eta)$ (where $\nnf$ is defined as shown in Table~\ref{tab:nnf-definition}).
\begin{table}
  \centering
  \begin{align*}
    \nnf(\varphi \land \psi) &\coloneqq  \nnf(\varphi) \land \nnf(\psi) &
    \nnf(\varphi \lor \psi) &\coloneqq  \nnf(\varphi) \lor \nnf(\psi)\\
    \nnf(\lnot(\varphi \land \psi)) &\coloneqq  \nnf(\lnot \varphi) \lor \nnf(\lnot \psi)&
    \nnf(\lnot(\varphi \lor \psi)) &\coloneqq  \nnf(\lnot \varphi) \land \nnf(\lnot \psi)\\
    \nnf(\alpha \,\beta) &\coloneqq  \nnf(\alpha) \, \nnf(\beta) &
    \nnf([\varphi]) &\coloneqq  [\nnf(\varphi)]\\
    \nnf( \alpha^* ) &\coloneqq (\nnf(\alpha))^* &
    \nnf( o ) &\coloneqq o \quad o \in \mathfrak F\\
    \nnf(\tup{\alpha \circledast \beta}) &\coloneqq  \tup{\nnf(\alpha) \circledast \nnf(\beta)} &
    \nnf(\lnot \tup{\alpha \circledast \beta}) &\coloneqq  \lnot \tup{\nnf(\alpha) \circledast \nnf(\beta)}\\
    \nnf(\texttt{a}) &\coloneqq  \texttt{a} &
    \nnf(\lnot \texttt{a}) &\coloneqq  \lnot\texttt{a} \\
    \nnf(\lnot\lnot \varphi) &\coloneqq  \nnf(\varphi) &
    \nnf(\tup{\alpha}) &\coloneqq \tup{\nnf(\alpha)}
  \end{align*}
  \caption{Definition of the Negation Normal Form for \xpath.}
\label{tab:nnf-definition}
\end{table}
The idea is that a state $\llpar \varphi \rrpar$ verifies  the formula $\varphi$. 
A state $\llpar\alpha\rrpar^=_{i}$ (resp.\ $\llpar\alpha\rrpar^{\neq}_{i}$) verifies that there is a forward path  in the tree ending at a node with the same (resp.\ different) data value as the one in the register, such that there exists a partial run  of $\cl{H}_\alpha$ over such path that starts in a moving state $i$ and ends in a final state. 
Similarly, a state $\llpar\alpha\rrpar^=_{\text{test},i}$ or $\llpar\alpha\rrpar^{\neq}_{\text{test},i}$ verifies the same when $i$ is a testing state.
A state $\llpar \alpha, \beta \rrpar^=_{i,j}$ (resp.\  $\llpar \alpha, \beta \rrpar^{\neq}_{i,j}$) verifies that there are \emph{two} paths ending in two nodes with the same (resp.\ one equal, the other different) data value as that of the register; such that one path has a partial accepting run of $\cl{H}_\alpha$ starting in a moving state $i$, and the other has a partial accepting run of $\cl{H}_\beta$ starting in a moving state $j$.
A state like $\llpar \alpha  \rrpar^{=}_{\msf{F}}$ is simply to mark that the run of $\cl{H}_\alpha$ on a path has ended, and the only remaining task is to test for equality of the data value with respect to the register. Finally, a state $\llpar \alpha \rrpar^{\lnot =}_{0}$ (resp. $\llpar \alpha \rrpar^{\lnot \neq}_{0}$) verifies that every node reachable by $\alpha$ has different (resp. equal) data than the register, and similar for the other universal states of the form $\llpar \cdots \rrpar^{\lnot  \cdots}_{\cdots}$.
We first take care of the boolean connectors and the simplest tests.
\begin{align*}
  \delta(\llpar\texttt{a} \rrpar) &\coloneqq  \texttt{a} &
  \delta(\llpar\varphi \lor \psi\rrpar) &\coloneqq  \llpar\varphi\rrpar \lor \llpar\psi\rrpar &
  \delta(\llpar \lnot\texttt{a} \rrpar) &\coloneqq  \bar{\texttt{a}}& 
  \delta(\llpar\varphi \land \psi \rrpar) &\coloneqq  \llpar\varphi\rrpar
  \land \llpar\psi\rrpar 
\end{align*}
First, we define the transitions associated to each $\cl{H}_\alpha$, for $i \in Q_\alpha, \+{C} \subseteq Q_\alpha, \circledast \in \set{=,\neq}$. Here, $\llpar \alpha \rrpar_{\msf{F}}$ holds at the endpoint of a path matching $\alpha$.
\begin{align*}
\delta(\llpar\alpha\rrpar^{\circledast}_{i})&\coloneqq \tridown
  \llpar\alpha\rrpar^{\circledast}_{\text{test}, \delta_\alpha(\downarrow,i)}
  \quad\lor\quad
 \triright
  \llpar\alpha\rrpar^{\circledast}_{\text{test}, \delta_\alpha(\rightarrow,i)}
  \quad\lor\quad
  \bigvee_{i \in
    F_\alpha}\hspace{-2pt}\llpar\alpha\rrpar^{\circledast}_{\msf{F}}
\\
  \delta(\llpar\alpha\rrpar^{\circledast}_{\text{test},i}) &\coloneqq 
  \bigvee_{\substack{S \subseteq \nsubf(\alpha)}}
  \big(
  \llpar\alpha\rrpar^{\circledast}_{\substack{ \delta_\alpha(S,i)}}
  \land \bigwedge_{\varphi \in S} \llpar\varphi\rrpar 
  \big)
\end{align*}

Next, we focus only on the data-aware test formul{\ae}, since the tests $\tup{\alpha}$ and $\lnot \tup{\alpha}$ are interpreted as the equivalent formul{\ae} $\tup{\alpha = \alpha}$ and $\lnot \tup{\alpha = \alpha}$ respectively.
Using the \opguess operator, we can easily define the cases corresponding to the data test cases {\itemTeq} and {\itemTeqn} as follows. 
\begin{align*}
\delta(\llpar\alpha = \beta\rrpar) &\coloneqq  \opguess(\llpar\alpha,\beta\rrpar^=) &
\delta(\llpar\alpha,\beta\rrpar^=) &\coloneqq  \llpar\alpha\rrpar^=_{0} \land \llpar\beta\rrpar^=_{0}&\delta(\llpar\alpha\rrpar^{=}_{\msf{F}}) &\coloneqq  \opeq\\
\delta(\llpar\alpha \not= \beta\rrpar) &\coloneqq  \opguess(\llpar\alpha,\beta\rrpar^{\not=}) &
\delta(\llpar\alpha,\beta\rrpar^{\not=}) &\coloneqq  \llpar\alpha\rrpar^{=}_{0} \land \llpar\beta\rrpar^{\not=}_{0} & 
\delta(\llpar\alpha\rrpar^{\not=}_{\msf{F}}) &\coloneqq  \opneq
\end{align*}

The test case {\itemTneqn} involves also an \emph{existential} quantification over data values. In fact, $\lnot \tup{\alpha \not = \beta}$ means that either
\begin{enumerate}[(i)]
\item \label{noreachbyalpha} there are no nodes reachable by $\alpha$,
\item \label{noreachbybeta} there are  no nodes reachable by $\beta$, or
\item \label{dvstbla} there \emph{exists} a data
  value $d$ such that both
  \begin{enumerate}[(a)]
  \item all elements reachable by $\alpha$ have datum $d$, and
  \item all elements reachable by $\beta$ have datum $d$.
  \end{enumerate}

\end{enumerate}

\begin{align*}
  \delta(\llpar\lnot \alpha \not= \beta\rrpar) &\coloneqq  \llpar \lnot \tup{\alpha}\rrpar \lor \llpar \lnot \tup{\beta} \rrpar \lor \opguess(\llpar\alpha,\beta\rrpar^{\lnot \not=}) \\ 
\delta(\llpar\alpha,\beta\rrpar^{\lnot \not=}) &\coloneqq  \llpar\alpha\rrpar^{\lnot \not=}_{0} \land \llpar\beta\rrpar^{\lnot\not=}_{0} \qquad
\delta(\llpar\alpha\rrpar^{\lnot \not=}_{\msf{F}}) \coloneqq  \opeq \qquad \delta(\llpar\alpha\rrpar^{\lnot =}_{\msf{F}}) \coloneqq  \opneq 
\end{align*}
\begin{align*}
  \delta(\llpar\alpha\rrpar^{\lnot \circledast}_{i}) &\coloneqq 
(\bar\tridown ? \lor
  \tridown \llpar\alpha\rrpar^{\lnot
    \circledast}_{\text{test},\delta_\alpha(\downarrow,i)}) 
\land (\bar\triright ? \lor
  \triright \llpar\alpha\rrpar^{\lnot
    \circledast}_{\text{test},\delta_\alpha(\rightarrow,i)}) 
\quad \land \quad
 \textit{final}
\\&\qquad 
\text{where } \bar\varphi\text{ stands for }\msf{nnf}(\lnot \varphi)
\text{ and } \textit{final} =
\begin{cases}
  \llpar \alpha \rrpar^{\lnot \circledast}_{\msf{F}} & \text{if $i \in F_\alpha$,}\\
  \textit{true} & \text{otherwise.}
\end{cases}
\\
\delta(\llpar\alpha\rrpar^{\lnot \circledast}_{\text{test},i})&\coloneqq
 \bigwedge_{\substack{S \subseteq \nsubf(\alpha)}}
  \big( 
\llpar\alpha\rrpar^{\lnot    \circledast}_{\delta_\alpha(S,i)} 
    \land \bigwedge_{\varphi \in S} \llpar\varphi\rrpar 
\big) 
\end{align*}
\smallskip

The difficult part is the translation of the data test case {\itemTneq}.  The main reason for this difficulty is the fact that
$\atra(\opguess,\opspread)$ automata do not have the expressivity to make these kinds of tests. An expression $\lnot \tup{\alpha = \beta}$ forces the set of data values reachable by
an $\alpha$-path  and the set of those reachable by a $\beta$-path to be disjoint. We show that nonetheless the automaton can test for a
condition that is equivalent to $\lnot \tup{\alpha = \beta}$ if we assume that the run and tree have the disjoint values property. 

\begin{example}
  As an example, suppose  that $\eta = \lnot \tup{\, {{\downarrow}
    \alpha} = {{\rightarrow} \beta} \,}$ is to be checked for
  satisfiability. One obvious answer would be to test separately
  $\alpha$ and $\beta$. If both tests succeed, one can then build a
  model satisfying $\eta$ out of the two witnessing trees by making
  sure they have disjoint sets of values. Otherwise, $\eta$ is clearly
  unsatisfiable. Suppose now that we have $\eta = \varphi \land \lnot \tup{\, {{\downarrow}  \alpha} = {{\rightarrow} \beta} \,}$, where $\varphi$ is any
  formula with no data tests of type {\itemTneq}. One could build the
  automaton for $\varphi$ and then ask for ``$\opspread(\,\llpar
  {\down}\alpha \rrpar^{\lnot=}_0 \lor \llpar {\rightarrow}\beta
  \rrpar^{\lnot=}_0\,)$'' in the automaton. This corresponds to the
  property ``\textsl{for every data value $d$ taken into account by
    the automaton (as a result of the translation of $\varphi$),
    either all elements reachable by $\alpha$ do not have datum $d$,
    or all elements reachable by $\beta$ do not have datum $d$}''. If
  $\varphi$ contains a $\tup{\alpha' =\beta'}$ formula, this
  translates to a \emph{guessing} of a witnessing data value
  $d$. Then, the use of $\opspread$ takes care of this particular data
  value, and indeed of all other data values that were guessed to
  satisfy similar demands. In other words, it is \emph{not} because of
  $d$ that $ \lnot \tup{\, {\downarrow} \alpha = {\rightarrow} \beta\,}$ will
  be falsified. But then, the disjoint values property ensures that no
  pair of nodes accessible by $\alpha$ and $\beta$ share the same
  datum. This is the main idea we encode next.
\end{example}
We define $\delta(\llpar\lnot \tup{\alpha = \beta} \rrpar) \coloneqq   \llpar \alpha, \beta \rrpar^{\lnot =}_{0,0} $. Given $\lnot \tup{\alpha = \beta}$, the automaton systematically looks for the closest common ancestor of every pair $(x,y)$ of nodes accessible by $\alpha$ and $\beta$ respectively, and tests, for every data value $d$ present in the node configuration, that either (1) all data values accessible by the remaining path of $\alpha$ are different from $d$, or (2) all data values accessible by the remaining path of $\beta$ are different from $d$.
\begin{align*}
  \delta(\llpar \alpha, \beta \rrpar^{\lnot =}_{i,j}) &\coloneqq   
\opspread\big(\llpar \alpha \rrpar^{\lnot =}_{i} \lor \llpar \beta \rrpar^{\lnot =}_{j}\big)
\\&
\fudgece\land \quad
(\bar\tridown ? \lor \tridown \llpar \alpha, \beta \rrpar^{\lnot =}_{\text{test},\delta_\alpha(\downarrow,i),\delta_\beta(\downarrow,j)})
\quad
\land
\quad
(\bar\triright ? \lor \triright \llpar \alpha, \beta \rrpar^{\lnot =}_{\text{test},\delta_\alpha(\rightarrow,i), \delta_\beta(\rightarrow,j)} )
\\&
\fudgece\land \quad
\textit{final}_\alpha
\quad\land\quad
\textit{final}_\beta
\\&
\fudgece\text{where } \textit{final}_\alpha =
\begin{cases}
  \opset( \llpar \beta \rrpar^{\lnot =}_{j} ) &\text{if $i \in F_\alpha$,}\\
  \textit{true} &\text{otherwise,}\\
\end{cases}\\&
\fudgece\text{and }
\textit{final}_\beta =
\begin{cases}
  \opset( \llpar \alpha \rrpar^{\lnot =}_{i} ) &\text{if $j \in F_\beta$,}\\
  \textit{true} &\text{otherwise.}\\
\end{cases}
\\
\delta(\llpar \alpha, \beta \rrpar^{\lnot =}_{\text{test},i,j})&\coloneqq
\bigwedge_{\substack{S \subseteq \nsubf(\alpha)}}
  \big( \llpar\alpha,\beta\rrpar^{\lnot=}_{\delta_\alpha(S,i),\delta_\beta(S,j)}
    \land \bigwedge_{\varphi \in S} \llpar\varphi\rrpar 
\big)
\end{align*}

The first line of $\delta(\llpar \alpha, \beta \rrpar^{\lnot =}_{i,j})$ corresponds to the tests (1) and (2) above, and the third line corresponds to the cases where $x \ancestor_{fcns} y$ and $y \ancestor_{fcns} x$.

Next we show the correctness of this translation. We say that $\+A$ has an accepting run from a thread $(q,d)$ on $\tT|_x$ if there is a sequence of tree configurations $\+S_1 \atraT \dotsb \atraT \+S_n$ such that $\+S_1$ is at position $x$ and contains $(q,d)$, and $\+S_n$ is accepting. We  say that there is a \emph{dvp-run} if $\+S_1 \atraT \dotsb \atraT \+S_n$ has the disjoint values property.
\begin{lemma}\label{lem:xp->atra}
  For any data tree $\tT$,
  \begin{enumerate}[\em($\Rightarrow$)]
  \item[\em($\Rightarrow$)] if $\tT\models \eta$ then \anAut accepts $\tT$ with a dvp-run, and
  \item[\em($\Leftarrow$)] if \anAut accepts $\tT$ with a dvp-run, then $\tT\models \eta$.
  \end{enumerate}
\end{lemma}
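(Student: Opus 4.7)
The plan is to prove Lemma~\ref{lem:xp->atra} by simultaneous structural induction on the subexpressions of $\eta$, establishing for each family of states of $\anAut$ a semantic invariant that characterises exactly when a thread with that state admits an accepting dvp-run. Specifically, for a node configuration at position $x$ with register value $d$, I would show that $(\llpar \varphi \rrpar, d)$ has such a run iff $x \in \dbracket{\varphi}^\tT$; that $(\llpar \alpha \rrpar^{=}_i, d)$ has one iff there exists $y$ with $x \ancestor_{fcns} y$, $\cl{H}_\alpha$ from state $i$ accepts $str(x,y)$, and $\dD(y) = d$; and analogous invariants for the $\neq$, $\lnot =$, $\lnot \neq$ variants, their testing counterparts, and the binary states $\llpar \alpha, \beta \rrpar^{\circledast}_{i,j}$.

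The atomic letter tests, boolean connectives, and path-walking states are discharged by routine induction, using Claim~\ref{cl:paths} to convert runs of $\cl{H}_\alpha$ into path matches and conversely. For the positive data tests $\tup{\alpha = \beta}$ and $\tup{\alpha \neq \beta}$, the \opguess step picks the witnessing datum into the register and the two conjunctive subautomata $\llpar \alpha \rrpar^{\circledast}_0$ and $\llpar \beta \rrpar^{\circledast}_0$ confirm the two required paths by the inductive invariant. The case $\lnot \tup{\alpha \neq \beta}$ and the universal path states $\llpar \alpha \rrpar^{\lnot \circledast}_i$ are handled symmetrically: the automaton descends conjunctively through every fcns-child and enforces the complementary data relation at every reached endpoint, so correctness reduces to the semantics of \opguess combined with the inductive invariants for the universal subautomata.

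The delicate case is $\lnot \tup{\alpha = \beta}$, where the $(\Leftarrow)$ direction is where the disjoint values property earns its keep. Suppose for contradiction that an accepting dvp-run from $(\llpar \alpha, \beta \rrpar^{\lnot =}_{0,0}, d)$ at $x$ exists yet there are witnesses $y_\alpha, y_\beta$ with $x \ancestor_{fcns} y_\alpha$, $x \ancestor_{fcns} y_\beta$, $(x, y_\alpha) \in \dbracket{\alpha}^\tT$, $(x, y_\beta) \in \dbracket{\beta}^\tT$, and $\dD(y_\alpha) = \dD(y_\beta) = e$. Let $z$ be the closest common fcns-ancestor of $y_\alpha$ and $y_\beta$. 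The cases $y_\alpha = z$ and $y_\beta = z$ are dispatched directly by the $\textit{final}_\alpha$ and $\textit{final}_\beta$ conjuncts, which \opset the current datum into the register and trigger a universal check refuted by the inductive invariant since the opposite witness carries datum $e$. When both $y_\alpha, y_\beta \neq z$, the $\alpha$- and $\beta$-residuals from $z$ pass through the two distinct fcns-children of $z$; writing $z = z' \conc k$, one of $y_\alpha, y_\beta$ lies in $\tT|_{z' \conc k}$ while the other lies in $\tT|_{z' \conc j}$ for some $j > k$, so $e \in \data(\tT|_{z' \conc k}) \cap \bigcup_{j > k}\data(\tT|_{z' \conc j})$, and the disjoint values property at $z$ places $e$ in the moving configuration at $z$; the \opspread in $\delta(\llpar \alpha, \beta \rrpar^{\lnot =}_{i',j'})$ then spawns a thread $(\llpar \alpha \rrpar^{\lnot =}_{i'} \lor \llpar \beta \rrpar^{\lnot =}_{j'}, e)$, which by the inductive invariant is inconsistent with both $y_\alpha$ and $y_\beta$ carrying datum $e$, a contradiction. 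The $(\Rightarrow)$ direction is the dual construction, combined with Proposition~\ref{prop:normal-form-disjoint-values-properties} to convert the resulting accepting run into a dvp-run.

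The main obstacle will be precisely the passage from ``$e$ lies in the moving configuration at $z$'' to ``$e$ lies among the data values actually spread over by the \opspread instruction at $z$'': \opspread iterates over the registers of other threads present at the moment it fires, not over the full data content of the eventual moving configuration. Ensuring coverage requires a careful scheduling argument about the order in which non-moving transitions (\opspread, \opguess, \opset, \opeq, \opneq) are executed at $z$, together with a refined induction hypothesis that tracks how data values flow in and out of thread registers during the non-moving phase of the run at each position, so that every datum certified to belong to the moving configuration by the disjoint values property is provably present in some thread when the relevant \opspread fires.
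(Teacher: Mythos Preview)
Your plan is essentially the paper's own: a simultaneous induction on $(f(x),|\varphi|)$ establishing semantic invariants for each family of states, with the delicate $(\Leftarrow)$ direction for $\lnot\tup{\alpha=\beta}$ resolved at the closest common fcns-ancestor via the combination of \opspread and the disjoint values property. Two specific points deserve correction.

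First, the ``main obstacle'' you isolate is genuine, but its resolution is simpler than the elaborate scheduling argument you anticipate. By the operational semantics of \opspread (see the precondition attached to~\eqref{def:righteps:spread-simple}), a \opspread may fire only when every thread in the configuration is already moving or itself a \opspread; hence every \opguess and every \opset at the current position has executed before any \opspread does. Moreover, inspecting the translation, the threads that this particular \opspread creates carry state $\llpar\alpha\rrpar^{\lnot=}_{i}\lor\llpar\beta\rrpar^{\lnot=}_{j}$, and following $\delta$ these resolve only into type tests, moves, and $\opneq$---no \opguess and no \opset at the current position. Consequently the set of data values present when \emph{any} \opspread fires at $z$ already coincides with the set of data values in the moving configuration at $z$. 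This is precisely what the paper's Claim~\ref{cl:test-disjoint} exploits without spelling it out; no refined inductive hypothesis is needed.

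Second, your appeal to Proposition~\ref{prop:normal-form-disjoint-values-properties} for the $(\Rightarrow)$ direction does not work as stated: that proposition produces a dvp-run by \emph{modifying the data tree}, so it cannot convert an accepting run on $\tT$ into a dvp-run on the same $\tT$, which is what the lemma demands. The paper instead constructs dvp-runs directly by combining the dvp sub-runs furnished by the inductive hypothesis; this is sound because the moving configuration of the combined run is the union of the constituent moving configurations, so every subtree-intersection inclusion required by dvp is inherited. Alternatively, for the purposes of Proposition~\ref{prop:rxpathRedAtra} it suffices to relax $(\Rightarrow)$ to plain acceptance and invoke Proposition~\ref{prop:normal-form-disjoint-values-properties} only once nonemptiness of $\anAut$ is established.
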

\begin{proof}
Let $\tT = \aA \otimes \dD$. We show that for every subformula $\varphi \in \subf(\eta)$ and position $x \in \tpos(\tT)$,
\begin{align}
  \text{if $\tT|_x \models \varphi$ then $\+A$ has an accepting dvp-run from
  $(\llpar \varphi \rrpar, d)$ on $\tT|_x$ for any $d \in \D$.}\tag{$\dag$}\label{indhyp:1}
\end{align}
And conversely,
\begin{align}
  \text{if $\+A$ has an accepting dvp-run from $(\llpar \varphi \rrpar,
  d)$ on $\tT|_x$ for some $d \in \D$, then $\tT|_x \models \varphi$.}\tag{$\ddag$}\label{indhyp:2}
\end{align}

We suppose that $\eta$ is in \nnf. We proceed by induction on the lexicographic order of $(f(x),|\varphi|)$, where $f(x)$ is the length of a maximal path from $x$ to a leaf in the fcns coding (roughly, the height of $\tT|_x$). The base cases when $\varphi = a$ or $\varphi = \lnot a$ are easy. Suppose then that the proposition is true for all position $y \descendantstr_{fcns} x$, and for all $\psi \in \subf(\eta)$, $|\psi| < |\varphi|$ at position $x$. The cases $\varphi = \varphi_1 \land \varphi_2$ or $\varphi = \varphi_1 \lor \varphi_2$ are straightforward from the inductive hypothesis. The following claim also follows from the inductive hypothesis.
\begin{claim} \label{cl:paths-correct}
For any path subformula $\alpha$ of $\psubf(\eta)$, and $y \descendant_{fcns} x'$,
\begin{enumerate}[\em(a)]
\item \label{it:paths-correct:1}
if  $(x,y) \in \dbracket{\alpha}^\tT$ then there is an accepting dvp-run from $( \llpar \alpha \rrpar^=_{ 0}, \dD(y))$  on $\tT|_x$,
\item \label{it:paths-correct:1b}
if there is an accepting dvp-run from $( \llpar \alpha \rrpar^=_{0}, d)$  on $\tT|_x$, then $(x,y) \in \dbracket{\alpha}^\tT$ for some $y$ such that $\dD(y) = d$,
\item \label{it:paths-correct:2}
if $(x,y) \in \dbracket{\alpha}^\tT$, there is an accepting dvp-run from $( \llpar \alpha \rrpar^{\neq}_{0}, d )$ with $d \neq \dD(y)$  on $\tT|_x$,
\item \label{it:paths-correct:2b}
if there is an accepting dvp-run from $( \llpar \alpha \rrpar^{\neq}_{0}, d )$  on $\tT|_x$, then $(x,y) \in \dbracket{\alpha}^\tT$ for some $y$ with $d \neq \dD(y)$.
\end{enumerate}
\end{claim}
\begin{proof}
By induction, suppose that Claim~\ref{cl:paths-correct} holds for all $\alpha ,x', y$ such that $(f(x'),|\alpha|) <_{lex} (f(x),|\varphi|)$. As a consequence of the inductive hypothesis on \eqref{indhyp:1}, for all positions $z$ with $x \ancestorstr_{fcns} z \ancestor_{fcns} y$, and for every subformula $\psi \in \nsubf(\eta)$, if $z \in \dbracket{\psi}^\tT$ then there is an accepting dvp-run of $\cl{A}$ from $(\llpar \psi \rrpar,d)$ on $\tT|_z$ for any $d\in\D$. 

\noindent\eqref{it:paths-correct:1} Suppose first that $(x,y) \in \dbracket{\alpha}^\tT$. Therefore, $\cl{H}_\alpha$ accepts $str(x,y)$ by Claim~\ref{cl:paths}. Further, suppose that $S \in \A_\eta$ is the  $(2 i)$-th element of $str(x,y)$, and that $z$ is the position of the $i$-th element in the path between $x$ and $y$ of the fcns coding.  For any $\psi \in S$ we have that  $z \in \dbracket{\psi}^\tT$, and applying the inductive hypothesis on \eqref{indhyp:1} we have that there is an accepting dvp-run from $(\llpar \psi \rrpar,d)$ on $\tT|_z$. Hence, by definition of the transition of $\llpar \alpha \rrpar^{=}_{0}$, we have that there is an accepting dvp-run from $(\llpar \alpha \rrpar^{=}_{0},d')$ on $\tT|_x$ if we take $d' = \dD(y)$. 

\noindent\eqref{it:paths-correct:1b} For any $z$ with $x \ancestorstr_{fcns} z \ancestor_{fcns} y$, if there is an accepting dvp-run from $(\llpar \psi \rrpar, d)$ on $\tT|_z$, by inductive hypothesis on \eqref{indhyp:2}, we have that $z \in \dbracket{\psi}^{\tT}$. Therefore, if there is an accepting dvp-run from $(\llpar \alpha \rrpar^=,d)$  on $\tT|_x$, by the definition of the transition relation there must be some $y \descendant_{fcns} x$ such that $str(x,y)$ is accepted by $\+H_\alpha$ and further $\dD(y)=d$. Hence, $(x,y) \in \dbracket{\alpha}^\tT$.

An identical reasoning applies to show \eqref{it:paths-correct:2} and \eqref{it:paths-correct:2b}.
%
\end{proof}
%
%

\noindent($\Rightarrow$) Suppose that $\tT|_x \models \varphi$.  We focus on the data test cases. If $\varphi = \tup{\alpha = \beta}$, then $\delta(\llpar \varphi \rrpar) = \opguess(\llpar \alpha, \beta \rrpar^=)$. Since $\delta(\llpar\alpha,\beta\rrpar^=) = \llpar\alpha\rrpar^=_{0} \land \llpar\beta\rrpar^=_{0}$, we then  have that for any data value $d$, there is an accepting run from $(\llpar \varphi \rrpar,d)$ on $\tT|_x$ if there exists a data value $d' \in \D$ such that $(\llpar\alpha\rrpar^=_{0},d')$ and $(\llpar\beta\rrpar^=_{0},d')$ have accepting dvp-runs. 
Since $\tT|_x \models \tup{\alpha = \beta}$, there must be two positions $y$ and $y'$ below $x$ such that $\dD(y)=\dD(y')$ where $(x,y) \in \dbracket{\alpha}^\tT$ and $(x,y') \in \dbracket{\beta}^\tT$. By Claim~\ref{cl:paths-correct}.\ref{it:paths-correct:1} there are accepting  dvp-runs from $(\llpar\alpha\rrpar^=_{0},d')$ and $(\llpar\beta\rrpar^=_{0},d')$ on $\tT|_x$. These two dvp-runs can be simply combined to build an accepting dvp-run from $(\llpar\varphi\rrpar,d)$. The case $\varphi = \tup{\alpha \neq \beta}$ is similar, using  Claims~\ref{cl:paths-correct}.\ref{it:paths-correct:1} and \ref{cl:paths-correct}.\ref{it:paths-correct:2}.

\smallskip

If $\varphi = \lnot \tup{\alpha = \beta}$, then $\tT|_x$ is such that for every pair of paths ending in some positions $y,y'$, if the paths satisfy $\alpha$ and $\beta$ respectively, then $\dD(y) \neq \dD(y')$. Since $\delta(\lnot \tup{\alpha = \beta}) = \llpar \alpha, \beta\rrpar^{\lnot =}$, we show that there is an accepting run from $(\llpar \alpha, \beta\rrpar^{\lnot =},d)$, for any $d \in \D$. 
Using the inductive hypothesis on all the nodes below $x$ (as we did in Claim~\ref{cl:paths-correct}), note that every time that
the automaton has an accepting  dvp-run  from $(\llpar \psi \rrpar, d)$ at $x$ for some $d \in \D$, then $x \in \dbracket{\psi}^\tT$. By definition of the transition relation on $\llpar \alpha, \beta\rrpar^{\lnot =}$, this means that a sufficient condition for the automaton to have an accepting dvp-run from $(\llpar \alpha, \beta\rrpar^{\lnot =},d)$ is: for every node $z \descendant_{fcns} x$, if the states of $\+H_\alpha, \+H_\beta$ are respectively $i,j$ after reading $str(x,z)$, then there is an accepting dvp-run from $(\opspread\big(\llpar \alpha \rrpar^{\lnot =}_{i} \lor \llpar \beta \rrpar^{\lnot =}_{j}\big),d)$ on $\tT|_z$. To simplify the argument, we can assume that the configuration contains \emph{all} the data values of the tree and therefore that $opspread$ takes into account all possible data values (this is not a necessary condition, but certainly a sufficient one to have an accepting dvp-run).
Also using the inductive hypothesis, there is an accepting dvp-run from $(\llpar \alpha \rrpar^{\lnot =}_{i},d')$ on $\tT|_z$ if for every node $z' \descendant_{fcns} z$ such that $\+H_\alpha$ takes the state $i$ to a final state after reading $str(z,z')$, we have $\dD(z') \neq d'$. Therefore, there is an accepting dvp-run from $\opspread\big(\llpar \alpha \rrpar^{\lnot =}_{i} \lor \llpar \beta \rrpar^{\lnot =}_{j}\big)$ if for every data value $d'$ and nodes $y,y'$ that complete the paths, it cannot be that $d'=\dD(y)=\dD(y')$. This is true, since $\tT|_x \models \lnot \tup{\alpha = \beta}$.  On the other hand, note that by the same reason, for every node $z'$ such that $(z,z') \in \dbracket{\beta}^{\tT}$,  there are accepting dvp-runs for $(\llpar \alpha \rrpar^{\lnot =}_{ i},\dD(z'))$ and for $(\opset( \llpar \alpha \rrpar^{\lnot =}_{ i} ),d')$ from $z'$ (and respectively swapping $\alpha$ and $\beta$).
Thus, by combining all the accepting dvp-runs from $(\opspread\big(\llpar \alpha \rrpar^{\lnot =}_{i} \lor \llpar \beta \rrpar^{\lnot =}_{j}\big),d)$ for every such $z$, we obtain an accepting dvp-run from $(\llpar \varphi \rrpar,d)$ on $\tT|_x$.
\todo{agregar una remark que diga que combinar dvp-runs resulta en un dvp-run claramente?}

\smallskip

Finally, if $\varphi = \lnot \tup{\alpha \neq \beta}$, then   $\delta(\llpar\varphi\rrpar) = \llpar \lnot \tup{\alpha}\rrpar \lor \llpar \lnot \tup{\beta} \rrpar \lor \opguess(\llpar\alpha,\beta\rrpar^{\lnot \not=}) $. Since $x \in \dbracket{\varphi}^\tT$, we have that either \eqref{noreachbyalpha}, \eqref{noreachbybeta}, or \eqref{dvstbla} (on page \pageref{dvstbla}) holds for $\tT|_x$. 
The first two conditions correspond to the properties for which $\lnot \tup{\alpha}$ and $\lnot \tup{\beta}$ are satisfied, which are equivalent to $\lnot \tup{\alpha = \alpha}$ and $\lnot \tup{\beta = \beta}$. By the case already shown, this means that $\llpar \lnot \tup{\alpha = \alpha} \rrpar$ or $\llpar \lnot \tup{\beta = \beta} \rrpar$ have accepting dvp-runs on $\tT|_x$ respectively. Since the definition of $\llpar \lnot \tup{\alpha = \alpha} \rrpar$ and $\llpar \lnot \tup{\alpha} \rrpar$ (idem with $\beta$) are treated identically by the automaton, $(\llpar \varphi \rrpar,d)$ has an accepting dvp-run on $\tT|_x$.

We now show that the third condition corresponds to having an accepting dvp-run from  $(\opguess(\llpar\alpha,\beta\rrpar^{\lnot \not\neq}),d)$  on $\tT|_x$ for any data value $d$. 
By definition of $\delta$, there is an accepting dvp-run if there is a data value $d'$ such that there are accepting dvp-runs from $(\llpar \alpha \rrpar^{\lnot \neq }_{0},d')$ and $(\llpar \beta \rrpar^{\lnot \neq }_{0},d')$. By applying the inductive hypothesis on all positions below $x$ and subformulas of $\nsubf(\eta)$, we have the following, for $\+D^\alpha_x = \set{\dD(y) \mid (x,y) \in \dbracket{\alpha}^\tT}$, $\+D^\beta_x = \set{\dD(y) \mid (x,y) \in \dbracket{\beta}^\tT}$.
\begin{claim} For any $e \not\in \+D^\alpha_x$, there is an accepting dvp-run from $(\llpar \alpha \rrpar^{\lnot =},e)$  on $\tT|_x$. If $\+D^\alpha_x= \set{e}$, there is an accepting dvp-run from $(\llpar \alpha \rrpar^{\lnot \neq},e)$  on $\tT|_x$.  Idem for $\beta$ and $\+D^\beta_x$.
\end{claim}
Note that since $\tT|_x \models \lnot \tup{\alpha \neq \beta}$, condition \eqref{noreachbyalpha} corresponds  to $\+D^\alpha_x = \emptyset$, condition  \eqref{noreachbybeta} to $\+D^\beta_x = \emptyset$, and condition \eqref{dvstbla} to $\+D^\alpha_x = \+D^\beta_x = \set e$. Therefore, if condition \eqref{dvstbla} holds, there are accepting  dvp-runs from $(\llpar \alpha \rrpar^{\lnot \neq},e)$ and $(\llpar \beta \rrpar^{\lnot \neq},e)$, and hence also from $(\opguess(\llpar\alpha,\beta\rrpar^{\lnot \not\neq}),d)$ for any $d$, by guessing the data value $e$. Thus, there is an accepting dvp-run from $(\llpar \varphi \rrpar,d)$ if conditions \eqref{noreachbyalpha}, \eqref{noreachbybeta} or \eqref{dvstbla} hold. Therefore, there is an accepting dvp-run from $(\llpar \varphi \rrpar,d)$ on $\tT|_x$.

\medskip

\noindent($\Leftarrow$) Consider an accepting dvp-run $\+S_1 \atraT \dotsb \atraT \+S_n$ from $(\llpar \varphi \rrpar,d)$  on $\tT|_x$. We show that $\tT|_x \models \varphi$.

If $\varphi = \tup{\alpha = \beta}$, by definition of $\delta$ the automaton guesses a data value $d$ and creates a thread $(\llpar  \alpha, \beta  \rrpar^=, d)$, which means that there is an accepting dvp-run from $(\llpar  \alpha, \beta  \rrpar^=, d)$. Then, there are accepting dvp-runs from $(\llpar  \alpha \rrpar^=, d)$ and $(\llpar  \beta \rrpar^=, d)$. By Claim~\ref{cl:paths-correct}.\ref{it:paths-correct:1}, there must be two nodes $y,z$ such that $(x,y) \in \dbracket{\alpha}$ and $(x,z) \in \dbracket{\beta}$ with $d=\dD(z) = \dD(y)$. Hence, $\tT|_x \models \varphi$.  The case $\varphi = \tup{\alpha \neq \beta}$ is treated in a similar way.

If $\varphi = \lnot \tup{\alpha = \beta}$, suppose ad absurdum that there is a data value $d$ and two positions $y,z$ such that $(x,y) \in \dbracket{\alpha}^\tT$ and $(x,z) \in \dbracket{\beta}^\tT$, and $\dD(y)=\dD(z)=d$. Let $z'$ be the closest common ancestor of $y,z$ in the fcns coding. 
Since there is an accepting dvp-run from $\llpar \alpha, \beta \rrpar^{\lnot =}$, there must be some states $i', j'$ assumed by $\+H_\alpha$ and $\+H_\beta$ after reading $str(x,z')$, such that the following holds.
\begin{claim}\label{cl:test-disjoint}
There is a moving configuration $\+S_r$ containing $\llpar \alpha, \beta \rrpar ^{\lnot = }_{i',j'}$ with position $z'$ such that
  \begin{iteMize}{$\bullet$}
  \item if $z'=y$, then $( \llpar \alpha \rrpar^{\lnot =}_{i'} ,\dD(y) )$ is in some $\+S_s$, $s>r$, with position $z'$,
  \item if $z'=z$, then $( \llpar \beta \rrpar^{\lnot =}_{j'} ,\dD(z) )$ is in some $\+S_s$, $s>r$, with position $z'$,
  \item for every data value $d$ of $\+S_r$, $(\llpar \alpha \rrpar^{\lnot =},d)_{i'}$ or $(\llpar \beta \rrpar^{\lnot =},d)_{j'}$ are in some $\+S_s$, $s>r$, with position $z'$.
  \end{iteMize}
\end{claim}

\noindent Hence, by Claim~\ref{cl:test-disjoint} there must be a configuration $\+S_r$ with position $z'$, where $z'$ is the common ancestor of $y$ and $z$ in the fcns coding, such that for all data values $d'$ of $\+S_r$ it is not true that $\dD(x)=\dD(y)=d'$. But by the fact that this is a dvp-run, this means that there cannot be \emph{any} data value $d''$ such that $\dD(y)=\dD(z)=d''$. In particular, $d$, and thus we have a contradiction. 
Then, $\tT|_x \models \varphi$.
The cases of $y \ancestor_{fcns} z$ or $z \ancestor_{fcns} y$ are only easier.

Finally, if $\varphi = \lnot\tup{\alpha \neq \beta}$, then $\cl{A}$ must have an accepting dvp-run on $\tT|_x$ from $(\llpar \lnot \tup{\alpha} \rrpar, d)$, $(\llpar \lnot \tup{\beta} \rrpar,d)$, or $(\llpar \alpha, \beta \rrpar^{\lnot \neq}_{0,0},d)$ for some $d \in \D$. In the first two cases this means that $\dbracket{\tup\alpha}^{\tT|_x} = \emptyset$ or $\dbracket{\tup\beta}^{\tT|_x} = \emptyset$ (by reduction to the case $\lnot \tup{\alpha = \alpha}$ already treated) and hence that $\tT \models \varphi$. In the latter case, by definition of $\delta$, all positions $y$ whose paths from $x$ satisfy $\alpha$ or $\beta$ are such that $\dD(y)=d$. This implies that $\tT \not\models \tup{\alpha \neq \beta}$, and hence $\tT \models \varphi$.
\end{proof}

Lemma \ref{lem:xp->atra} together with Proposition~\ref{prop:normal-form-disjoint-values-properties} concludes the proof of Proposition~\ref{prop:rxpathRedAtra}. 
We then have that Theorem~\ref{thm:fxpath-dec} holds.
\begin{proof}[Proof of Theorem~\ref{thm:fxpath-dec}]
By Proposition~\ref{prop:rxpathRedAtra}, satisfiability of $\rxpath(\mathfrak F,=)$ is reducible to the nonemptiness problem for $\atra(\opguess,\opspread)$. On the other hand, we remark that $\atra(\opguess,\opspread)$ automata can encode any regular tree language ---in particular a DTD, the core of XML Schema, or Relax NG--- and are closed under intersection by Proposition~\ref{prop:atra-closed}. Also, the logic can express any unary key constraint as stated in Lemma~\ref{lem:primary-key}. Hence, by Theorem~\ref{thm:atrags-decidable} the decidability follows.
\end{proof}

\section*{Extensions}

We consider some operators that can be added to $\rxpath(\mathfrak F, =)$ preserving the decidability of the satisfiability problem. For each of these operators, we will see that they can be coded as a $\atra(\opguess,\opspread)$ automaton, following the same lines of the translation in Section~\ref{sec:allow-arbitr-data}.

\subsection{Allowing upward axes}
\label{sec:allowing-upward-axes}
\newcommand{\rxpFB}{\mbox{$\rxpath^{\!\mathfrak B}(\mathfrak F,=)$}\xspace}
Here we explore one possible decidable extension to the logic $\rxpath(\mathfrak F,=)$, whose decidability can be reduced to that of $\atra(\opguess,\opspread)$. In this extension we can make use of \emph{upward} ($\uparrow$) and \emph{leftward} ($\leftarrow$) axes to navigate the tree in a restricted way.  We can test, for example, that all the ancestors of a given node labeled with $a$ have the same data value as all the descendants labeled with $b$, with the formula $\lnot\tup{{{\uparrow^*}[a]} \neq {\td [b]}}$, but we cannot test its negation.

Let \rxpFB be the fragment of $\rxpath(\mathfrak F \cup \mathfrak B,=)$ where $\mathfrak B \coloneqq  \set{\uparrow,\uparrow^*, \leftarrow,\tl}$ defined by the grammar
\begin{multline*}
  \varphi, \psi \Coloneqq \lnot
  \texttt{a} \,\mid\, \texttt{a} \,\mid\,  \varphi \land \psi \,\mid\, \varphi
  \lor \psi \,\mid\, \tup{\alpha_{\mathfrak f}} \,\mid\, \tup{\alpha_{\mathfrak b}} \,\mid\,
\\
\tup{\alpha_{\mathfrak f}\circledast\beta_{\mathfrak f}} \,\mid\, \lnot \tup{\alpha_{\mathfrak f} \circledast \beta_{\mathfrak f}}
  \,\mid\, \lnot \tup{\alpha_{\mathfrak b} = \beta_{\mathfrak f}} \,\mid\, \lnot
  \tup{\alpha_{\mathfrak b} \not= \beta_{\mathfrak f}}
\end{multline*}
with $\circledast \in \set{=,\neq}, \texttt{a} \in \A$, and
\begin{align*}
  \alpha_{\mathfrak f},\beta_{\mathfrak f} &\Coloneqq [\varphi] \,\mid\, \alpha_{\mathfrak f} \beta_{\mathfrak f} \,\mid\, \alpha_{\mathfrak f} \cup \beta_{\mathfrak f} \,\mid\, o\, \alpha_{\mathfrak f} \,\mid\, (\alpha_{\mathfrak f})^* && o \in \set{\down,\rightarrow,\eps},\\
  \alpha_{\mathfrak b},\beta_{\mathfrak b} &\Coloneqq [\varphi] \,\mid\, \alpha_{\mathfrak b} \beta_{\mathfrak b} \,\mid\, \alpha_{\mathfrak b} \cup \beta_{\mathfrak b} \,\mid\, o\, \alpha_{\mathfrak b} \,\mid\, (\alpha_{\mathfrak b})^*  && o \in \set{\uparrow,\leftarrow,\eps}.
\end{align*}
We must note that \rxpFB contains $\rxpath(\mathfrak F, \mathfrak B)$, that is, the full data-unaware fragment of \xpath. We also remark that it is \emph{not} closed under negation. Indeed, we cannot express the negation of ``there exists an $a$ such that all its ancestors labeled $b$ have different data value'' which is expressed by ${\downarrow^*}[a \land \lnot \tup{{\uparrow^*}[b]=\varepsilon}]$.
As shown in Proposition~\ref{prop:guess-spread-more-expressive}, if the negation of this property were expressible, then its satisfiability would be undecidable. It is not hard to see that we can decide the satisfiability problem for this fragment.

Consider  the data test expressions of the types
\[
  \lnot \tup{\alpha_{\mathfrak b} = \beta_{\mathfrak f}} \qquad \text{and} \qquad
  \lnot \tup{\alpha_{\mathfrak b} \not= \beta_{\mathfrak f}}
\]
where $\beta_{\mathfrak f} \in \rxpath(\mathfrak F,=)$ and $\alpha_{\mathfrak b} \in \rxpath(\mathfrak B)$. We can decide the satisfaction of these kinds of expressions by means of  $\opspread(~,~)$, using carefully its first parameter to select the desired threads from which to collect the data values we are interested in. Intuitively, along the dvp-run we throw threads that save the current data value and try out all possible ways to verify $\alpha_{\mathfrak b}^r \in \rxpath(\mathfrak F,=)$, where $(~)^r$ stands for the \emph{reverse} of the regular expression. Let the automaton arrive with a thread $(\llpar \alpha_{\mathfrak b} \rrpar, d)$ whenever  $\alpha_{\mathfrak b}^r$ is verified. This signals that there is a backwards path from the current node in the relation $\alpha_{\mathfrak b}$ that arrives at a node with data value $d$. Hence, at any given position, the instruction $\opspread(\llpar \alpha_{\mathfrak b} \rrpar, \llpar \alpha_{\mathfrak f}\rrpar^{\lnot \circledast})$ translates correctly the expression $\lnot \tup{\alpha_{\mathfrak b} \circledast \beta_{\mathfrak f}}$. 
Furthermore, $\alpha_{\mathfrak b}$ need not be necessarily in $\rxpath(\mathfrak B)$, as its intermediate node tests can be formul{\ae} from $\rxpath(\mathfrak F,=)$. We then obtain the following result.

\begin{remark}
  Satisfiability for \rxpFB under key constraints and DTDs is decidable.
\end{remark}

\subsection{Allowing stronger data tests}
\label{sec:allow-strong-data}

\newcommand{\dcbracket}[1]{\{\!\!\{#1\}\!\!\}}
Consider the property ``\textsl{there are three descendant nodes labeled $a$, $b$ and $c$ with the same data value}''. That is, there exists some data value $d$ such that there are three nodes accessible by $\td[a]$, $\td[b]$ and $\td[c]$ respectively, all carrying the datum $d$. Let us denote the fact that they have the same or different datum by introducing the symbols `$\sim$' and `$\not\sim$', and appending it at the end of the path. Then in this case we write that the elements must satisfy $\td[\texttt{a}] {\sim}$, $\td[\texttt{b}] {\sim}$, and $\td[\texttt{c}] {\sim}$. We then introduce the node expression
$\dcbracket{\alpha_1s_1, \ldots, \alpha_ns_n}$ where $\alpha_i$ is a path expression and $s_i \in \set{ \sim, \not\sim}$ for all $i \in [1..n]$. Semantically, it is a node expression that denotes all the tree positions $x$ from which we can access $n$ positions $x_1, \dotsc, x_n$ such that there exists $d \in \D$ where for all $i \in [n]$ the following holds:  $(x,x_i) \in \dbracket{\alpha_i}$; if $s_i = {\sim}$ then $\dD(x_i)=d$; and if $s_i = {\not\sim}$ then $\dD(x_i)\not=d$. Note that now we can express $\tup{\alpha = \beta}$ as $\dcbracket{\alpha {\sim}, \beta {\sim}}$ and $\tup{\alpha \not= \beta}$ as $\dcbracket{\alpha {\sim}, \beta {\not\sim}}$. Let us call $\rxpath^+\!(\mathfrak F,=)$ to $\rxpath(\mathfrak F,=)$ extended with the construction just explained. This is a more expressive formalism since the first mentioned property ---or, to give another example, $\dcbracket{\td[\texttt{a}]{\sim}, \td[\texttt{b}]{\sim}, \td[\texttt{a}]{\not\sim}, \td[\texttt{b}]{\not\sim}}$--- is not expressible in $\rxpath(\mathfrak F,=)$.

We argue that satisfiability for this extension can be decided in the same way as for $\rxpath(\mathfrak F,=)$. It is straightforward to see that \emph{positive} appearances can easily be translated with the help of the \opguess operator. On the other hand, for negative appearances, like $\lnot \dcbracket{\alpha_1s_1, \dotsc, \alpha_ns_n}$, we proceed in the same way as we did for $\rxpath(\mathfrak F,=)$. The only difference being that in this case the automaton will simulate the simultaneous evaluation of the $n$ expressions and calculate all possible configurations of the closest common ancestors of the endpoints, performing a \opspread at each of these intermediate points.
\begin{remark}
  Satisfiability of $\rxpath^+\!(\mathfrak F,=)$ under key constraints  and DTDs is decidable.
\end{remark}

\section{Concluding remarks}
\label{sec:concluding-remarks}

We presented a simplified framework to  work with one-way alternating register automata on data words and trees, enabling the possibility to easily show decidability of new operators by proving that they preserve the downward compatibility of a well-structured transition system. It would be interesting to hence investigate more decidable extensions, to study the expressiveness limits of decidable logics and automata for data trees.

Also, this work argues in favor of exploring computational models that although they might be not closed under all boolean operations, can serve to show decidability of logics closed under negation ---such as forward-\xpath--- or expressive natural extensions of existing logics 
 ---such as \ltlurxae.

We finally mention that even though $\xpath(\down,\td,\rightarrow,\tr,=)$ (\ie, forward-\xpath) has a non-primitive-recursive complexity, the results of \cite{Fig11} suggest that it seems plausible that $\xpath(\down,\td,\tr,=)$ or even $\xpath(\down,\td,\tr,\tl,=)$ are decidable in elementary time (see~\cite[Conjecture 1]{Fig11}).

\bibliographystyle{plain}

\end{document}